\def\imod#1{\allowbreak\mkern10mu({\operator@font mod}\,\,#1)}
\newcommand\node[1]{*+[o]{#1}}
\newcommand\addLabelUL[1]{\ar@{}[]+UR|(1){~\makebox[0pt][l]{$\mathbf{#1}$}}}
\newcommand\addLabelUR[1]{\ar@{}[]+UR|(1){~\makebox[0pt][l]{$\mathbf{#1}$}}}
\newcommand\addLabelDL[1]{\ar@{}[]+UR|(1){~\makebox[0pt][l]{$\mathbf{#1}$}}}
\newcommand\addLabelDR[1]{\ar@{}[]+UR|(1){~\makebox[0pt][l]{$\mathbf{#1}$}}}
\newcommand\addDMD[2]{
	\ar@{-}[]+<#1pt,0pt>;[]+<0pt,#2pt>
	\ar@{-}[]+<0pt,#2pt>;[]+<-#1pt,0pt>
	\ar@{-}[]+<-#1pt,0pt>;[]+<0pt,-#2pt>
	\ar@{-}[]+<0pt,-#2pt>;[]+<#1pt,0pt>
}
\newcommand{\BM}{\mathbf{BM}}
\newcommand{\MSO}{\textnormal{MSO}}
\newcommand{\FO}{\textnormal{FO}}
\colorlet{ShapeLight}{gray!60}
\colorlet{ShapeDark}{gray!70!black!70}
\newcommand{\shape}[1]{
\fill[#1]
	(0.2,1.2) --
	(0.7,3) --
	(1, 3) .. controls ++(0.5,-0.3) and ++(-0.5,-0.3) .. (2,3) --
	(3.5, 3) .. controls ++(0.5,-0.3) and ++(-0.5,-0.3) .. (5,3) --
	(5.5, 3) --
	(5.7, 1.5) --
	(5.7, 0.4) --
	(5.3,0) --
	(5, 0) .. controls ++(-0.5, +0.6) and ++(0.5,0.6) .. (4,0.1) --
	(4, 0.1) .. controls ++(-0.5, +0.6) and ++(0.5,0.6) .. (2.5,0) --
	(0.7,0) --
	(0.2,0.5);}
\newcommand{\boldclass}[3]{\ensuremath{\mathbf{#1}^{#2}_{#3}}}
\newcommand{\asigma}[1]{\boldclass{\Sigma}{1}{#1}}
\newcommand{\adelta}[1]{\boldclass{\Delta}{1}{#1}}
\newcommand{\U}{\mathcal{U}}
\newcommand{\qU}{\ensuremath{\mathsf U}\xspace}
\newcommand{\qM}{\ensuremath{\forall^{\ast}}\xspace}
\newcommand{\qN}{\ensuremath{\forall^{{=}1}}\xspace}
\newcommand{\qMpi}{\ensuremath{\forall^{\ast}_{\pi}}\xspace}
\newcommand{\qNpi}{\ensuremath{\forall^{{=}1}_{\pi}}\xspace}
\newcommand{\SoneS}{\ensuremath{S1S}\xspace}
\newcommand{\SoneSN}{\ensuremath{S1S+\qN}\xspace}
\newcommand{\SoneSM}{\ensuremath{S1S+\qM}\xspace}
\newcommand{\SoneSU}{\ensuremath{S1S+\qU}\xspace}
\newcommand{\StwoS}{\ensuremath{S2S}\xspace}
\newcommand{\StwoSN}{\ensuremath{S2S+\forall^{=1}}\xspace}
\newcommand{\StwoSM}{\ensuremath{S2S+\forall^{\ast}}\xspace}
\newcommand{\w}{\omega}
\newcommand{\Gdelta}{G_{\delta}}
\newcommand{\Fsigma}{F_{\sigma}}
\newcommand{\eqdef}{\stackrel{\text{def}}=}
\newcommand{\fun}[3]{\ensuremath{#1\colon #2 \to #3}}
\newcommand{\parfun}[3]{\ensuremath{#1\colon #2 \rightharpoonup #3}}
\newcommand{\Inf}{\mathrm{inf}\xspace}
\newcommand{\lang}{\mathrm{L}\xspace}
\newcommand{\init}{\mathrm{I}}
\newcommand{\xOper}{\mathbin{\chi}}
\DeclareMathOperator*{\XOper}{\scalerel*{\chi}{\sum}}
\newcommand{\yOper}{\mathbin{\psi}}
\DeclareMathOperator*{\YOper}{\scalerel*{\Psi}{\sum}}
\newcommand{\dom}{\mathrm{dom}\xspace}
\newcommand{\dL}{\mathtt{\scriptstyle L}}
\newcommand{\dR}{\mathtt{\scriptstyle R}}
\newcommand{\defPlayer}[1]{\ensuremath{\boldsymbol{#1}}\xspace}
\newcommand{\eve}{\defPlayer{\exists}}
\newcommand{\adam}{\defPlayer{\forall}}
\newcommand{\succL}{\textnormal{succ}_L}
\newcommand{\succR}{\textnormal{succ}_R}
\newcommand{\powerset}{\mathcal{P}}
\newcommand{\tikzEvalInt}[2]{\pgfmathparse{int(#2)}{\global\edef#1{\pgfmathresult}}}
\tikzstyle{flow}  = [inner sep=0cm, node distance=0cm and 0cm]
\tikzstyle{toL} = [anchor=mid east, flow]
\tikzstyle{toR} = [anchor=mid west, flow]
\tikzstyle{toC} = [align=center, anchor=mid, flow]
\tikzstyle{ubrace} = [draw, thick, decoration={brace, mirror, raise=0.0cm}, decorate, every node/.style={anchor=north, yshift=-0.1cm}]
\newcommand{\trees}[1]{{\mathcal T}_{#1}}
\newcommand{\mathcalsym}[1]{\ensuremath{\mathcal{#1}}\xspace}
\newcommand{\ignore}[1]{}
\newcommand{\Aa}{{\mathcal A}}
\newcommand{\Bb}{\mathcal{B}}
\newcommand{\Ff}{\mathcalsym{F}}
\newcommand{\SuccL}{\texttt{Succ}_L}
\newcommand*{\diameter}{0.2cm}
\newcommand*{\smalldiameter}{0.15cm}
\newcommand{\la}{\langle}
\newcommand{\ra}{\rangle}
\tikzset{ 
    table/.style={
        matrix of nodes,
        row sep=-\pgflinewidth,
        column sep=-\pgflinewidth,
        nodes={
            rectangle,
            draw=black,
            align=center
        },
        minimum height=0.7em,
        text depth=0.5ex,
        text height=2ex,
        nodes in empty cells,
        column 1/.style={
            nodes={text width=5ex,font=\bfseries}
        },
        row 1/.style={
            nodes={
                fill=gray!20,
                text=black,
                font=\bfseries
            }
        }
    }
}
\begin{document}

\begin{abstract}
We investigate the extension of Monadic Second Order logic, interpreted over infinite words and trees, with generalized ``for almost all'' quantifiers interpreted using the notions of Baire category and Lebesgue measure. 
\end{abstract}

\author[M.~Mio]{Matteo Mio}	
\address{CNRS and \'{E}cole Normale Sup\'{e}rieure de Lyon}	
\email{matteo.mio@ens-lyon.fr}  
\thanks{All three authors were supported by the Polish National Science Centre grant no. 2014-13/B/ST6/03595. The work of the second author has also been supported by the project ANR-16-CE25-0011 REPAS.}	

\author[M.~Skrzypczak]{Micha{\l} Skrzypczak}	
\address{Institute of Informatics, University of Warsaw}	
\email{m.skrzypczak@mimuw.edu.pl}

\author[H.~Michalewski]{Henryk Michalewski}	
\address{Institute of Mathematics, Polish Academy of Sciences}	
\email{h.michalewski@mimuw.edu.pl}  

\title{Monadic Second Order Logic with Measure and Category Quantifiers}

\maketitle              

\section{Introduction}
Monadic Second Order logic (\MSO) is an extension of first order logic (\FO) allowing quantification over subsets of the domain ($\forall X.\,\phi$) and including a membership predicate between elements of the domain and its subsets ($x \in  X$). Two fundamental results about \MSO, due to B{\"u}chi and Rabin respectively, state that the \MSO\  theories of $(\mathbb{N},<)$ and of the \emph{full binary tree} (see Section~\ref{section_mso_1} for precise definitions) are decidable.

A different kind of extension of first order logic which has been studied in the literature, starting from the seminal works of Mostowski~\cite{Mostowski1957}, is by means of \emph{generalized quantifiers}. A typical example is the ``there exists infinitely many'' quantifier:
\[
\mathcal{M}\models \exists^\infty x.\, \phi \ \Longleftrightarrow \text{ the set $\{m\in M \mid \phi(m)\}$ is infinite.} 
\]
The formula $\neg (\exists^\infty x.\, (x=x))$ holds only on finite models and is therefore not expressible in first order logic. Hence, in general, $\FO \subsetneq  \FO +\exists^\infty$. Note however that on some fixed structures the logic $\FO$ and $\FO+\exists^\infty$ may have the same expressive power. This is the case, for example, for the structure $(\mathbb{N},{<})$.

It is natural to try to merge the two approaches and consider extensions of \MSO\ with generalized quantifiers. In this context, one can consider generalized first order quantifiers as well as generalized second order quantifiers. For example, the second order generalized quantifier ($\exists^{\infty}X.\phi$) has the following meaning:

\[
\mathcal{M}\models \exists^\infty X.\, \phi \ \Longleftrightarrow \text{the set $\{A\subseteq M \mid \phi(A)\}$ is infinite.} 
\]

The topic of  $\MSO$ with generalized second order quantifiers is not new. In particular, Boja{\'n}czyk
in 2004 (\cite{bojanczyk2004}) has initiated an investigation of the \emph{unbounding quantifier} ($\qU X.\,\phi$) 
\[
\mathcal{M}\models \qU X.\, \phi \ \Longleftrightarrow  \forall n\in\mathbb{N}. \, \exists A\subseteq M. \,  \big(\text{$A$ is finite, $|A|\geq n$, $\phi(A)$}\big).
\]
It was shown recently that the $\MSO+\qU$ theory of $(\mathbb{N},{<})$ is undecidable~\cite{bpt2016}.

B\'{a}r\'{a}ny, Kaiser and Rabinovich have investigated in~\cite{KRB2008} the second order quantifier $\exists^\infty$ mentioned above and also the cardinality quantifier $\exists^{\mathfrak{c}}$ defined as:
\[
\mathcal{M}\models \exists^{\mathfrak{c}}X.\, \phi \Longleftrightarrow \text{the set $\{A\subseteq M \mid \phi(A) \}$ has cardinality $\mathfrak{c}=2^{\aleph_0}$.} 
\]
Interestingly it was shown that, on the structures $(\mathbb{N},{<})$ and the full binary tree, the quantifiers $\exists^\infty$ and $\exists^{\mathfrak{c}}$ can be eliminated. That is, to each formula of $\MSO +\exists^{\mathfrak{c}}+\exists^\infty$ corresponds (effectively) a semantically equivalent formula of $\MSO$.

In a series of manuscripts in the late 70's (see~\cite{steinhorn} for an overview) H.~Friedman investigated  two kinds of first-order generalized quantifiers called (Baire) \emph{Category quantifier} $(\qM)$ and (full probability) \emph{Measure quantifier} ($\qN$), respectively. 

The goal of this article is to collect recent results (from~\cite{MM2015},~\cite{MM2015b} and~\cite{MM2016LFCS}) regarding extensions of $\MSO$ with second order Category and Measure quantifiers. The meaning of such second order quantifiers is given as follows: 
\[
\mathcal{M}\models \qM X.\, \phi \Longleftrightarrow \text{the set $\{A\subseteq M \mid \phi(A) \}$ is comeager.}
\]
and
\[
\mathcal{M}\models \qN X.\, \phi \Longleftrightarrow \text{the set $\{A\subseteq M \mid \phi(A) \}$ has Lebesgue measure $1$.}
\]

We will exclusively be interested in the properties of $\MSO+\qM$ and $\MSO+\qN$ interpreted over the two fundamental structures $(\mathbb{N},{<})$ and the full binary tree. Therefore, the domain of $\mathcal{M}$ is always countable and the space of subsets $A\subseteq M$ is a copy of the Cantor space $2^\w$ where the notions of (co)meager sets and Lebesgue (i.e., coin-flipping) measure are well defined. See Section~\ref{sec:tech_back} for precise definitions. 

\paragraph*{\bf Results regarding $\SoneS$.} By denoting with $\SoneS$ the $\MSO$ theory of $(\mathbb{N},{<})$ we prove (Theorem~\ref{thm:sonesMelim}) that the Category quantifier ($\qM$) can be (effectively) eliminated. Therefore $\SoneSM= \SoneS$ and its theory is decidable. On the other hand, the addition of the Measure quantifier ($\qN$) leads to an undecidable theory (Theorem~\ref{undec:s1s:measure}). This was proved in~\cite{MM2015} using recent results from the theory of probabilistic automata. In this article we give a different proof of undecidability of $\SoneSN$ based on an interpretation of Boja{\'n}czyk's undecidable theory $\SoneSU$.

\paragraph*{\bf Results regarding $\StwoS$.} By denoting with $\StwoS$ the $\MSO$ theory of the full binary tree, the theory of $\SoneSN$ can be interpreted within $\StwoSN$. As a consequence $\StwoSN$ has an undecidable theory (Theorem~\ref{thm:msoqn:undecidable}).

In~\cite{MM2015} it was claimed that the theory $\StwoSM$ is effectively equivalent to $\StwoS$. The proof relied on an automata-based quantifier elimination procedure. Unfortunately, the proof contains a crucial mistake which we do not know how to fix. The mistake is caused by a wrong citation in Proposition~1: although the automaton $\Aa^{\exists}$ recognizes the language $\exists X.\,\phi(X,\vv{Y})$, this may not be the case for the automaton $\Aa^{\forall}$ and the language $\forall X.\,\phi(X,\vv{Y})$. Because of that, the automaton $\Aa_\psi$ constructed in Definition~6 does not fairly simulate the Banach--Mazur game in the moves of the universal player \adam.

The proof technique adopted in~\cite{MM2015}, while faulty in the general case, leads to the following weaker result (Theorem~\ref{thm:game-meager}): if $\phi(X,\vec{Y})$ is a $\StwoS$ formula (with parameters $X$, $\vec{Y}$) definable by a \emph{game automaton}, then the quantifier $\qM X.\, \phi(X,\vec{Y})$ can be effectively eliminated. That is, there (effectively) exists a $\StwoS$ formula $\psi(\vec{Y})$ equivalent to $\qM X.\, \phi(X,\vec{Y})$. Comparing to the proof presented in~\cite{MM2015}, the proof in this paper is shorter and more precise. 


The question of whether the full logic $\StwoSM$ has a decidable theory remains open from this work. See the list of open problems in Section~\ref{sec:problems}.

\paragraph*{\bf Results regarding $\StwoS$ with generalized path quantifiers.} Another interesting way to extend $\StwoS$ with variants of Friedman's Category and Measure quantifiers is to restrict the quantification to range over infinite branches (paths) of the full binary tree. This leads to the definition of the path-Category ($\qMpi X.\, \phi$) and the path-Measure ($\qNpi X.\, \phi$) second-order quantifiers.

A path $\pi$ in the full binary tree is an infinite set of vertices which can be uniquely identified with an infinite set of direction $\pi\in \{\dL,\dR\}^\w$. By denoting with $\mathcal{P}$ the collection $\{\dL,\dR\}^\w$ we define:
\[
 \qMpi X.\, \phi \Longleftrightarrow \text{$\{\pi\in\mathcal{P} \mid \phi(\pi)\}$ is comeager as a subset of $\mathcal{P}$}
\]
and
\[
 \qNpi X.\, \phi \Longleftrightarrow \text{$\{\pi\in\mathcal{P} \mid \phi(\pi)\}$ has measure $1$ as a subset of $\mathcal{P}$}
\]


We show that the path-Category quantifier can be (effectively) eliminated so that $\StwoS+\qMpi\equiv \StwoS$. On the other hand, we show that $\StwoS+\qNpi \supsetneq \StwoS$ is a strict extension of $\StwoS$.  The question of whether $\StwoS+\qNpi$ has a decidable theory remains open from this work. See the list of open problems in Section~\ref{sec:problems}.


\paragraph*{\bf Related Work.} A significant amount of work has been devoted to the study of properties of regular sets of words and infinite trees with respect to Baire Category and probability. For example, a remarkable result of Staiger (\cite[Theorem 4]{staiger}) states that  a regular set $L\!\subseteq\!\Sigma^\omega$ of $\omega$-words is comeager 
if and only if $\mu(L)\!=\!1$, where $\mu$ is the probability (coin flipping) measure on $\omega$-words. Furthermore the probability $\mu(L)$ can be easily computed by an algorithm based on linear programming and it is always a rational number (see, e.g., Theorem 2 of~\cite{Chat2004}). Staiger's theorem was rediscovered by Varacca and V{\"o}lzer and has been used to develop a theory of fairness for reactive systems~\cite{Var2012}.

In another direction, Gr{\"a}edel have investigated determinacy, definability, and complexity issues of Banach--Mazur games played on graphs~\cite{Graedel2008}. Our results regarding $\SoneSM$ are closely related to the results of Gr{\"a}edel.

The study of Category and Measure properties of regular sets of infinite trees happens to be more complicated. Indeed, due to their high topological complexity, it is not even clear if such sets are Baire-measurable and Lebesgue measurable. These properties were established only recently (see~\cite{hjorth} for Baire measurability and~\cite{GMMS2014} for Lebesgue measurability). Furthermore, it has been shown in~\cite{MM2015b} that the equivalent of Staiger's theorem fails for infinite trees. That is, there exist comeager regular sets of trees having probability $0$.

The problem of computing the probability of a regular set of infinite trees is still open in the literature. An algorithm for computing the (generally irrational) probability of regular sets definable by game automata is presented in~\cite{MM2015b} but a solution for the general case is still unknown (and the problem could be undecidable). The result of Theorem~\ref{thm:game-meager} proved in this article imply that it is possible to determine the Baire category of regular sets definable by game automata. A solution to the general case is unknown.

All the results mentioned above contribute to the theory of regular sets of $\omega$-words ($\SoneS$ definable) and infinite trees ($\StwoS$ definable). The novelty of the approach followed in this research is instead to augment the base logic ($\SoneS$ or $\StwoS$) with generalized quantifiers capable of expressing properties related to Baire category and probability, and to study such logical extensions.

Interestingly, a different but related approach has been explored by Carayol, Haddad and Serre in~\cite{CHS2014} (see also~\cite{carayol_serre,carayol_counting} and the habilitation thesis~\cite{serre2015}).
Rather than extending the base logic, it is the base (Rabin) tree automata model which is modified. The usual acceptance condition ``there exists a run such that \emph{all} paths are accepting'' is replaced by the weaker ``there exists a run such that \emph{almost all} paths are accepting``. When \emph{almost all} is interpreted as ``for a set of measure $1$'', this results in the notion of qualitative automata of~\cite{CHS2014}. 
The connections between this automata-oriented and our logic-oriented approach is further discussed in Section~\ref{sec:s2s-measure-pi}.

\emph{Addendum}: after the submission of this article, we have been informed by M.~Boja{\'n}czyk that a proof of the decidability of the theory of $\mathrm{weak}\StwoS+\qNpi$, the fragment of $\StwoS+\qNpi$ where second-order quantification is restricted to finite sets, can be obtained from the results of~\cite{bojanczyk_thin_mso,bojanczyk_subzero}.

\section{Technical Background}
\label{sec:tech_back}

In this section we give the basic definitions regarding descriptive set theory, probability measures and tree automata that are needed in this work.

The set of natural numbers and their standard total order are denoted by the symbols $\w$ and ${<}$, respectively. Given sets $X$ and $Y$ we denote with $X^Y$ the space of functions $X\to Y$. We can view elements of $X^Y$ as $Y$-indexed sequences $\{x_i\}_{i\in Y}$ of elements of $X$.  
Given a finite set $\Sigma$, we refer to $\Sigma^\w$ as the collection of \emph{$\omega$-words over $\Sigma$}. The 
 collection of \emph{finite} sequences of elements in $\Sigma$ is denoted by $\Sigma^\ast$.  As usual we denote with $\epsilon$ the empty sequence and with $ww'$ the concatenation of $w,w'\in\Sigma^\ast$. The prefix order on sequences is denoted ${\preceq}$ i.e., $u\preceq w$ if $u$ is a prefix of $w$.
 
We identify finite sequences $\{\dL,\dR\}^\ast$ as vertices of the infinite full binary tree so that $\epsilon$ is the root and each vertex $w\in \{\dL,\dR\}^\ast$ has $w\dL$ and $w\dR$ as children. We denote by $\trees{\Sigma}$ the set $\Sigma^{\{\dL,\dR\}^\ast}$ of functions labelling each vertex with an element in $\Sigma$. Elements $t\in\trees{\Sigma}$ are called \emph{trees over $\Sigma$}.

\subsection{Games}
\label{sec:games}

In this paper we work with games of perfect information and infinite duration commonly referred to as Gale--Stewart games. In what follows we provide the basic definitions. We refer to~\cite[Chapter~20.A]{Kechris} for a detailed overview.

Gale--Stewart games are played between two players: Player~\eve and Player~\adam. A game is given by a set of configurations $V$; an initial configuration $v_0\in V$; a winning condition $W\subseteq V^\omega$; and a definition of a \emph{round}. A \emph{round} starts in a configuration $v_n$; then players perform a~fixed sequence of choices; and depending on these choices the round ends in a new configuration $v_{n+1}$. Each game specifies what are the exact choices available to the players and how the successive configuration $v_{n+1}$ depends on the decisions made.


A \emph{play} of such a game is an infinite sequence of configurations $v_0,v_1,\ldots$ where the configuration $v_{n+1}$ is a possible outcome of the $n$-th round that starts in a configuration $v_n$. A play is winning for Player~\eve if it satisfies the winning condition (i.e.~$(v_0,v_1,\ldots)\in W$); otherwise the play is winning for Player~\adam. A \emph{strategy} of a player is a function that assigns to each finite sequence of configurations $(v_0,v_1,\ldots,v_n)$ a way of resolving the choices of that player in the consecutive $n$-th round that starts in $v_n$. A play is \emph{consistent} with a strategy if it can be obtained following that strategy. A strategy of a player is \emph{winning} if all the plays consistent with that strategy are winning for the player. We say that a game is \emph{determined} if one of the players has a winning strategy.

\subsection{Topology}
\label{dst}
Our exposition of topological and set--theoretical notions follows~\cite{Kechris}.

The set $\{0,1\}^\w$ endowed with the product topology (where $\{0,1\}$ is given the discrete topology) is called the \emph{Cantor space}. The Cantor space is a Polish  space (i.e., a complete separable metric space) and is \emph{zero-dimensional} (i.e., it has a basis of \emph{clopen} sets). Given a finite set $\Sigma$ with at least two elements, the spaces $\Sigma^{\omega}$ and $\trees{\Sigma}$ are both homeomorphic to the Cantor space. 

A subset  $A\subseteq X$ of a topological space is \emph{nowhere dense} if the interior of its closure is the empty set, that is ($\texttt{int}(\texttt{cl}(A))= \emptyset$. A set $A \subseteq X$ is of \emph{(Baire) first category} (or \emph{meager}) if $A$ can be expressed as countable union of nowhere dense sets. A set $A \subseteq X$ which is not meager is \emph{of the second (Baire) category}. 
The complement of a meager set is called \emph{comeager}. 
Recall that a set $A\subseteq \{0,1\}^\omega$ is a \emph{$G_\delta$ set} if it can be expressed as a countable intersection of open sets. The following property of comeager sets will be useful. A subset $B\subseteq \{0,1\}^\w$ is comeager if and only if there exists a dense $G_\delta$ set $A\subseteq B$ which is comeager.

A set $B \subseteq X$ has the \emph{Baire property} if $X = U\triangle M$, for some open set $U \subseteq X$ and meager set $M \subseteq X$, where $\triangle$ is the operation of symmetric difference $X\triangle Y =  (X\cup Y)\setminus (X\cap Y)$.


We now describe Banach--Mazur game (see~\cite[Chapter~8.H]{Kechris} for a detailed overview) which characterizes Baire category. We specialize the game to zero--dimensional Polish spaces, such as $\Sigma^\w$ and $\trees{\Sigma}$, since we only deal with this class of spaces in this work. 


\begin{defi}[Banach--Mazur Game]
\label{bmgames} Let $X$ be a zero--dimensional Polish space. 
Fix a payoff set $A \subseteq  X$. The configurations of the game $\BM(X,A)$ are non-empty clopen sets $U\subseteq X$; the initial configuration is $X$. Consider a round starting in a configuration $U_n$. In such a round, first Player~\adam chooses a non-empty clopen set $U'_n\subseteq U_n$ and then Player~\eve chooses a non-empty clopen set $U_{n+1}\subseteq U'_{n}$. The consecutive configuration is $U_{n+1}$. An infinite play of that game is won by Player~\eve if $\displaystyle\bigcap_{n\in\omega} U_n\cap A\neq\emptyset$ and Player~$\adam$ wins otherwise.
\end{defi}

The standard way of defining that game is to represent it as a form of a Gale--Stewart game. 

\begin{center}
\begin{tabular}{l l l l l l l }
Player~$\adam$ & &  $U'_0$ &        &  $U'_1$ &    & $\dots$\\
Player~$\eve$ & &         &  $U_1$ &       &  $U_2$ & $\dots$ 
\end{tabular}
\end{center}

For a proof of the following result see, e.g.,~Theorem~8.33 in~\cite{Kechris}.

\begin{thm}
\label{bm_game_theorem}
Let $X$ be a zero--dimensional Polish space. If $A \subseteq X$ has the Baire property, then $\BM(X,A)$ is determined and Player~$\eve$ wins iff $A$ is comeager.
\end{thm}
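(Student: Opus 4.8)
The plan is to prove Theorem~\ref{bm_game_theorem}, the Banach--Mazur characterization of Baire category, directly for zero--dimensional Polish spaces, following the standard argument (as in Theorem~8.33 of~\cite{Kechris}) but streamlined to the clopen setting. I will prove both implications separately.

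\textbf{Player~\eve\ wins if $A$ is comeager.} Suppose $A$ is comeager; then by the property of comeager sets recalled in Section~\ref{dst}, there is a dense $G_\delta$ set $G=\bigcap_{k\in\omega} V_k\subseteq A$ with each $V_k$ open and dense. I would have Player~\eve\ play so as to force the intersection of configurations into $G$. Fix an enumeration of a countable basis of clopen sets. After Player~\adam\ plays a non-empty clopen set $U'_n\subseteq U_n$, since $V_n$ is open and dense, $U'_n\cap V_n$ is a non-empty open set, and by zero--dimensionality it contains a non-empty clopen set; Player~\eve\ plays such a set as $U_{n+1}$, additionally shrinking it so that its closure (it is clopen, hence equal to its closure) has diameter at most $2^{-n}$ with respect to a fixed compatible complete metric. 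Then $\bigcap_n U_n=\bigcap_n \overline{U_n}$ is a decreasing sequence of non-empty closed sets with diameters tending to $0$ in a complete metric space, so it is a single point $x$. By construction $x\in U_{n+1}\subseteq V_n$ for every $n$, hence $x\in G\subseteq A$, so $\bigcap_n U_n\cap A\neq\emptyset$ and Player~\eve\ wins. This describes a winning strategy for Player~\eve.

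\textbf{If $A$ is not comeager (and has the Baire property), Player~\adam\ wins.} Since $A$ has the Baire property, write $A=U\triangle M$ with $U$ open and $M$ meager, say $M\subseteq\bigcup_k F_k$ with each $F_k$ closed nowhere dense. If $A$ is not comeager then $X\setminus A$ is not meager, and since $X\setminus A$ differs from $X\setminus U$ by a meager set, $X\setminus U$ is not meager; as $X\setminus U$ is closed, a non-meager closed set has non-empty interior, so there is a non-empty clopen set $C$ disjoint from $U$. Player~\adam's strategy: at the first round play $U'_0\subseteq C$; thereafter, at round $n$ (starting from configuration $U_n$), since $F_n$ is closed nowhere dense, $U_n\setminus F_n$ is a non-empty open set, so it contains a non-empty clopen subset, which Player~\adam\ plays as $U'_n$. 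Then along any resulting play, $\bigcap_n U_n\subseteq C\setminus\bigcup_k F_k$, which is disjoint from $U$ and from $M$, hence disjoint from $U\cup M\supseteq A$. Therefore $\bigcap_n U_n\cap A=\emptyset$ and Player~\adam\ wins; this is a winning strategy for Player~\adam.

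Combining the two directions: if $A$ has the Baire property then exactly one of ``$A$ comeager'' / ``$A$ not comeager'' holds, and in each case the corresponding player has a winning strategy, so $\BM(X,A)$ is determined and Player~\eve\ wins if and only if $A$ is comeager. The main point requiring care — really the only non-routine step — is arranging, in the first direction, that the nested clopen configurations have a non-empty (indeed singleton) intersection: this is where completeness of the Polish metric and the diameter-shrinking bookkeeping enter, and it is the step I would write out most carefully. The zero--dimensionality hypothesis is used throughout to replace arbitrary non-empty open sets by non-empty clopen sets, so that the moves stay within the configuration set of $\BM(X,A)$ as defined.
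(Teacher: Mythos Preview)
Your proof is correct and follows the standard Banach--Mazur argument; the paper itself does not supply a proof of this theorem but simply refers the reader to Theorem~8.33 in~\cite{Kechris}, so there is nothing to compare against beyond noting that your argument is essentially the textbook one specialized to the zero--dimensional setting. One small bookkeeping point in the second direction: as written, Player~\adam's first move lands inside $C$ but does not yet avoid $F_0$, and ``thereafter'' you avoid $F_n$ at round $n\geq 1$, so $F_0$ is never handled; the obvious fix is to have Player~\adam\ play $U'_0$ a non-empty clopen subset of $C\setminus F_0$ at the first round (which exists since $F_0$ is nowhere dense), after which your conclusion $\bigcap_n U_n\subseteq C\setminus\bigcup_k F_k$ goes through.
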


If $X$ is the space of $\w$-words or infinite trees over a finite alphabet $\Sigma$, the above game can be simplified as in Definitions~\ref{def:bm-words} and~\ref{def:bm-trees} below.

\begin{defi}
\label{def:bm-words}
Fix a language $L\subseteq \Sigma^\w$ of $\w$-words over $\Sigma$. The \emph{Banach--Mazur game} on $L$ (denoted $\BM(L)$) is the following game with configurations of the form $(s,r)$ where $s\in\Sigma^\ast$ is a finite word and $r\in\{\eve,\adam\}$. The initial configuration $(s_0,r_0)$ is $(\epsilon,\adam)$. Consider a round $n=0,1,\ldots$ that starts in a configuration $(s_n,r_n)$. In this round, the player $r_n$ chooses a finite word $s_{n+1}$ that contains $s_n$ as a strict prefix and the round ends in the configuration $(s_{n+1}, \bar{r_n})$ where $\bar{r_n}$ is the opponent of $r_n$.

An infinite play of this game is winning for $\eve$ if $\alpha\in L$, where $\alpha=\bigcup_{n\in\w} s_n$ is the unique $\w$-word that has all $s_n$ as prefixes.
\end{defi}

Now we move to the tree variant of this game. A \emph{tree-prefix} is a partial function $\parfun{s}{\{\dL,\dR\}^\ast}{\Sigma}$ where the domain $\dom(s)$ is finite, prefix-closed, and each node of $\dom(s)$ has either zero (a leaf) or two (an internal node) children in $\dom(s)$. We say that $s'$ \emph{extends} $s$ if $s'\neq\emptyset$, $s\subseteq s'$, and each leaf of $s$ is an internal node of $s'$.

\begin{defi}
\label{def:bm-trees}
Fix a language $L\subseteq \trees{\Sigma}$ of trees over $\Sigma$. The \emph{Banach--Mazur game} on $L$ (denoted $\BM(L)$) is the following game with configurations of the form $(s,r)$ where $s$ is a tree-prefix and $r\in\{\eve,\adam\}$. The initial configuration $(s_0,r_0)$ is $(\emptyset,\adam)$. Consider a round $n=0,1,\ldots$ that starts in a configuration $(s_n,r_n)$. In this round, the player $r_n$ chooses a tree-prefix $s_{n+1}$ that extends $s_n$ and the round ends in the configuration $(s_{n+1}, \bar{r_n})$ where $\bar{r_n}$ is the opponent of $r_n$. An infinite play of this game is winning for Player~$\eve$ if $t\in L$, where  $t=\bigcup_{n\in\w} s_n$.
\end{defi}

\begin{thm}
A language $L$ of $\w$-words or trees over $\Sigma$ with the Baire property is comeager if and only if $\eve$ has a winning strategy in the Banach--Mazur game on $L$.
\end{thm}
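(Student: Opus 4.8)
The plan is to derive the theorem from Theorem~\ref{bm_game_theorem}. Take $X=\Sigma^\w$ (resp.\ $X=\trees{\Sigma}$); this is a zero--dimensional Polish space, and since $L$ has the Baire property, Theorem~\ref{bm_game_theorem} applied with $A=L$ says that $\eve$ wins the game $\BM(X,L)$ of Definition~\ref{bmgames} if and only if $L$ is comeager. So it is enough to prove that $\eve$ has a winning strategy in the simplified game $\BM(L)$ of Definition~\ref{def:bm-words} (resp.\ Definition~\ref{def:bm-trees}) exactly when she has one in $\BM(X,L)$. To compare the two games I would use the dictionary that assigns to a finite word $s\in\Sigma^\ast$ (resp.\ a tree--prefix $s$) the \emph{cylinder} $[s]\subseteq X$ of all $\w$-words (resp.\ trees) that extend $s$. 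Each $[s]$ is a non-empty clopen set; reverse inclusion of cylinders mirrors the extension order on words (resp.\ tree--prefixes), and $[s']\subsetneq[s]$ exactly when $s'$ strictly extends $s$. Two elementary facts will be used throughout: (a)~every non-empty clopen $U\subseteq X$ contains a cylinder, and more precisely, given any cylinder $[t]\supseteq U$, one can pick a cylinder $[s]\subseteq U$ with $[s]\subsetneq[t]$ (if $U=[t]$, replace $[t]$ by a proper sub-cylinder); and (b)~a strictly decreasing chain $[s_0]\supsetneq[s_1]\supsetneq\cdots$ satisfies $\bigcap_{n}[s_n]=\{z\}$ for the unique $z=\bigcup_{n}s_n\in X$, so that ``$\bigcap_{n}[s_n]\cap L\neq\emptyset$'' is the same as ``$z\in L$'', which is precisely the winning condition of $\BM(L)$.

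Under this dictionary $\BM(L)$ is exactly $\BM(X,L)$ with the two additional requirements that every move be a cylinder and that consecutive configurations strictly decrease, and the core step is to show that these requirements do not change the winner. I would do this by mutually simulating winning strategies. If $\eve$ wins the cylinder/strict game, then to win $\BM(X,L)$ she plays while maintaining a ``shadow'' play of the cylinder game: whenever $\adam$ plays a non-empty clopen set $U'_n$, she uses~(a) to choose a cylinder $c_n\subseteq U'_n$ strictly below the current shadow configuration, feeds $c_n$ to the shadow game as $\adam$'s move, copies the shadow reply $c'_n\subsetneq c_n$ as her real move (legal since $c'_n\subseteq U'_n$), and iterates; the intersection of the real configurations then coincides with that of the (strictly decreasing) shadow play, which meets $L$ because the shadow play is won. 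Conversely, if $\eve$ wins $\BM(X,L)$, then in the cylinder/strict game she runs a shadow play of $\BM(X,L)$: $\adam$'s cylinder moves are legal there, her shadow strategy answers with some clopen set, and she refines that set by~(a) to a cylinder strictly below $\adam$'s last move, which she plays; again the two plays have the same nested intersection, a single point, so ``won for $\eve$'' there means that point lies in $L$, which is exactly winning $\BM(L)$. Running the same simulations with the roles of $\eve$ and $\adam$ interchanged transfers $\adam$'s winning strategies as well, so the two games have the same winner; combined with Theorem~\ref{bm_game_theorem} this proves the theorem (and, incidentally, the determinacy of $\BM(L)$ for $L$ with the Baire property).

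I expect the only real work to be the bookkeeping in these simulations, and within it one point specific to trees: a basic clopen subset of $\trees{\Sigma}$ has the form $\{t : t|_D=f\}$ for an \emph{arbitrary} finite $D\subseteq\{\dL,\dR\}^\ast$, not literally a tree--prefix cylinder, so to carry out fact~(a) over trees one must first close $D$ under prefixes and add the missing sibling to every node that has only one child in $D$ --- a process that terminates because $D$ is finite and produces a prefix--closed set in which every node has zero or two children --- and then extend $f$ arbitrarily; enlarging $D$ further puts the resulting tree--prefix cylinder strictly below any prescribed one, and the fact that the domains of a strictly increasing chain of tree--prefixes exhaust $\{\dL,\dR\}^\ast$ (their frontier moves outward at every step) is what makes the limit a genuine tree, as needed in~(b). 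Finally, I would note that the implication ``$L$ comeager $\Rightarrow$ $\eve$ wins $\BM(L)$'' can be established directly and without the Baire property: writing $L\supseteq\bigcap_{k}V_k$ with each $V_k$ open and dense, $\eve$ uses her $k$-th move to drive the current cylinder into $V_k$ (possible since $V_k$ is open and dense), which forces the limit point into $\bigcap_{k}V_k\subseteq L$; the Baire property is needed only for the converse, which is where Theorem~\ref{bm_game_theorem} enters through the equivalence of the two games.
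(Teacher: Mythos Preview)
Your proposal is correct and is essentially the standard derivation. The paper itself states this theorem without proof, treating it as an immediate consequence of Theorem~\ref{bm_game_theorem} (the classical Banach--Mazur theorem, cited from Kechris); your argument---identifying the simplified games of Definitions~\ref{def:bm-words} and~\ref{def:bm-trees} with the general clopen-set game via the cylinder dictionary and mutual strategy simulation---is exactly the routine verification the paper leaves implicit, and your handling of the tree case (closing up an arbitrary finite $D$ to a tree-prefix domain, and the observation that the ``extends'' relation forces the frontier to exhaust $\{\dL,\dR\}^\ast$) correctly addresses the one genuinely tree-specific point.
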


\subsection{Probability Measures}

We summarize below the basic concepts related to Borel measures. For more details see, e.g,~\cite[Chapter~17]{Kechris}.

In what follows, let $X$ be a zero--dimensional Polish space. The smallest $\sigma$-algebra of subsets of $X$ containing all open sets is denoted by $\mathcal{B}$ and its elements are called \emph{Borel sets}. 
A \emph{Borel probability measure} on $X$ is a function $\fun{\mu}{\mathcal{B}}{[0,1]}$ such that: $\mu(\emptyset)=0$, $\mu(X)=1$ and, if $\{B_n\}_{n\in \w}$ is a sequence of disjoint Borel sets, $\mu\big(\bigcup_{n}B_n\big)= \sum_{n}\mu(B_n)$. 

A subset $A\subseteq X$ is a \emph{$\mu$-null set} if $A\subseteq B$ for some Borel set $B\subseteq X$ such that $\mu(B)=0$.

A subset $A\subseteq X$ is a \emph{$F_\sigma$ set} if it can be expressed as a countable union of closed sets. Every Borel measure $\mu$ on a $X$ is \emph{regular}: for every Borel set $B$ there exists an $F_\sigma$ set $A\subseteq B$ such that $\mu(A)=\mu(B)$.  

\paragraph*{\bf Lebesgue measure on $\w$-words.} 
For a fixed finite non-empty set $\Sigma=\{a_1,\dots, a_k\}$, we will be exclusively interested in one specific Borel measure $\mu_\w$ on  the space $\Sigma^\w$ which we refer to as \emph{Lebesgue measure} (on $\w$-words). This is the unique Borel measure satisfying the equality $\mu_w(B_{n=a})=\frac{1}{k}$ where $B_{n=a}=\{ (a_i)_{i\in \w} \mid a_n = a\}$, for each $n\in\mathbb{N}$ and $a\in \Sigma$. Intuitively, the Lebesgue measure on $\w$-words generates an infinite sequence $(a_0,a_1,\dots)$ by labelling the $n$-th entry with some letter $a\in \Sigma$ chosen randomly and uniformly. This is done independently for each position $n\in\w$.

\paragraph*{\bf Lebesgue measure on trees.} 

Similarly, for a finite non-empty set $\Sigma=\{a_1,\dots, a_k\}$, the \emph{Lebesgue measure} on trees $\trees{\Sigma}$ is the unique Borel measure, denoted by $\mu_t$, satisfying  $\mu_t(B_{v=a})=\frac{1}{k}$ where $B_{v=a}=\{ t\in \Sigma^{\{\dL,\dR\}^\ast} \mid t(v) = a\}$, for $v\in\{\dL,\dR\}^\ast$ and $a\in \Sigma$.  Once again, the Lebesgue measure on trees generates an infinite tree over $\Sigma$ by labelling each vertex $v$ with a randomly chosen letter $a\in \Sigma$. This is done independently for each vertex $v\in\{\dL,\dR\}^\ast$.

\subsection{Large Sections Projections}

Let $X$ and $Y$ be Polish spaces. We denote with $X\times Y$ the product space endowed with the product topology. Given a subset $A\subseteq X\times Y$ and an element $x\in X$ we denote with $A_x\subseteq Y$ the \emph{section of $A$ at $x$} defined as: $A_x \eqdef \{ y\in Y \mid  (x,y)\in A\}$.

A fundamental operation in descriptive set theory, corresponding to the logical operation of universal quantification, is that of taking full sections: given $A\subseteq X\times Y$ we denote with $\forall_{Y}A \subseteq X$ the set 
\[
\forall_Y A \eqdef \{ x \in X \mid A_x=Y\}
\]
In other words, $\forall_Y A$ is the collection of those $x\in X$ such that for all $y\in Y$ the pair $(x,y)$ is in $A$, see Figure~\ref{figure_full_section}. 

\begin{figure}[H]
\centering
\begin{tikzpicture}[scale=1.1]

\tikzstyle{axis} = [->, very thick, line cap = rect]
\tikzstyle{theSet} = [color=black, line width=4pt, line cap=round, shorten <=1pt, shorten >=1pt]

\shape{fill=ShapeLight}

\begin{scope}
\clip(0.7,-1) rectangle (1,4);
\shape{fill=ShapeDark}
\end{scope}

\draw[theSet] (0.7, 0) -- (1.0,0);

\begin{scope}
\clip(2,-1) rectangle (2.5,4);
\shape{fill=ShapeDark}
\end{scope}

\draw[theSet] (2, 0) -- (2.5,0);

\begin{scope}
\clip(5,-1) rectangle (5.3,4);
\shape{fill=ShapeDark}
\end{scope}

\draw[theSet] (5, 0) -- (5.3,0);

\draw[axis] (0,0)  -- ++(6,0) node(xline)[right] {$\vv{Y}$};
\draw[axis] (0,0) -- ++(0,3.1) node(yline)[above] {$X$};
\node at (5.8,3.2) {$\phi(x,\vv{y})$};
\node (phi) at (3,-1.2) {$\forall x.\,\phi(x,\vv{y})$};
\draw[->] (phi) -- (0.85,-0.2);
\draw[->] (phi) -- (2.25,-0.2);
\draw[->] (phi) -- (5.15,-0.2);

\end{tikzpicture}
\caption{The universal quantifier $\forall$. A formula $\forall x.\,\phi(x,\protect\vv{y})$ holds for parameters $\protect\vv{y}\in\protect\vv{Y}$ if for every choice of $x\in X$ we have $\phi(x,\protect\vv{y})$. The full sections selected by this quantifier are marked in gray.}
\label{figure_full_section}
\end{figure}
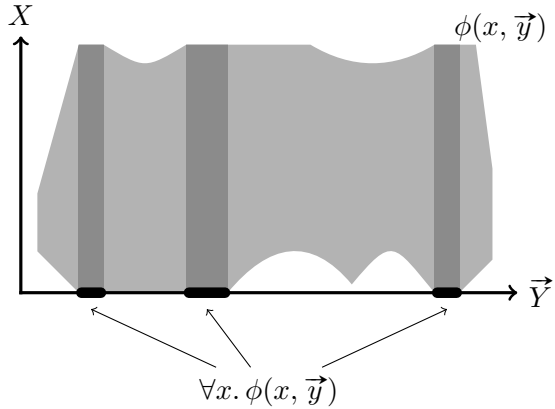  

\begin{defi}[Projective sets]
The family of projective sets is the smallest family containing Borel sets and closed under Boolean operations and full sections.
\end{defi}

The content of this article is based on other operations studied in descriptive set theory (see, e.g.,~\cite[\S~29.E]{Kechris}) based on the idea of taking \emph{large sections}, rather than full sections, see Figure~\ref{figure_large_section}. From a logical point of view, these operations corresponds to some kind of weakened ``for almost all'' quantification.

The two notions of ``largeness'' that are relevant in this work are that of being \emph{comeager} and having \emph{Lebesgue measure $1$}, see Figure~\ref{figure_large_section}. 

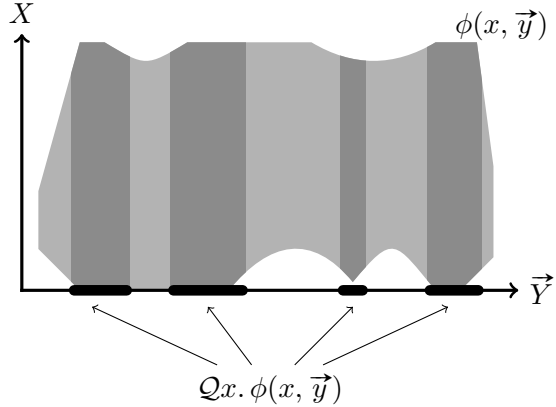
\begin{figure}[H]
\centering
\begin{tikzpicture}[scale=1.1]

\tikzstyle{axis} = [->, very thick, line cap = rect]
\tikzstyle{theSet} = [color=black, line width=4pt, line cap=round, shorten <=1pt, shorten >=1pt]

\shape{fill=ShapeLight}

\begin{scope}
\clip(0.6,-1) rectangle (1.3,4);
\shape{fill=ShapeDark}
\end{scope}

\draw[theSet] (0.6, 0) -- (1.3,0);

\begin{scope}
\clip(1.8,-1) rectangle (2.7,4);
\shape{fill=ShapeDark}
\end{scope}

\draw[theSet] (1.8, 0) -- (2.7,0);

\begin{scope}
\clip(3.85,-1) rectangle (4.15,4);
\shape{fill=ShapeDark}
\end{scope}

\draw[theSet] (3.85, 0) -- (4.15,0);

\begin{scope}
\clip(4.9,-1) rectangle (5.55,4);
\shape{fill=ShapeDark}
\end{scope}

\draw[theSet] (4.9, 0) -- (5.55,0);

\draw[axis] (0,0)  -- ++(6,0) node(xline)[right] {$\vv{Y}$};
\draw[axis] (0,0) -- ++(0,3.1) node(yline)[above] {$X$};
\node at (5.8,3.2) {$\phi(x,\vv{y})$};
\node (phi) at (3,-1.2) {$\mathcal{Q} x.\,\phi(x,\vv{y})$};
\draw[->] (phi) -- (0.85,-0.2);
\draw[->] (phi) -- (2.25,-0.2);
\draw[->] (phi) -- (4,-0.2);
\draw[->] (phi) -- (5.15,-0.2);

\end{tikzpicture}
\caption{A large section quantifier $\mathcal{Q}$, for instance $\mathcal{Q}$ can be $\forall$, $\qM$, $\qN$, etc. A formula $\mathcal{Q} x.\,\phi(x,\protect\vv{y})$ holds for parameters $\protect\vv{y}\in\protect\vv{Y}$ if there are in some sense many (but possibly not all) values $x\in X$ such that $\phi(x,\protect\vv{y})$ holds. The large sections selected by this quantifier are marked in gray.}
\label{figure_large_section}
\end{figure}  

Correspondingly, given $A\subseteq X\times Y$, we denote with $\qM_{Y} A \subseteq X$ and $\qN_{Y} A \subseteq X$ the sets defined as:

\begin{align*}
\qM_Y A &\eqdef \{x\in X \mid \text{$A_x$ is comeager}\} \\
\qN_Y A &\eqdef \{x\in X \mid \text{$A_x$ has Lebesgue measure $1$}\} 
\end{align*}

If the space $Y$ is known from the context, we skip the subscript $Y$. We refer to the operations $\qM$ and $\qN$ as the \emph{(Baire) Category quantifier} and the \emph{(Lebesgue) Measure quantifier}, respectively.

\begin{rem}
\label{remark_measurability}
It is consistent with $\textnormal{ZFC}$ that, e.g., there exists a projective set $A$ (even a $\asigma{2}$ set) which does not have the Baire property and is not Lebesgue measurable. However, the sets $\qN_Y A$ and $\qM_Y$ are well-defined even if the set $A$ is not Lebesgue measurable or it does not have the Baire property. 
\end{rem}

The following classical result states that the above operations preserve the class of Borel sets.

\begin{thm}
If $A\subseteq X\times Y$ is a Borel set then $\qM_Y A$ and $\qN_Y A$ are Borel sets.
\end{thm}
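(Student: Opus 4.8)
The statement is the classical fact that the "category quantifier" $\qM_Y$ and the "measure quantifier" $\qN_Y$ preserve Borelness. The plan is to prove both statements by transfinite induction on the Borel rank of $A \subseteq X \times Y$, after first reducing to a convenient normal form. Since $Y$ is a zero–dimensional Polish space (in our applications $Y = \{0,1\}^\w$, $\Sigma^\w$ or $\trees{\Sigma}$), fix a countable basis of clopen sets $\{N_k\}_{k\in\w}$ for $Y$, closed under finite intersections. The key reformulations I would establish first are the following "local" characterisations, which turn the quantifiers into countable Boolean combinations of sections:
\begin{align*}
x \in \qM_Y A &\iff \forall k.\ \exists j\ \big(N_j \subseteq N_k \ \wedge\ N_j \subseteq^{*} A_x\big), \\
x \in \qN_Y A &\iff \forall n.\ \exists\text{ finite }F\ \big(\textstyle\sum_{k\in F}\mu_Y(N_k) > 1 - \tfrac{1}{n}\ \wedge\ \forall k\in F.\ N_k \subseteq^{*} A_x\big),
\end{align*}
where $\subseteq^*$ means "up to a meager set" (resp. "up to a null set") inside $N_k$. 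Both reductions are standard: comeagerness of $A_x$ in $Y$ is equivalent to $A_x$ being comeager in a dense family of basic clopen sets, and measure $1$ is equivalent to the inner measure being $1$, which by regularity of $\mu_Y$ can be witnessed by $F_\sigma$ approximations built from basic clopen sets.

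**Main inductive argument.** The heart of the proof is a single uniform lemma: for a Borel set $A \subseteq X \times Y$ and a fixed basic clopen $N_k$, the set
\[
\{x \in X \mid A_x \cap N_k \text{ is comeager in } N_k\} \quad(\text{resp. } \{x \mid \mu_Y(A_x \cap N_k) = \mu_Y(N_k)\})
\]
is Borel. I would prove this by induction on the rank of $A$ in the Borel hierarchy. The base case ($A$ clopen in $X\times Y$) is immediate since the sections vary "continuously". For the successor / $\bsigma{\xi}$ step: if $A = \bigcup_n A_n$ with each $A_n$ of lower rank, one expresses comeagerness (resp. full measure) of the union in terms of the pieces using the local characterisations above — for measure this is a monotone-limit / Fubini-style argument ($\mu_Y(A_x \cap N_k) = \lim_m \mu_Y((A_0\cup\dots\cup A_m)_x \cap N_k)$, and the approximants need not individually be comeager, which is exactly why the $\subseteq^*$-over-subbasis reformulation is needed); for category one uses that a countable union is comeager iff it is non-meager on every basic clopen, combined with Baire-property decompositions. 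For complementation one dualises: $A_x$ comeager iff $(X\times Y \setminus A)_x$ is meager, and "meager" is again handled via the subbasis. At each stage the quantifiers over the countable index sets $k, j, n, F$ only add countable unions/intersections over already-Borel sets, so Borelness propagates up the hierarchy. Finally $\qM_Y A = \bigcap_k \bigcup_{\,j: N_j \subseteq N_k}\{x \mid A_x \cap N_j \text{ comeager in } N_j\}$ and similarly for $\qN_Y A$, which closes the proof.

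**Where the difficulty lies.** The routine parts are the Boolean bookkeeping and the base case. The genuinely delicate step is the $\bsigma{\xi}$ (countable union) case for the \emph{measure} quantifier: unlike the category case, "$A_x$ has measure $1$" is not a pointwise-monotone property that is witnessed by any single approximant, so one cannot simply push the quantifier through the union. The clean way around this is to go through the inner-measure / $F_\sigma$-regularity reformulation and a measurable version of Fubini's theorem — concretely, one shows that $x \mapsto \mu_Y(A_x \cap N_k)$ is a Borel function of $x$ (proved by the same induction, using the monotone convergence theorem for the union step and $\mu_Y(A_x \cap N_k) = \mu_Y(N_k) - \mu_Y((X\times Y\setminus A)_x \cap N_k)$ for complementation), after which $\qN_Y A = \bigcap_k \{x \mid \mu_Y(A_x\cap N_k) = \mu_Y(N_k)\}$ is manifestly Borel. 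So in the write-up I would first prove "the section-measure map is Borel" as a lemma, and only then read off the theorem; the category half then follows the same template with "meager" replacing "null" and the Banach category theorem replacing monotone convergence. A reference such as~\cite[\S~29.E]{Kechris} covers essentially this argument, so the exposition can be kept brief.
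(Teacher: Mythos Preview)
Your plan is sound and would yield a correct proof, but it is not what the paper does. The paper gives no argument of its own: it simply cites \cite[Theorems~29.22 and~29.26]{Kechris}, which are stated for \emph{analytic} $A$, and then invokes Souslin's theorem \cite[Theorem~14.11]{Kechris}. The logic is that a Borel $A$ is simultaneously $\asigma{1}$ and $\api{1}$; applying the Kechris results to $A$ and to its complement shows that $\qM_Y A$ and $\qN_Y A$ are both analytic and coanalytic, hence Borel.

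Your route is genuinely different and more elementary: a transfinite induction on the Borel rank of $A$, establishing that the section-measure map $x\mapsto\mu_Y(A_x)$ is Borel (monotone convergence for increasing unions, subtraction for complements) and, on the category side, that ``$A_x$ is comeager in $N_k$'' is a Borel condition on $x$ for each basic clopen $N_k$. What your approach buys is self-containment and, in principle, explicit Borel-rank bounds on $\qM_Y A$ and $\qN_Y A$; what the paper's approach buys is a three-line proof at the price of invoking the substantially harder analytic-preservation theorems and Souslin's theorem. Since this statement is pure background in the paper, the citation is the natural choice there; your direct argument would be the right one in a self-contained exposition. One small remark: the section of Kechris you point to at the end, \S 29.E, is in fact closer to the paper's route (the analytic case) than to your rank-induction argument; the elementary induction you outline lives rather in the Fubini/Kuratowski--Ulam material of \S 16--17.
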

A proof of the above theorem can be obtained from~\cite[Theorem~29.22]{Kechris} for $\qM$ and~\cite[Theorem~29.26]{Kechris} for $\qN$, which are more generally stated for analytic sets, and an application of Souslin theorem~\cite[Theorem~14.11]{Kechris}.


\subsection{Alternating automata}

Given a finite set $X$, we denote with $\mathcal{DL}(X)$ the set of expressions $e$ generated by the grammar
\[e::= x \mid e \wedge e \mid e \vee e\quad \text{for $x\in X$}\]
An \emph{alternating tree automaton} over a finite alphabet $\Sigma$ is a tuple $\mathcal{A}=\langle \Sigma,Q, q_\init, \delta, \mathcal{F})$ where $Q$ is a finite \emph{set of states}, $q_\init\in Q$ is the \emph{initial state}, $\fun{\delta}{Q\times \Sigma}{\mathcal{DL}( \{\dL,\dR\}\times Q)}$ is the \emph{alternating transition function}, $\mathcal{F}\subseteq\powerset(Q)$ is a set of subsets of $Q$ called the \emph{Muller condition}.
The  Muller condition $\mathcal{F}$ is called a \emph{parity condition} if there exists a \emph{parity assignment} $\fun{\pi}{Q}{\w}$ such that:
$\mathcal{F}= \big\{F\subseteq Q \mid (\max_{q\in F}\pi(q))\textnormal{ is even}\}$. 

An alternating automaton $\mathcal{A}$ over an alphabet $\Sigma$ defines, or ``accepts'', a set of $\Sigma$-trees. The \emph{acceptance} of a tree $t\in \trees{\Sigma}$ is defined via a two-player ($\eve$ and $\adam$) game of infinite duration denoted as $\mathcal{A}(t)$. The configurations of the game $\mathcal{A}(t)$ are of the form $\langle u,q\rangle$ or $\langle u,e\rangle$ with $u\in\{\dL,\dR\}^{\ast}$, $q\in Q$, and $e\in \mathcal{DL}( \{\dL,\dR\}\times Q)$ (we can implicitly restrict to the set of sub-formulae of the formulae appearing in the alternating transition function $\delta$).

The game $\mathcal{A}(t)$ starts in the configuration $\langle\epsilon, q_\init\rangle$. Game configurations of the form $\langle u,q\rangle$, including the initial state, have only one successor configuration, to which the game progresses automatically. The successor state is $\langle u,e\rangle$ with $e=\delta(q,a)$, where  $a= t(u)$ is the labelling of the vertex $u$ given by $t$.  The dynamics of the game 
 at configurations $\langle u,e\rangle$ depends on the possible shapes of $e$.
If $e = e_1\vee e_2$, then Player~$\eve$ moves either to $\langle u,e_1\rangle$ or $\langle u,e_2\rangle$. If $e=e_1\wedge e_2$, then Player~$\adam$ moves either to $\langle u,e_1\rangle$ or $\langle u,e_2\rangle$. If $e=(\dL,q)$ then the game progresses automatically to the state $\langle u\dL, q\rangle$. Lastly, if $e=(\dR,q)$ the game progresses automatically to the state $\langle u\dR, q\rangle$. 
Thus a play in the game $\mathcal{A}(t)$ is a sequence $\Pi$ of configurations, that can look like
\[
\Pi = (\langle \epsilon,q_\init\rangle, \dots, \langle \dL, q_1\rangle, \dots,\langle \dL\dR, q_2\rangle,\dots,  \langle \dL\dR\dL, q_3\rangle, \dots, \langle \dL\dR\dL\dL, q_4\rangle,\dots),\]
where the dots represent part of the play in configurations of the form $\langle u,e\rangle$. Let $\Inf(\Pi)$ be the set of automata states $q\in Q$ occurring infinitely often in the configurations of the form $\langle u,q\rangle$ of $\Pi$. We then say that the play $\Pi$ of $\mathcal{A}(t)$ is winning for $\eve$, if $\Inf(\Pi)\in\mathcal{F}$. The play $\Pi$ is winning for $\adam$ otherwise. The \emph{language} of $\Sigma$-trees defined by $\Aa$ denoted $\lang(\Aa)$ is the collection $\{t\in\trees{\Sigma} \mid \text{$\eve$ has a winning strategy in the game $\mathcal{A}(t)$}\}$. 

An alternating automaton for $\w$-words is defined analogously, except that $\fun{\delta}{Q\times \Sigma}{\mathcal{DL}(Q)}$. In that case the the first coordinate of configurations of the game $\Aa(\alpha)$ is always a natural number, the game moves from a configuration $(n,q)$ to the configuration $(n+1,e)$ with $e=\delta(q,a)$, where $a=\alpha(n)$ is the labelling of the position $n$ given by $\alpha$. In that case $\lang(\Aa)\subseteq\Sigma^\w$.

\paragraph*{\bf Types of automata.} An alternating automaton $\Aa$ is \emph{non-deterministic} if all the transitions of $\Aa$ are of the form
\[
\big((\dL,q_{1,\dL}) {\land} (\dR,q_{1,\dR})\big)\ \lor\ 
\big((\dL,q_{2,\dL}) {\land} (\dR,q_{2,\dR})\big)\ \lor\ 
\ldots\ \lor\ 
\big((\dL,q_{k,\dL}) {\land} (\dR,q_{k,\dR})\big).
\]
An alternating automaton $\Aa$ is a \emph{game automaton} if all the transitions of $\Aa$ are of the form $(\dL,q_{\dL}) \xOper (\dR,q_{\dR})$ with ${\xOper}={\lor}$ or ${\land}$. 
An alternating automaton $\Aa$ is \emph{deterministic} if it is a game automaton with ${\xOper}={\land}$ in all transitions.


\section{Syntax and Semantics of Monadic Second Order Logic}
\label{section_mso_1}

In this section we define the syntax and the semantics of the  Monadic Second-order logic interpreted over the linear order of natural numbers (denoted $\SoneS$) and over the full binary tree (denoted $\StwoS$). This material is entirely standard. A  detailed exposition can be found in~\cite{thomas96}.

\subsection{Logic \texorpdfstring{$\SoneS$}{S1S} and \texorpdfstring{$\w$}{omega}-words.} 

The syntax of $\SoneS$ formulas is given by the  grammar

$$
\phi\, ::=\,   x<y \mid x\in X \mid \neg \phi \mid \phi_1 \vee \phi_2 \mid \forall x.\, \phi \mid \forall X.\, \phi
$$

where $x,y$ and $X,Y$ range over  countable sets of \emph{first-order variables} and \emph{second-order variables}, respectively. 

The semantics of  $\SoneS$ formulas on the structure $(\w,{<})$ is defined as expected by interpreting the formula $\forall X.\, \phi$ as quantifying over all subsets $X\subseteq\omega$ and the relation (${\in}$) as the membership relation between elements $x\in \w$ and sets $X\subseteq \omega$.

For example the formula  
\[
\exists X.\, \forall x.\, \big(
(x\in X) \Rightarrow (\exists y.\,  y\in X \wedge x<y)\big)
\]
expresses the existence of an unbounded set $X$ of natural numbers.

We write $\phi(\vec{x},\vec{X})$ to specify that the first-order and second-order variables of $\phi$ are $\vec{x}$ and $\vec{X}$, respectively.  

We can always assume that the free variables of a formula $\phi$ are all second-order variables, i.e., $\phi=\phi(\emptyset,\vec{X})$. Indeed if $\phi(\{x_1,\dots, x_n\},\vec{Y})$ is not of this form, we can consider the semantically equivalent formula $\psi(\emptyset, \{X_1,\ldots, X_n\} \cup\vec{Y})$, having second-order $\vec{X}$ variables (constrained to be singletons) simulating first-order ones.

Each set $X\subseteq \omega$ can be identified as the corresponding $\w$-word over the alphabet $\Sigma=\{0,1\}$. Hence a $\SoneS$ formula $\phi(X_1,\ldots,X_n)$ defines a collection of $n$-tuples of $\w$-words over $\Sigma=\{0,1\}$ or, equivalently, a collection of $\w$-words over the alphabet $\Gamma=\{0,1\}^n$.
The \emph{language defined by $\phi$}, denoted by $\lang(\phi)\subseteq \Gamma^\w$ is the set of $\omega$-words over $\Gamma$ satisfying $\phi$.

The following theorem and proposition are well known.
\begin{thm}
The theory of $\SoneS$ is decidable. The expressive power of $\SoneS$ is effectively equivalent with alternating / non-deterministic / deterministic Muller automata.
\end{thm}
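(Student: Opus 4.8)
The plan is to prove both halves of the statement together, by setting up a two-way effective translation between $\SoneS$ formulas and automata on $\w$-words and then observing that the emptiness problem is decidable. By the reduction already recorded above, it suffices to treat formulas $\phi(X_1,\dots,X_n)$ whose free variables are all second-order, so that $\phi$ defines a language $\lang(\phi)\subseteq\Gamma^\w$ over $\Gamma=\{0,1\}^n$.

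\emph{From formulas to automata.} I would proceed by induction on $\phi$. The atomic formulas $x\in X$ and $x<y$ — read through the translation of first-order variables into singleton second-order variables — define trivial regular languages recognised by small deterministic automata. For the inductive step, $\phi_1\vee\phi_2$ corresponds to union of languages; $\exists X_{n+1}.\,\phi$ (and hence, a fortiori, $\exists x.\,\phi$) corresponds to the projection $\{0,1\}^{n+1}\to\{0,1\}^n$ erasing the last coordinate; and $\neg\phi$ corresponds to complementation. Union and projection are immediate on \emph{non-deterministic} automata (disjoint union of state spaces; nondeterministically guessing the erased bit at each step). The one nontrivial closure property is complementation, and this is where I expect the real work to lie.

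\emph{Complementation, and the three automata models.} Rather than complementing non-deterministic Muller automata directly, I would route the construction through \emph{alternating parity automata}, so that $\MSO$ negation is handled for free: the De~Morgan dual of an alternating automaton (swap $\lor$ and $\land$ in every transition and shift the parity indices by one, which swaps the roles of $\eve$ and $\adam$) recognises the complement, by determinacy of parity games. This yields a clean translation $\MSO\to$ alternating parity automata. To tie everything to the other two models — and to feed complementation back into the induction — I would then invoke two classical results: (i) the \emph{simulation theorem}, that every alternating parity automaton on $\w$-words is effectively equivalent to a non-deterministic parity (hence Muller) one, with the non-deterministic automaton guessing a positional winning strategy of $\eve$ in the acceptance game and positional determinacy of parity games justifying this; and (ii) \emph{McNaughton's determinization theorem} (via Safra's construction), that every non-deterministic Büchi — equivalently Muller — automaton on $\w$-words is effectively equivalent to a \emph{deterministic} Muller (equivalently parity) automaton. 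Combining (i) and (ii) gives the effective equivalence of the alternating, non-deterministic and deterministic Muller models asserted in the theorem; in particular deterministic Muller automata are trivially closed under complementation (complement the family $\mathcal{F}\subseteq\powerset(Q)$), which is exactly what the translation of $\neg\phi$ requires.

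\emph{From automata back to formulas, and decidability.} Conversely, a Muller automaton $\Aa=\langle\Sigma,Q,q_\init,\delta,\mathcal{F}\rangle$ is $\SoneS$-definable: encode a run $\rho\colon\w\to Q$ by $|Q|$ fresh second-order variables partitioning $\w$, and write an $\SoneS$ formula (with the input word given by the original free variables) asserting that $\rho(0)=q_\init$, that consecutive states respect $\delta$ under the letter read, and that $\Inf(\rho)\in\mathcal{F}$; the last clause is expressible since ``$q$ occurs infinitely often'' is $\forall n.\,\exists m.\,(n<m\wedge\rho(m)=q)$ and $\mathcal{F}$ is finite. Finally, for decidability of the $\SoneS$ theory: a sentence translates to a non-deterministic parity automaton over a one-letter alphabet, whose language is non-empty iff there is a reachable cycle in its transition graph with even maximum priority (an ultimately periodic accepting word, i.e.\ a ``lasso''), which is decidable by finite graph search. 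The main obstacle throughout is the complementation core — the dualization of alternating automata together with the simulation theorem and McNaughton's theorem; all the remaining steps are routine bookkeeping.
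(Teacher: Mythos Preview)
The paper does not prove this theorem at all: it is introduced with ``The following theorem and proposition are well known'' and simply stated without argument. So there is no paper proof to compare against; your task here was really just to supply a correct justification of a classical fact.

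Your sketch is a sound modern rendering of the B\"uchi--McNaughton circle of results. The inductive translation from formulas to automata, handling $\lor$ by union, $\exists$ by projection, and $\neg$ via dualization of alternating parity automata followed by the simulation theorem and McNaughton determinization, is correct; so is the converse encoding of a Muller run in $\SoneS$ and the lasso argument for emptiness. One small historical/expository remark: the route you chose (alternating automata $+$ positional determinacy of parity games $+$ simulation) is anachronistic relative to B\"uchi's original Ramsey-based complementation and McNaughton's later determinization, but it is mathematically valid and arguably cleaner. If you wanted to match the paper's level of detail, a single sentence citing B\"uchi and McNaughton would have sufficed.
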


\begin{prop}
For every $\SoneS$ formula $\phi$, the language $\lang(\phi)$ is a Borel subset of $\Gamma^\omega$.
\end{prop}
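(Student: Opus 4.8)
The plan is to avoid a direct structural induction on $\phi$, which would break at the second-order quantifier $\forall X$: as recalled in Remark~\ref{remark_measurability} and the surrounding discussion, the full-section operation $\forall_Y$ associated with $\forall X.\,\psi$ does not preserve Borel-ness in general, so the ``easy'' inductive argument (atomic formulas $x<y$ and $x\in X$ give clopen sets; $\neg$, $\vee$ and $\forall x$ give, respectively, complements, finite unions and countable intersections over positions $n\in\w$, all preserving Borel sets) cannot be completed. Instead I would route through the automata characterization stated in the preceding theorem: $\lang(\phi)$ is recognized by some deterministic Muller automaton $\Aa=\langle\Gamma,Q,q_\init,\delta,\mathcal{F}\rangle$. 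It then suffices to prove the general fact that every language recognized by a deterministic Muller automaton is Borel.

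First I would exploit determinism. Every $\alpha\in\Gamma^\w$ has a unique run $\rho_\alpha\in Q^\w$, and the prefix $\rho_\alpha(0)\cdots\rho_\alpha(n)$ is determined by $\alpha(0)\cdots\alpha(n)$ alone; in particular, for each $m\in\w$ and $q\in Q$ the set $C_{m,q}=\{\alpha\mid\rho_\alpha(m)=q\}$ is clopen in $\Gamma^\w$. Next, for each state $q\in Q$ I would set $I_q=\{\alpha\mid q\in\Inf(\rho_\alpha)\}=\bigcap_{n\in\w}\bigcup_{m\geq n}C_{m,q}$, which exhibits $I_q$ as a $\bpi{2}$ (that is, $G_\delta$) set, so that its complement is $\bsigma{2}$ ($F_\sigma$).

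Then, for each $F\subseteq Q$, the set of words whose run visits exactly the states of $F$ infinitely often is
\[
E_F=\Big(\bigcap_{q\in F} I_q\Big)\cap\Big(\bigcap_{q\notin F}(\Gamma^\w\setminus I_q)\Big),
\]
a finite Boolean combination of $\bpi{2}$ sets, hence a member of $\bdelta{3}$, in particular Borel. Since $Q$ is finite and every run is infinite, $\Inf(\rho_\alpha)$ is always a non-empty element of $\power(Q)$, and $\lang(\Aa)=\bigcup_{F\in\mathcal{F}}E_F$ is a finite union of Borel sets, hence Borel; as $\lang(\phi)=\lang(\Aa)$, this finishes the argument.

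I do not foresee a genuine obstacle — the computation above is routine — and the only point worth underlining is the one already mentioned: the reason the automata-theoretic detour is essential is precisely that second-order quantification corresponds to the full-section operation, which need not preserve Borel-ness, so the naive induction is not available and Büchi's theorem does the real work.
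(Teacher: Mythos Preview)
Your argument is correct. The paper itself does not give a proof of this proposition: it is stated as ``well known'' immediately after the B\"uchi theorem and left at that. What you wrote is exactly the standard justification one has in mind --- pass to a deterministic Muller automaton via B\"uchi's theorem, observe that each set $I_q$ of words visiting $q$ infinitely often is $\bpi{2}$, and conclude that $\lang(\Aa)$ is a finite Boolean combination of $\bpi{2}$ sets, hence Borel (indeed in $\bdelta{3}$). Your remark that the naive induction on $\phi$ breaks at $\forall X$ because full-section projection does not preserve Borel-ness is also to the point and explains why the automata detour is not merely convenient but necessary.
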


Hence for every $\SoneS$ formula $\phi$, the language $\lang(\phi)$ has the Baire property and is Lebesgue measurable.

\subsection{Logic \texorpdfstring{$\StwoS$}{S2S} and infinite trees} We now introduce, following a similar presentation, the syntax and the semantics of the \MSO\ logic on the full binary tree (\StwoS).

The \emph{full binary tree} is the structure $(\{\dL,\dR\}^\ast,\succL,\succR)$ where $\succL$ and $\succR$ are binary relations on $\{\dL,\dR\}^\ast$ relating a vertex with its left and right child, respectively:

\[
\succL = \{ (w,w\dL) \mid w\in \{\dL,\dR\}^\ast \} \qquad \succR = \{ (w,w\dR) \mid w\in \{\dL,\dR\}^\ast \}
\]

The syntax of $\StwoS$ formulas is given by  the grammar

\[
\phi\, ::=\,   \succL(x,y) \mid \succR(x,y)  \mid x\in X \mid \neg \phi \mid \phi_1 \vee \phi_2 \mid \forall x.\, \phi \mid \forall X.\, \phi
\]

where $x,y$ and $X,Y$ range over countable sets of \emph{first-order variables} and \emph{second-order variables}, respectively. 

The semantics of $\StwoS$ formulae on the full binary tree is defined by interpreting the formula $\forall X.\, \phi$ as quantifying over all subsets $X\subseteq\{\dL,\dR\}^\ast$ and  $x\in X$ as the membership relations between vertices $x\in \{\dL,\dR\}^\ast$ and sets $X\subseteq \{\dL,\dR\}^\ast$.

For example the formula  
\[
\exists X.\, \forall x.\, \big(
(x\in X) \Rightarrow (\exists y.\,  y\in X \wedge   \succL(x,y))\big)
\]
expresses the existence of a set of vertices which is closed under taking left children.

As for \SoneS, we can assume without loss of generality that the free variables of a $\StwoS$ formula $\phi$ are all second-order. 

Each set $X\subseteq \{\dL,\dR\}^\ast$ can be identified as a tree $t\in \trees{\Sigma}$ over the alphabet $\Sigma=\{0,1\}$. Hence a $\StwoS$ formula $\phi(X_1,\ldots,X_n)$ defines a collection of $n$-tuples of trees in $\trees{\{0,1\}}$ or, equivalently, a collection of trees in $\trees{\Gamma}$ over the alphabet $\Gamma=\{0,1\}^n$.  The set of $t\in \trees{\Gamma}$ satisfying the formula $\phi$ is called the \emph{language defined by $\phi$} and is denoted by $\lang(\phi)$.

The following theorem and proposition are well known.
\begin{thm}
The theory of $\StwoS$ is decidable. The expressive power of \MSO\ over infinite trees is effectively equivalent with alternating / non-deterministic  Muller automata. The expressive power of deterministic automata is strictly weaker.
\end{thm}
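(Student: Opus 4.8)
This statement packages Rabin's tree theorem together with two standard refinements, and I would prove it along the classical route: translate back and forth between $\StwoS$ formulas and automata, and then separate the deterministic case by an explicit language.

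\textbf{Formulas versus automata, and decidability.} I would argue by induction on $\phi(\vec X)$, interpreted over the product alphabet $\Gamma=\{0,1\}^n$ and using the already-established reduction of first-order variables to singleton second-order ones. The atomic formulas $\succL(x,y)$, $\succR(x,y)$, $x\in X$ are recognized by fixed small automata; $\phi_1\vee\phi_2$ is handled by the union construction (a fresh initial state disjunctively invoking the two component automata); and $\neg\phi$ is handled by complementing the \emph{alternating} automaton for $\phi$, which is immediate — dualize the transition function (swap $\wedge$ and $\vee$) and replace the Muller family $\mathcal F$ by $\powerset(Q)\setminus\mathcal F$ — and is correct because each acceptance game $\mathcal A(t)$ is a Borel game, hence determined, so $\eve$ loses it iff $\adam$ wins iff $\eve$ wins the dual game on $t$. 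The case $\exists X.\,\phi$ is projection $\{0,1\}^{n+1}\to\{0,1\}^n$, transparent only for \emph{nondeterministic} automata; so one invokes the Simulation Theorem — every alternating Muller (equivalently, parity) tree automaton is effectively equivalent to a nondeterministic one — which I would prove by the modern argument: present $\mathcal A(t)$ as a parity game; by positional determinacy of parity games $\eve$ wins iff she has a memoryless winning strategy; let a nondeterministic automaton guess such a strategy together with the induced state-labelling of the input tree; and verify, via determinization (Safra / Muller--Schupp) of the $\w$-word automaton checking ``this branch is winning'', that all branches of the guessed strategy-tree are winning. (Complementation of nondeterministic automata — Rabin's complementation theorem — is subsumed, since the dual of a nondeterministic automaton is alternating and de-alternates back to nondeterministic form.) Conversely, any nondeterministic $\mathcal A$ is captured by a $\StwoS$ formula that quantifies existentially over a second-order encoding of a positional $\eve$-strategy in $\mathcal A(t)$ and states, in plain $\MSO$, that every play consistent with it meets an $\mathcal F$-set infinitely often. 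All constructions are effective, so $\StwoS$ is effectively equivalent to alternating Muller automata, which are effectively equivalent to nondeterministic Muller automata; and since a $\StwoS$ sentence holds on the full binary tree iff its associated nondeterministic parity tree automaton has nonempty language — decidable by solving the finite parity ``emptiness game'' — the theory of $\StwoS$ is decidable.

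\textbf{Deterministic automata are strictly weaker.} Let $\Sigma=\{0,1\}$ and $L=\{t\in\trees{\{0,1\}}\mid t^{-1}(1)\neq\emptyset\}$; it is $\StwoS$-definable (by $\exists x.\,x\in X$) and hence recognized by a nondeterministic automaton (guess a branch down to a $1$). Suppose a deterministic automaton $\mathcal D=(\Sigma,Q,q_\init,\delta,\mathcal F)$ recognized $L$. Since every transition has the form $(\dL,q_\dL)\wedge(\dR,q_\dR)$, in the game $\mathcal D(t)$ only $\adam$ makes choices, namely which branch to descend; hence $\mathcal D$ accepts $t$ iff along every branch the unique run of $\mathcal D$ on $t$ is Muller-accepting, and that run along a branch $\beta$ depends only on the labels of $t$ on $\beta$. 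Now the all-$0$ tree is not in $L$, so $\mathcal D$ rejects it, witnessed by a branch $\beta^*$ along which the run is not Muller-accepting. Let $t_1$ be the tree carrying a single $1$, at whichever child of the root does not lie on $\beta^*$ (the tree is binary, so exactly one child of the root lies on $\beta^*$ and the other is off it). Then $t_1\in L$, yet the labels of $t_1$ along $\beta^*$ are all $0$, so the run of $\mathcal D$ on $t_1$ restricted to $\beta^*$ equals the rejecting sequence from before; hence $\mathcal D$ rejects $t_1$, a contradiction. So no deterministic automaton recognizes $L$.

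\textbf{Main obstacle.} Everything here is routine bookkeeping except the Simulation Theorem (equivalently, Rabin's complementation theorem), whose proof concentrates the real content — positional determinacy of parity games and determinization of $\w$-word automata; that is the hard step.
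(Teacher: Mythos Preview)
The paper does not prove this theorem: it is stated in Section~3.2 as a well-known background result (``The following theorem and proposition are well known''), with no argument given. So there is no paper proof to compare against.

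Your proposal is correct and follows the standard modern route to Rabin's theorem. The inductive translation from $\StwoS$ to alternating automata with trivial complementation (dualize transitions, complement the Muller family, use Borel determinacy of the acceptance game), followed by the Simulation Theorem to handle projection, is exactly the contemporary presentation; your identification of positional determinacy of parity games plus $\omega$-word determinization as the substantive content of the Simulation Theorem is accurate. The separating example for deterministic automata is also correct: the key point, which you state, is that in the paper's sense of ``deterministic'' (every transition $(\dL,q_\dL)\wedge(\dR,q_\dR)$) the run is unique and acceptance means every branch is Muller-accepting, and the state sequence along a fixed branch depends only on the labels on that branch --- so a rejecting branch $\beta^*$ of the all-$0$ tree survives unchanged when a single $1$ is placed off $\beta^*$. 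Nothing is missing.
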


\begin{prop}
For every $\StwoS$ formula $\phi$, the language $\lang(\phi)$ is a $\adelta{2}$ subset of $\trees{\Gamma}$.
\end{prop}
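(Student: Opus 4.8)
The plan is to combine Rabin's automaton characterisation of $\StwoS$ (recalled above: the $\StwoS$-definable tree languages are exactly those recognised by non-deterministic Muller automata) with a direct complexity estimate in the projective hierarchy. Fix $\phi$ and let $\Aa=\langle\Gamma,Q,q_\init,\delta,\mathcal F\rangle$ be a non-deterministic Muller automaton with $\lang(\Aa)=\lang(\phi)$. Identifying $\trees{\Gamma}$ and $\trees{Q}$ with copies of the Cantor space, I would first write acceptance in the explicit form
\[
\lang(\Aa)=\big\{t\in\trees{\Gamma}\ \big|\ \exists\rho\in\trees{Q}.\ \big(\mathrm{Run}(t,\rho)\ \wedge\ \mathrm{Acc}(\rho)\big)\big\},
\]
where $\mathrm{Run}(t,\rho)$ states that $\rho$ is a run of $\Aa$ over $t$ — that is, $\rho(\epsilon)=q_\init$ and for every vertex $v\in\{\dL,\dR\}^\ast$ the pair $(\rho(v\dL),\rho(v\dR))$ is one of the disjuncts of $\delta(\rho(v),t(v))$ — and $\mathrm{Acc}(\rho)$ states that along every branch of $\rho$ the set of states occurring infinitely often belongs to $\mathcal F$.

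Next I would bound each ingredient separately. The relation $\mathrm{Run}(t,\rho)$ is a countable conjunction, indexed by the vertices $v$, of clopen conditions, each depending on only the finitely many coordinates $t(v),\rho(v),\rho(v\dL),\rho(v\dR)$; hence it is a closed, in particular Borel, subset of $\trees{\Gamma}\times\trees{Q}$. For the acceptance part, write $\mathrm{Acc}(\rho)$ as $\forall\pi\in\{\dL,\dR\}^\w.\ \mathrm{AccPath}(\rho,\pi)$, where $\mathrm{AccPath}(\rho,\pi)$ asserts that the sequence of states that $\rho$ assigns to the successive finite prefixes of $\pi$ has its set of infinitely-often-occurring entries in $\mathcal F$. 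Since the map sending $(\rho,\pi)$ to this state sequence in $Q^\w$ is continuous and the Muller (or parity) condition on $Q^\w$ is Borel — it is a Boolean combination of sets of the form ``$q$ occurs infinitely often'' — the relation $\mathrm{AccPath}$ is Borel; therefore $\mathrm{Acc}$, being a universal quantification over the Polish space $\{\dL,\dR\}^\w$ of a Borel set, is $\api{1}$. Consequently $\mathrm{Run}\wedge\mathrm{Acc}$ is $\api{1}$, and $\lang(\Aa)$ — the projection of this set along the Polish space $\trees{Q}$ — is $\asigma{2}$.

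Finally, for the dual bound I would invoke closure of $\StwoS$ under negation: $\trees{\Gamma}\setminus\lang(\phi)=\lang(\neg\phi)$ is again $\StwoS$-definable, hence $\asigma{2}$ by the argument just given, and therefore $\lang(\phi)$ is $\api{2}$. Combining the two bounds yields $\lang(\phi)\in\asigma{2}\cap\api{2}=\adelta{2}$, as claimed. The only point that needs real care is the complexity bookkeeping of the middle paragraph — in particular the observation that a single-branch Muller condition is Borel, so that ``for all branches'' costs exactly one co-analytic quantifier and no more; the remaining steps are routine unwinding of the definitions, together with the harmless identification of $\trees{\Gamma}$ with a finite product of copies of $\trees{\{0,1\}}$ used to pass between tuples of $\{0,1\}$-trees and single $\Gamma$-trees.
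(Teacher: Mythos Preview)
Your argument is correct and is the standard proof of this folklore fact. The paper itself does not prove the proposition --- it is introduced with ``The following theorem and proposition are well known'' and no proof is given --- so there is nothing to compare against; your route (non-deterministic Muller automaton via Rabin, the run relation closed, the per-branch Muller condition Borel so that acceptance is $\api{1}$, projection along $\trees{Q}$ giving $\asigma{2}$, and then closure of $\StwoS$ under negation for the dual $\api{2}$ bound) is exactly the expected argument.
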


It has recently been shown that $\lang(\phi)$ is always Lebesgue measurable and has the Baire property~\cite{GMMS2014}.

\begin{prop}
For every $\StwoS$ formula $\phi$ the language $\lang(\phi)$ has the Baire property and is Lebesgue measurable.
\end{prop}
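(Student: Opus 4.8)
The plan is to reduce the statement to the preceding proposition together with the two measurability results quoted in the related-work discussion. By the previous proposition, $\lang(\phi)$ is a $\adelta{2}$ subset of $\trees{\Gamma}$, and, equivalently, by the ``well known'' theorem recalled just above, it is recognised by some non-deterministic Muller (equivalently parity) tree automaton $\Aa$, i.e.\ $\lang(\phi)=\lang(\Aa)$. I would first emphasise why the pointclass information alone is not enough: as observed in Remark~\ref{remark_measurability}, it is consistent with ZFC that some $\asigma{2}$ set has neither the Baire property nor is Lebesgue measurable, so no argument appealing purely to the topological complexity of $\lang(\phi)$ can work. The genuine content lies in the automaton-theoretic (regular) structure of $\lang(\phi)$, not in its Borel/projective rank.

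For the Baire property I would invoke the theorem of~\cite{hjorth}, which asserts that every language of infinite trees recognised by a Muller/parity automaton has the Baire property; applied to $\Aa$ and combined with $\lang(\phi)=\lang(\Aa)$ this gives the Baire property of $\lang(\phi)$. In essence that argument shows that the Banach--Mazur game $\BM(\lang(\Aa))$, and each of its restrictions to a basic clopen set, reduces to a parity game over a finite arena (the relevant ``memory'' being just the automaton state), hence is determined; determinacy of all these games yields the Baire property through the classical refinement of Theorem~\ref{bm_game_theorem}. For Lebesgue measurability I would invoke the theorem of~\cite{GMMS2014}, which asserts that every regular language of infinite trees is $\mu_t$-measurable; again $\lang(\phi)=\lang(\Aa)$ is regular, so it is Lebesgue measurable. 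Since $\phi$ was arbitrary, the proposition follows. (The formulation ``$\lang(\phi)$ has the Baire property and is Lebesgue measurable'' is understood with respect to the product topology and the measure $\mu_t$ on $\trees{\Gamma}$, which is exactly the setting of the cited results; when $\lang(\phi)$ is later used inside the quantifiers $\qM$ and $\qN$ one uses in addition that $\mu_t$ on $\trees{\{0,1\}^n}$ is the $n$-fold independent product of $\mu_t$ on $\trees{\{0,1\}}$, but this transfer is routine.)

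The only non-routine ingredients are therefore the two cited theorems, and that is where I expect all the real difficulty to be concentrated: the proof in~\cite{hjorth} requires a careful analysis of the acceptance game $\Aa(t)$ and of the interaction between Banach--Mazur moves and automaton runs, while the proof in~\cite{GMMS2014} establishes measurability by a delicate argument specific to regular tree languages (analysing the acceptance games together with an approximation of the value $\mu_t(\lang(\Aa))$). Reproving either from scratch would be a substantial undertaking and is not attempted here. Everything else — the identification of $\StwoS$-definable, $\adelta{2}$, and non-deterministic-Muller-recognisable tree languages, and the observation that the Baire property and Lebesgue measurability of $\lang(\phi)$ transfer without change along these equivalences — is bookkeeping already packaged in the theorems stated above.
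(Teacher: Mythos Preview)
Your proposal is correct and takes essentially the same approach as the paper: the paper does not prove this proposition but simply records it as a citation, writing just before the statement that ``It has recently been shown that $\lang(\phi)$ is always Lebesgue measurable and has the Baire property~\cite{GMMS2014}'' (and, in the Related Work section, crediting~\cite{hjorth} for the Baire property and~\cite{GMMS2014} for measurability, exactly as you do). Your additional commentary---that the $\adelta{2}$ bound alone cannot suffice in ZFC, and the brief gloss on how the cited arguments proceed---is accurate and helpful, but the core of both your argument and the paper's is the same appeal to these external results.
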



\section{S1S extended with the category quantifier}
\label{sec:s1s-category}
In this section we define the syntax and semantics of  the logic $\SoneS$ extended with the \emph{category quantifier} ($\qM$). The syntax of $\SoneSM$ extends that of $\SoneS$ with the new second-order quantifier $\qM$ as follows:

\[
\phi \ \ ::= \ \  x<y \mid x\in X \mid \neg \phi \mid \phi_1 \vee \phi_2 \mid \forall x.\, \phi \mid \forall X.\, \phi \mid \qM X.\, \phi
\]

The semantics of the category quantifier is  specified as follows.

\[
\lang\big(\qM X.\, \phi(X,Y_1,\dots, Y_n)\big)  = \Big\{
w\in\Gamma^\w \mid \text{$\{ w' \in \{ 0,1\}^\w \mid \phi(w',w)\}$ is comeager}\Big\}
\]
where $\Gamma=\{0,1\}^n$. 

In other words, an $n$-tuple $w=(w_1,\ldots,w_n)$, with $w_i\in \{0,1\}^\w$,  satisfies the formula $\qM X.\, \phi(X,Y_1,\ldots, Y_n) $ if  for a large (comeager) collection of $w'\in \{0,1\}^\w$, the extended tuple $(w',w)$ satisfies $\phi(X,Y_1,\dots, Y_n) $. Informally,  $w$ satisfies $\qM X.\, \phi(X,\vv{Y})$ if ``for almost all''  $w' \in \{0,1\}^\w$, the tuple $(w',w)$ satisfies $\phi$. The set defined by $\qM X.\, \phi(X,\vv{Y})$ can be illustrated as in Figure~\ref{figure_large_section}, as the collection of tuples $w$ having a \emph{large} (comeager) section $\phi(X,w)$.

The main result about $\SoneSM$ is the following quantifier elimination theorem which appears to be folklore. We provide a complete proof of this theorem for two reasons: first, instead of just simulating the Banach--Mazur game over $\w$-words, we want to inductively remove all the meager quantifiers from the formula; second, the proof presented here is a good base for understanding the more difficult proof from Section~\ref{sec:s2s-category}.

\begin{thm}
\label{thm:sonesMelim}
For every $\SoneSM$ formula $\phi$ one can effectively construct a semantically equivalent $\SoneS$ formula $\psi$.
\end{thm}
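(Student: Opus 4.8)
The plan is to proceed by induction on the number of occurrences of the meager quantifier $\qM$ in $\phi$. If there are none, then $\phi$ is already an $\SoneS$ formula and there is nothing to do. Otherwise I would pick an \emph{innermost} occurrence; it has the shape $\qM X.\,\phi_0(X,\vec Y)$ where $\phi_0$ is a genuine $\SoneS$ formula. It then suffices to produce an $\SoneS$ formula $\psi(\vec Y)$ semantically equivalent to $\qM X.\,\phi_0(X,\vec Y)$: substituting $\psi$ for that subformula yields a formula equivalent to $\phi$ but with strictly fewer occurrences of $\qM$, so the induction hypothesis applies. Hence from now on $\phi_0 \in \SoneS$; its language $\lang(\phi_0) \subseteq (\{0,1\}\times\Gamma)^\w$ is regular (here $\Gamma = \{0,1\}^n$), and the goal is an $\SoneS$ formula defining $\bigl\{w \in \Gamma^\w \ \bigm|\ \{w' \in \{0,1\}^\w \mid \phi_0(w',w)\} \text{ is comeager}\bigr\}$.

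The first step is the Banach--Mazur characterisation. For a fixed parameter $w$, the section $L_w := \{w' \mid \phi_0(w',w)\}$ is a section of the Borel set $\lang(\phi_0)$, hence Borel, hence has the Baire property; so by Theorem~\ref{bm_game_theorem} (via the reduction of the game on $\{0,1\}^\w$ to the game $\BM(L)$ of Definition~\ref{def:bm-words}) the section $L_w$ is comeager if and only if $\eve$ has a winning strategy in $\BM(L_w)$. It then remains to express ``$\eve$ wins $\BM(L_w)$'' by an $\SoneS$ formula in the parameter $w$.

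To do this I would fix a \emph{deterministic} Muller automaton $\mathcal A = \langle\{0,1\}\times\Gamma, Q, q_\init, \delta, \mathcal F\rangle$ recognising $\lang(\phi_0)$. Substituting a fixed $w\in\Gamma^\w$ turns $\mathcal A$ into a deterministic $\{0,1\}$-automaton with \emph{position-dependent} transitions: at position $i$ the one-step behaviour is the ``profile'' $\tau_i \colon q \mapsto \{\delta(q,(0,w_i)),\,\delta(q,(1,w_i))\}$, which depends only on $w_i$ and ranges over a finite set $\Theta$. The game $\BM(L_w)$ is equivalent, as far as the winner is concerned, to a finitary game determined by the profile sequence $\theta(w) = \tau_0\tau_1\cdots \in \Theta^\w$: a configuration records a position, an automaton state and whose turn it is; a move drives the automaton forward along any path compatible with the profiles; and $\eve$ wins iff the set of states seen infinitely often lies in $\mathcal F$. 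The crucial claim is that the set $R := \{\theta\in\Theta^\w \mid \eve \text{ wins this game}\}$ is $\w$-regular. This is the position-dependent counterpart of the classical analysis of Banach--Mazur games over $\w$-regular languages (Gr{\"a}edel; cf.\ also Staiger's theorem identifying comeager and measure-one regular languages): $\eve$ wins precisely when every ``eventual sink'' --- a set of states that, from some position on, is invariant under the profiles and strongly connected under them --- that the play can be forced into is a member of $\mathcal F$; and a finite automaton can track along $\theta$ the set of states reachable from $q_\init$ together with the strongly connected sub-structures currently available, verifying the required eventual property through a Muller (or parity) acceptance condition. Granting this claim: since $\theta(w)_i$ depends only on $w_i$, the preimage of the $\w$-regular set $R$ under $w\mapsto\theta(w)$ is again $\w$-regular, and this preimage is exactly $\{w \mid L_w \text{ is comeager}\}$; therefore it equals $\lang(\psi)$ for some (effectively obtained) $\SoneS$ formula $\psi(\vec Y)$, the desired equivalent of $\qM X.\,\phi_0(X,\vec Y)$.

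The main obstacle is the $\w$-regularity of $R$: one has to pin down the exact combinatorial characterisation of ``$\eve$ wins the Banach--Mazur game on a position-dependent deterministic Muller automaton'' and check that it is recognised by a finite automaton over the profile alphabet $\Theta$. In particular, as already in the position-independent case, one must not merely find some accepting eventual sink but rule out that $\adam$ forces the play into an \emph{unavoidable} non-$\mathcal F$ sink, so the acceptance condition of the recognising automaton has to encode a universal rather than an existential statement about reachable sinks. The remaining ingredients --- the induction on $\qM$-depth, the fact that substituting equivalent subformulas preserves semantics, the effective interchangeability of $\w$-regular languages and $\SoneS$-definable properties, and the reduction of $\BM(L_w)$ to the finitary profile game --- are routine, and the same skeleton is what the harder tree case of Section~\ref{sec:s2s-category} will build on.
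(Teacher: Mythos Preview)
Your framework matches the paper's exactly: induction on $\qM$-depth, reduction to a single innermost $\qM X.\,\phi_0$ with $\phi_0\in\SoneS$, a deterministic Muller automaton $\Aa$ for $\lang(\phi_0)$, and the Banach--Mazur characterisation of comeagerness. The divergence is only in how you establish that the set of parameters $w$ with $L_w$ comeager is $\omega$-regular.

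You set up the right auxiliary game (position, $\Aa$-state, whose turn) but then propose to analyse its winner combinatorially via ``eventual sinks'', and you rightly flag this as the outstanding obstacle. The paper bypasses that analysis entirely: a single-step reformulation of your game \emph{is} the acceptance game of an alternating Muller automaton $\Bb$ over $\Gamma$, with states $Q\times\{\eve,\adam\}$; from $(q,r)$ on input letter $a$, player $r$ chooses the bit $b\in\{0,1\}$ and the next player $r'$, and $\Bb$ moves to $(\delta^{\Aa}(q,a,b),r')$; the Muller condition makes a player who never hands over control lose, and otherwise $\eve$ wins iff the $Q$-projection of the limit set lies in $\Ff^{\Aa}$. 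A direct strategy simulation (Lemma~\ref{lem:s1s-m-corr}) then yields $\lang(\Bb)=\{w\mid L_w\text{ is comeager}\}$, so regularity comes for free and no winner characterisation is needed. Your sink-based route could presumably be completed, but beware that the Gr\"adel-style characterisation you invoke is stated for a \emph{fixed} transition graph, whereas here the one-step structure varies with position; ``strongly connected invariant set'' is then not a static notion, and the ``sub-structures currently available'' you propose to track depend on the future of $\theta$ rather than its past, so this is not yet an automaton. The alternating-automaton encoding sidesteps all of this and is precisely the template reused in Section~\ref{sec:s2s-category} for trees.
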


\begin{cor}
The theory of $\SoneSM$ is decidable.
\end{cor}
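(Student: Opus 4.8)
The plan is to eliminate $\qM$ by induction on the number of its occurrences in $\phi$. If there are none, $\phi$ is already an $\SoneS$ formula; otherwise I would pick an \emph{innermost} occurrence, i.e.\ a subformula $\qM X.\,\phi_0(X,\vec Y)$ in which $\phi_0$ is $\qM$-free and hence an $\SoneS$ formula, and it then suffices to replace this subformula by a semantically equivalent $\SoneS$ formula and iterate. Treating free first-order variables as singleton second-order variables, I may assume $\vec Y=(Y_1,\dots,Y_n)$, so that $X$ ranges over $\{0,1\}^\w$, the parameter $\vec y$ ranges over $\Gamma^\w$ with $\Gamma=\{0,1\}^n$, and by the semantics of $\qM$ the task becomes: construct an $\SoneS$ formula $\psi(\vec Y)$ with $\lang(\psi)=\{\vec y\in\Gamma^\w \mid L_{\vec y}\text{ is comeager}\}$, where $L_{\vec y}=\{X\in\{0,1\}^\w \mid \phi_0(X,\vec y)\}$.

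The idea is to recognise $\{\vec y \mid L_{\vec y}\text{ comeager}\}$ by an alternating Muller automaton over $\Gamma$ that, while scanning $\vec y$, plays out the Banach--Mazur game $\BM(L_{\vec y})$ of Definition~\ref{def:bm-words}. First I would fix a deterministic Muller automaton $\mathcal D$ over $\{0,1\}\times\Gamma$ with $\lang(\mathcal D)=\lang(\phi_0)$, available since $\SoneS$ is effectively equivalent to deterministic Muller automata. Then I would build an alternating automaton $\mathcal B$ over $\Gamma$ with state set $Q_{\mathcal D}\times\{\eve,\adam\}$: a state $(q,r)$ records the current state $q$ of $\mathcal D$ and the player $r$ who currently owns the Banach--Mazur move, the initial owner being $\adam$. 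On reading $c\in\Gamma$ in state $(q,r)$ the owner $r$ picks a bit $b\in\{0,1\}$ --- the next letter of $X$ --- and a decision $\mathsf{keep}$ or $\mathsf{pass}$, and $\mathcal B$ moves to $(\delta_{\mathcal D}(q,(b,c)),r')$ with $r'=r$ under $\mathsf{keep}$ and $r'=\bar r$ under $\mathsf{pass}$; this branching over the finitely many successors is written as a disjunction when $r=\eve$ and as a conjunction when $r=\adam$, giving $\mathcal B$ the required alternating shape. The Muller acceptance of $\mathcal B$ declares a set $S$ of states accepting iff either every state of $S$ is owned by $\adam$ --- so $\adam$ eventually keeps control forever, never completing a legal finite Banach--Mazur move, and loses --- or $S$ contains states of both owners --- so control changes hands infinitely often and the induced play is a genuine play of $\BM(L_{\vec y})$ --- and in addition the $\mathcal D$-components of $S$ form a set in the Muller condition of $\mathcal D$; a set all of whose states are owned by $\eve$ is rejecting, so symmetrically $\eve$ loses if she keeps control forever.

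The core of the argument is then to show $\lang(\mathcal B)=\{\vec y \mid L_{\vec y}\text{ comeager}\}$, equivalently that $\eve$ wins the acceptance game $\mathcal B(\vec y)$ exactly when $L_{\vec y}$ is comeager. Both games involved are determined: $\mathcal B(\vec y)$ as the acceptance game of an alternating Muller automaton, and $\BM(L_{\vec y})$ by the Banach--Mazur theorem (Theorem~\ref{bm_game_theorem}), since $L_{\vec y}$ is the preimage of the Borel set $\lang(\phi_0)$ under the continuous map $X\mapsto(X,\vec y)$, hence Borel and so has the Baire property. Next I would observe that a winning strategy in $\mathcal B(\vec y)$ for either player never lets that player keep control forever (by the Muller condition that outcome is losing for him), so within each of his turns it prescribes $\mathsf{pass}$ after finitely many bits, and therefore restricts to a strategy for the same player in $\BM(L_{\vec y})$ --- the Banach--Mazur player, who knows $\vec y$ entirely, just replays the $\mathcal B$-strategy feeding it the successive letters of $\vec y$ --- which is again winning, because any consistent play of $\BM(L_{\vec y})$, read as a play of $\mathcal B(\vec y)$ in which both players make infinitely many finite moves, is won by $\eve$ iff the produced word lies in $L_{\vec y}$. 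Hence $\eve$ winning $\mathcal B(\vec y)$ forces $\eve$ to win $\BM(L_{\vec y})$ and $L_{\vec y}$ to be comeager, while $\adam$ winning $\mathcal B(\vec y)$ forces $\adam$ to win $\BM(L_{\vec y})$ and $L_{\vec y}$ not to be comeager; by determinacy of $\mathcal B(\vec y)$ these cases are exhaustive, which gives the equality. Since $\lang(\mathcal B)$ is $\w$-regular it equals $\lang(\psi)$ for an effectively obtained $\SoneS$ formula $\psi(\vec Y)$ (using the effective equivalence of alternating Muller automata with $\SoneS$); substituting $\psi$ for $\qM X.\,\phi_0$ and iterating yields the required $\SoneS$ formula equivalent to $\phi$, and the Corollary follows at once by decidability of $\SoneS$.

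The step I expect to be the main obstacle is making $\mathcal B$ faithfully simulate the Banach--Mazur game and proving it: the Muller acceptance has to be tuned so that exactly the ``legal'' runs count --- those in which both players keep making finite moves --- and one must cope with the superficial mismatch that inside $\mathcal B(\vec y)$ the players only see $\vec y$ letter by letter whereas the Banach--Mazur players see all of $\vec y$ at once. The latter does no harm here, since strategies are transported only in the direction $\mathcal B(\vec y)\to\BM(L_{\vec y})$, where extra foresight can only help, and the converse implication is recovered purely from determinacy of both games rather than by transporting a strategy back.
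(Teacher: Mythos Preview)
Your proposal is correct and follows essentially the same route as the paper: the same inductive reduction to a single innermost $\qM$, the same alternating automaton with state space $Q_{\mathcal D}\times\{\eve,\adam\}$ simulating the Banach--Mazur game (current player picks the next bit of $X$ and whether to keep or pass control), the same Muller condition penalising a player who keeps control forever, and the same correctness argument transporting a winning strategy from $\mathcal B(\vec y)$ to $\BM(L_{\vec y})$ and closing via determinacy. Your final paragraph even anticipates exactly the delicate point the paper spends most of its proof on.
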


The rest of this section is devoted to an inductive proof of Theorem~\ref{thm:sonesMelim}. For that sake it is enough to show how to eliminate one application of the quantifier $\qM$. Consider a formula of $\SoneSM$ of the form $\qM X.\, \psi$, where $\psi$ is a formula of $\SoneS$. We will prove that there exists a formula $\psi'$ of $\SoneS$ that is equivalent to $\qM X.\, \psi$. We start by translating $\psi$ into an equivalent deterministic Muller automaton $\Aa$. Then we apply the following technical proposition that provides us an alternating Muller automaton $\Bb$, recognising $\w$-words that satisfy $\qM X.\, \psi$. Finally, we can turn $\Bb$ into a non-deterministic Muller automaton and express $\lang(\Bb)$ by an $\SoneS$ formula $\psi'$.

\begin{prop}
\label{pro:s1s-m-cons}
Let $\Aa$ be a deterministic Muller automaton over $\w$-words, with the alphabet $\Sigma\times \Gamma$. Consider the language $\qM \lang(\Aa)$ that contains an $\w$-word $\alpha\in\Sigma^\w$ if and only if the set
\begin{equation}
\label{eq:def-of-lang-B-words}
\big\{\beta\in\Gamma^\w\mid (\alpha,\beta)\in\lang(\Aa)\}
\end{equation}
is comeager.

Then, $\qM \lang(\Aa)$ is $\w$-regular and one can effectively construct an alternating Muller automaton $\Bb$ for this language. Additionally, the size of the automaton $\Bb$ is polynomial in the size of $\Aa$.
\end{prop}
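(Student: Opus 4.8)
The plan is to use the Banach--Mazur game characterisation (Definition~\ref{def:bm-words} and the theorem following it): since $\lang(\Aa)$ has the Baire property, the section $\{\beta\in\Gamma^\w\mid(\alpha,\beta)\in\lang(\Aa)\}$ is comeager if and only if Player~\eve wins the game $\BM(L_\alpha)$, where $L_\alpha=\{\beta\mid(\alpha,\beta)\in\lang(\Aa)\}$. So the automaton $\Bb$, on input $\alpha$, should simulate this Banach--Mazur game and accept exactly when \eve has a winning strategy. The key point making this feasible with a \emph{polynomial-size} alternating automaton is that $\Aa$ is \emph{deterministic}: as the Banach--Mazur players jointly build $\beta$ letter by letter (each move extending the current finite word by a strictly longer finite word), the deterministic run of $\Aa$ on $(\alpha,\beta)$ passes through a well-defined sequence of states, and the only information we must retain between rounds is the current state of $\Aa$.

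First I would set up the simulation. A configuration of $\Bb$ carries a state $q$ of $\Aa$ together with a bit recording whose turn it is (\adam\ first, then \eve). A move of the Banach--Mazur game — extending the current word by at least one letter in $\Gamma$ — is simulated by a finite sequence of \emph{choices} inside $\Bb$: the controlling player picks a letter $c\in\Gamma$, $\Aa$ deterministically updates $q\mapsto\delta(q,(\alpha(n),c))$ (reading the appropriate position of $\alpha$), and then the player either chooses to continue picking letters or to end the round and pass control to the opponent. These per-letter choices are encoded in the alternating transition formulas: \eve's choices are $\vee$-branchings, \adam's are $\wedge$-branchings, and the ``who moves'' bit flips when a round ends. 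Since each round must make progress (the extension is \emph{strict}), an honest play produces an infinite word $\beta$; the dishonest possibility — a player stalling forever by never ending a round — must be ruled out by the acceptance condition (see below). The state space of $\Bb$ is $O(|Q|)$ times a constant, so the size bound is immediate.

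Next, the acceptance condition. The infinitely-often-visited states of $\Aa$ along the simulated run determine, by $\Aa$'s Muller condition $\mathcal F$, whether $\beta\in\lang(\Aa)$; this is exactly \eve's winning condition in $\BM(L_\alpha)$. So I would take $\Bb$'s Muller condition to be $\mathcal F$ lifted to $\Bb$'s states (projecting away the turn bit and the intermediate letter-choice bookkeeping, which can be absorbed into $\langle u,e\rangle$-style positions that do not affect $\Inf$). The one subtlety is the stalling play: if a player refuses to end a round, control never returns to the opponent. I would handle this by making any infinite letter-picking phase losing for the player doing the picking — e.g. route such a phase through a dedicated sink that is outside $\mathcal F$ for \eve's phases and inside $\mathcal F$ for \adam's phases — which is harmless because a player stalling forever without committing to an extension is, in the Banach--Mazur game, equivalent to making no legitimate move. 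One should check that this does not let a player \emph{benefit} from stalling; it does not, since in the real game that player simply must move, and forcing a loss for illegitimate stalling only restricts bad behaviour.

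The main obstacle is verifying the \emph{equivalence of the two games} precisely — that \eve wins $\Bb(\alpha)$ iff \eve wins $\BM(L_\alpha)$ iff $L_\alpha$ is comeager — handling the mismatch between ``a move = one strictly longer finite word'' and ``a move = a finite sequence of per-letter choices plus a stop decision.'' A strategy in one game must be translated to the other: a Banach--Mazur strategy that maps $s_n\mapsto s_{n+1}$ is simulated by picking the letters of $s_{n+1}$ beyond $s_n$ one at a time and then stopping; conversely, a $\Bb$-strategy's letter-and-stop choices along any maximal finite phase read off a strict extension, provided we have argued (via the acceptance condition) that \eve never needs to, and \adam\ never gains by, stalling. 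Once this correspondence is in place, the theorem after Definition~\ref{def:bm-words} gives $\lang(\Bb)=\qM\lang(\Aa)$, and in particular $\qM\lang(\Aa)$ is $\w$-regular; turning $\Bb$ into a non-deterministic (and then the ambient logic's) form is standard.
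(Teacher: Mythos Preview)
Your proposal is correct and follows essentially the same approach as the paper: states of $\Bb$ are pairs $(q,r)\in Q\times\{\eve,\adam\}$, each Banach--Mazur move is unrolled into a sequence of single-letter choices by player $r$ followed by a ``pass'' decision, and a player who never passes loses. The only cosmetic difference is that the paper encodes the anti-stalling clause directly in the Muller condition on $Q\times\{\eve,\adam\}$ (a set $F$ is accepting iff either $F\subseteq Q\times\{\adam\}$, or $F$ meets both copies and its $Q$-projection lies in $\Ff^{\Aa}$), rather than introducing sink states; and for correctness the paper argues only the direction ``$P$ wins $\Bb(\alpha)\Rightarrow P$ wins $\BM$'' and appeals to determinacy of $\Bb(\alpha)$ for the converse.
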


Fix a deterministic automaton $\Aa$ over a product alphabet $\Sigma\times\Gamma$. Let $Q$ be the set of states of $\Aa$ and $\fun{\delta^{\Aa}}{Q\times \Sigma\times\Gamma}{Q}$ be the transition function: for a triple $(q,a,b)\in Q\times \Sigma\times\Gamma$ it assigns the successive state $\delta^{\Aa}(q,a,b)\in Q$ that should be reached after reading a letter $(a,b)$ from the state $q$.

We will construct an alternating automaton $\Bb$ over the alphabet $\Sigma$. Intuitively, $\Bb$ will simulate the Banach--Mazur game over $\qM \lang(\Aa)$ in which a new $\w$-word $\beta\in\Gamma^\w$ is constructed. During that play, the automaton $\Aa$ will be run over the pair of $\w$-words $(\alpha,\beta)\in(\Sigma\times\Gamma)^\w$ to verify if this pair belongs to $\lang(\Aa)$. To achieve this, the set of states of $\Bb$ will keep track of both, the current state of $\Aa$ and the player in charge of construction of $\beta$. More formally, let the set of states of $\Bb$ be $Q\times\{\eve,\adam\}$.



\usetikzlibrary{decorations.pathreplacing}
\usetikzlibrary{automata,positioning}
\usetikzlibrary{decorations.pathmorphing}
\usetikzlibrary{positioning}

\tikzstyle{ubrace} = [draw, thick, decoration={brace, mirror, raise=0.0cm}, decorate,
    every node/.style={anchor=north, yshift=-0.1cm}]
\tikzstyle{rbrace} = [draw, thick, decoration={brace, mirror, raise=0.0cm}, decorate,
    every node/.style={anchor=west, xshift= 0.1cm}]

\tikzstyle{obrace} = [draw, thick, decoration={brace, raise=0.0cm}, decorate,
    every node/.style={anchor=south, yshift= 0.1cm}]
\tikzstyle{lbrace} = [draw, thick, decoration={brace, raise=0.0cm}, decorate,
    every node/.style={anchor=east, xshift=-0.1cm}]

\usetikzlibrary{shapes.geometric,arrows,calc,decorations.pathmorphing}

\tikzstyle{ePls} = [draw, diamond, minimum width=20pt, minimum height=25pt]
\tikzstyle{aPls} = [draw, rectangle,minimum width=18pt, minimum height=18pt]
\tikzstyle{fake} = [inner sep=60pt]
\tikzstyle{trans} = [draw, >=latex, ->, every node/.style={auto, scale=0.8}, pos=0.8]
\tikzstyle{state} = [scale=0.8]

\tikzstyle{dot} = [draw,shape=circle,fill, minimum size=1mm, inner sep=0pt,outer sep=0pt]

\newcommand{\Qx}{0.8}
\newcommand{\Qy}{1.8}

\tikzstyle{flow}  = [inner sep=0cm, node distance=0cm and 0cm]

\newcommand{\bigDots}[1]{
	\node[align=center, anchor=center, flow, scale=2.5, yshift=3pt] at #1 {$\vdots$};
}

\newcommand{\tikzEvalFloat}[2]{\pgfmathparse{#2}{\global\edef#1{\pgfmathresult}}}

\newcommand{\iniTranF}[1]{
	\tikzEvalFloat{\x}{#1}
	\node[fake] (q\x1a1) at ($(q\x) + (0,-\x+2) + (3, +0.5)$) {};
	\node[fake] (q\x1a2) at ($(q\x) + (0,-\x+2) + (3, -0.5)$) {};

	\draw (q\x) edge[draw]  (q\x1a1);
	\draw (q\x) edge[draw]  (q\x1a2);
	\bigDots{(1.5,+\Qy+\Quy - 1.2*\x+0.6-\x*0.3+0.6)}
}

\newcommand{\iniTranR}[2]{
	\tikzEvalFloat{\x}{#1}
	\node[#2Pls] (q\x1a1) at ($(q\x) + (-0.4,0) + (3, +0.5)$) {};
	\node[#2Pls] (q\x1a2) at ($(q\x) + (-0.4,0) + (3, -0.5)$) {};

	\draw (q\x) edge[trans] node {$a$} (q\x1a1);
	\draw (q\x) edge[trans] node {$a'$} (q\x1a2);
}

\newcommand{\whole}[3]{
\tikzEvalFloat{\Quy}{#1}

\draw (0-\Qx,-\Qy+\Quy) rectangle (0+\Qx, \Qy+\Quy);
\draw (9-\Qx,-\Qy+\Quy) rectangle (9+\Qx, \Qy+\Quy);

\foreach \x in {1,2,3} {
	\node[dot] (q\x) at (0.4,+\Qy+\Quy-\x*1.2 + 0.6) {};
	\node[state] at ($(q\x) + (-0.55, 0.0)$) {$(q_\x, #3)$};

	\node[dot] (p\x#2) at (8.6, +\Qy+\Quy-\x*1.2 + 0.6) {};
	\node[state] at ($(p\x#2) + (+0.55, 0.0)$) {$(q_\x, #3)$};
}
}

\newcommand{\vvY}{1.2}

\begin{figure}
\centering
\begin{tikzpicture}
\whole{+3.2}{e}{\eve}

\foreach \x in {1,3} {
	\iniTranF{\x};
}

\foreach \x in {2} {
	\iniTranR{\x}{e}
}

\node[fake] (ea1b1) at ($(q21a1) + (3, +1.5)$) {};
\node[fake] (ea1b2) at ($(q21a1) + (3, +0.5)$) {};

\node[ePls] (ea2b1) at ($(q21a2) + (3, +\vvY)$) {};
\node[ePls] (ea2b2) at ($(q21a2) + (3, -\vvY)$) {};

\draw (q21a2) edge[trans] node {$b$} (ea2b1);
\draw (q21a2) edge[trans] node {$b'$} (ea2b2);

\draw (q21a1) edge[draw] (ea1b1);
\draw (q21a1) edge[draw] (ea1b2);

\bigDots{(4.1,+\Qy+\Quy-1.0)}

\whole{-3.2}{a}{\adam}

\foreach \x in {2,3} {
	\iniTranF{\x};
}

\foreach \x in {1} {
	\iniTranR{\x}{a}
}

\node[aPls] (aa1b1) at ($(q11a1) + (3, +\vvY)$) {};
\node[aPls] (aa1b2) at ($(q11a1) + (3, -\vvY)$) {};

\node[fake] (aa2b1) at ($(q11a2) + (3, -0.5)$) {};
\node[fake] (aa2b2) at ($(q11a2) + (3, -1.5)$) {};

\draw (q11a1) edge[trans] node {$b$} (aa1b1);
\draw (q11a1) edge[trans] node {$b'$} (aa1b2);

\draw (q11a2) edge[draw] (aa2b1);
\draw (q11a2) edge[draw] (aa2b2);

\bigDots{(4.1,-\Qy+\Quy+1.0+1.2)}

\node[fake] (ea1b1) at ($(aa1b2) + (3, +0.5)$) {};
\node[fake] (ea1b2) at ($(aa1b2) + (3, -0.5)$) {};

\draw (aa1b2) edge[draw] (ea1b1);
\draw (aa1b2) edge[draw] (ea1b2);

\bigDots{(7.1,-\Qy+-3.0+2.0)}

\node[fake] (ea1b1) at ($(ea2b1) + (3, +0.5)$) {};
\node[fake] (ea1b2) at ($(ea2b1) + (3, -0.5)$) {};

\draw (ea2b1) edge[draw] (ea1b1);
\draw (ea2b1) edge[draw] (ea1b2);

\bigDots{(7.1,+\Qy+3.0-0.9)}

\draw (aa1b1) edge[trans, pos=0.6] node{$\eve$} (p3e);
\draw (aa1b1) edge[trans, pos=0.6] node{$\adam$} (p3a);

\draw (ea2b2) edge[trans, pos=0.6] node{$\eve$} (p2e);
\draw (ea2b2) edge[trans, pos=0.6] node{$\adam$} (p2a);

\draw (9.0, +3.0+\Qy) edge[trans, =>, double, bend right] (0.0, +3.0+\Qy);
\draw (9.0, -3.0-\Qy) edge[trans, =>, double, bend left] (0.0, -3.0-\Qy);

\node[scale=0.8] at (11.5, +3.2) {$\delta^{\Aa}(q_2, a', b')=q_2$};
\node[scale=0.8] at (11.5, +2.0) {$\delta^{\Aa}(q_1, a, b)=q_3$};

\node[scale=0.8] at (11.5, -3.2) {$\delta^{\Aa}(q_2, a', b')=q_2$};
\node[scale=0.8] at (11.5, -4.4) {$\delta^{\Aa}(q_1, a, b)=q_3$};

\draw (-1.1, +3.2-\Qy) edge[lbrace] node[scale=0.8]{$Q\times\{\eve\}$} ++(0.0,2.0*\Qy);

\draw (-1.1, -3.2-\Qy) edge[lbrace] node[scale=0.8]{$Q\times\{\adam\}$} ++(0.0,2.0*\Qy);

\draw (-0.75, -7.5) edge[ubrace] node[scale=0.7]{transition starts in $(q,r)$} ++(1.5,0.0);
\draw (0.25, -6.5) edge[ubrace] node[scale=0.7]{letter $a\in\Sigma$ is given} ++(2.5,0.0);
\draw (3.25, -7.0) edge[ubrace] node[scale=0.7]{Player $r$ chooses $b\in\Gamma$} ++(2.5,0.0);
\draw (6.25, -6.5) edge[ubrace] node[scale=0.7]{Player $r$ chooses $r'\in\{\eve,\adam\}$} ++(2.5,0.0);
\draw (8.25, -7.5) edge[ubrace] node[scale=0.7]{transition ends in $\big(\delta^{\Aa}(q,a,b), r'\big)$} ++(1.5,0.0);
\end{tikzpicture}
\caption{The structure of the automaton $\Bb$. The transitions are depicted with the convention that choices of \eve (i.e., ${\lor}$) are represented by diamonds; and the choices of \adam (i.e., ${\land}$) are represented by squares. The states of the automaton are drawn twice, on the left and right edge of the picture for the sake of readability, the double arrows show that we actually go back to the left copy. The phases of each transition (i.e.~rounds of the acceptance game) are explained by the braces below the picture.}
\label{fig:auto-for-words}
\end{figure}
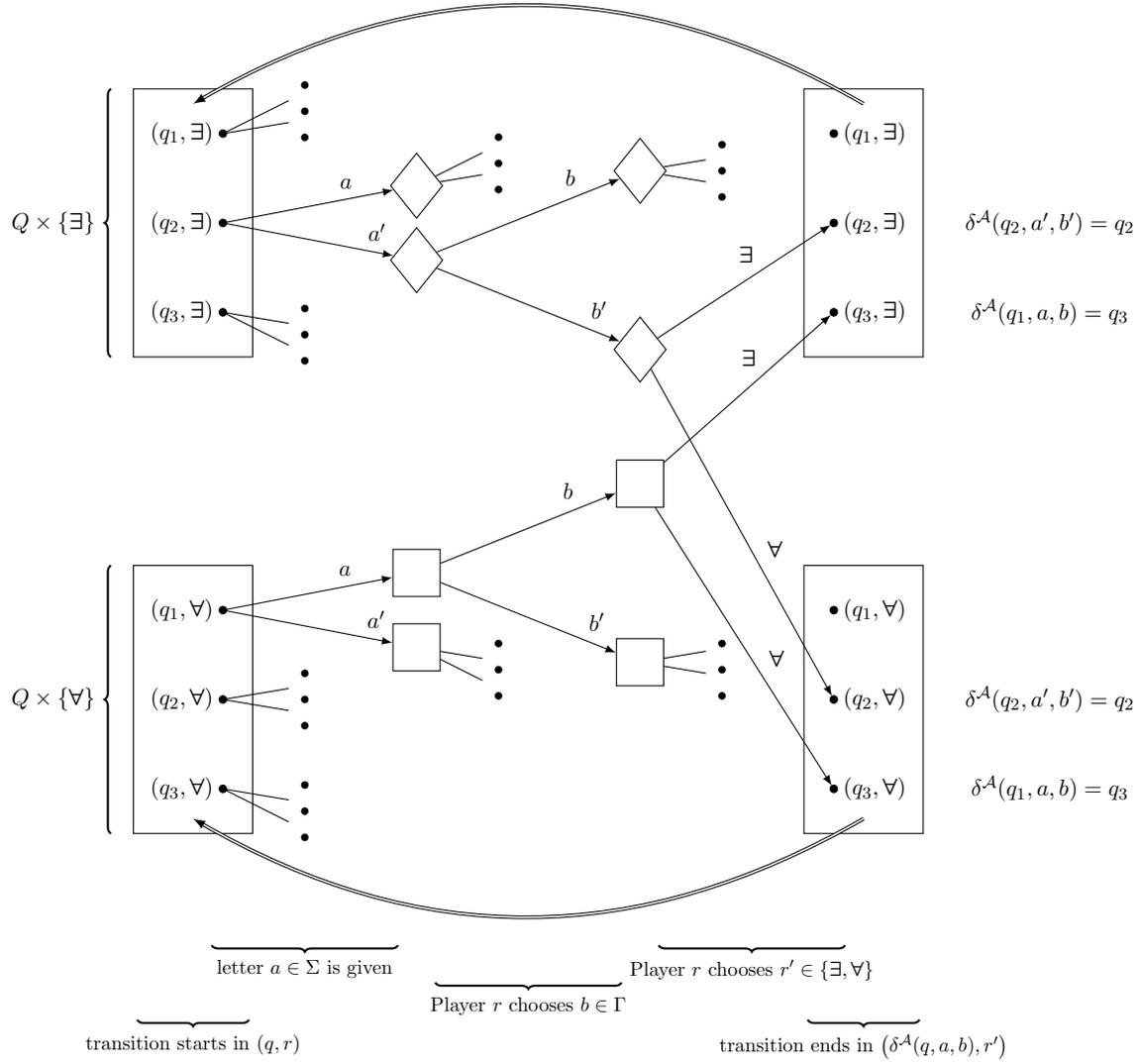

The initial state of $\Bb$ is $(q_\init, \adam)$. Consider a state $(q,r)$ of $\Bb$ and a letter $a\in\Sigma$. The transition $\delta^{\Bb}(q,r,a)$ is supposed to allow the player $r$ to choose a letter $b\in\Gamma$ and the next player $r'\in\{\eve,\adam\}$. After that choices are done, the successive state of the automaton should be $(q,r')$ where $q=\delta^{\Aa}(q,a,b)$. The structure of the automaton $\Bb$ is depicted in Figure~\ref{fig:auto-for-words}. More formally, we achieve that by putting:
\begin{equation}
\delta^{\Bb}(q,r,a)\eqdef\XOper_{b\in\Gamma}\, \XOper_{r'\in\{\eve,\adam\}}\, (q,r'),
\end{equation}
where ${\xOper}$ is ${\vee}$ if $r=\eve$ and ${\wedge}$ otherwise, and $\delta^{\Aa}(q,a,b)=q$.

The acceptance condition of the automaton $\Bb$ will be the following Muller condition: a set of states $F\subseteq Q\times\{\eve,\adam\}$ belongs to $\Ff^{\Bb}$ if either:
\begin{enumerate}
\item all the states $(q,r)\in F$ satisfy $r=\adam$ (i.e.~from some point on $\adam$ has always chosen $r'=r=\adam$,
\item or $F$ contains both states of the form $(q,\eve)$ and $(q',\adam)$; and additionally the set of states $\{q\in Q\mid \text{$(q,r)\in F$ for some $r\in\{\eve,\adam\}$}\}$ is accepting for $\Aa$ (i.e.~belongs to $\Ff^{\Aa}$).
\end{enumerate}
In other words, if any of the players from some point on chooses $r'=r$ then he or she looses. Otherwise, the game is won by \eve if the sequence of states visited is accepting in $\Aa$.

The following lemma proves correctness of our construction and thus concludes the proof of Theorem~\ref{thm:game-meager}.

\begin{lem}
\label{lem:s1s-m-corr}
The automaton $\Bb$ accepts an $\w$-word $\alpha\in\Sigma^\w$ if and only if the language from~\eqref{eq:def-of-lang-B-words} is comeager.
\end{lem}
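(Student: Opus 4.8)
The plan is to reduce acceptance in $\Bb$ to the Banach--Mazur characterisation of comeagerness. Fix $\alpha\in\Sigma^\w$ and write $L_\alpha$ for the section $\{\beta\in\Gamma^\w\mid(\alpha,\beta)\in\lang(\Aa)\}$ appearing in~\eqref{eq:def-of-lang-B-words}. First I would observe that $L_\alpha$ has the Baire property: $\Aa$ is a deterministic Muller automaton, so $\lang(\Aa)$ is $\w$-regular, hence Borel, and $L_\alpha$ is its preimage under the continuous map $\beta\mapsto(\alpha,\beta)$, hence Borel. By the Banach--Mazur characterisation of comeagerness (Theorem~\ref{bm_game_theorem}, in its $\w$-word form given by Definition~\ref{def:bm-words}), $L_\alpha$ is comeager if and only if Player~\eve has a winning strategy in $\BM(L_\alpha)$. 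Since $\Bb$ accepts $\alpha$ if and only if \eve has a winning strategy in the acceptance game $\Bb(\alpha)$, the whole lemma reduces to the claim that \eve wins $\Bb(\alpha)$ if and only if \eve wins $\BM(L_\alpha)$.

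Next I would make precise the correspondence between plays. In a play of $\Bb(\alpha)$, at each step the player currently in control picks a letter $b\in\Gamma$ and the next controller, while the run of $\Aa$ embedded in the automaton advances on $(\alpha,\beta)$ with $\beta$ the concatenation of the chosen letters. Reading off the winning condition $\Ff^{\Bb}$: if control changes only finitely often, the player who keeps control from some point on loses (so by clause~(1) \eve wins when that player is \adam); and if control changes infinitely often, \eve wins exactly when the run of $\Aa$ on $(\alpha,\beta)$ is accepting, i.e.\ when $\beta\in L_\alpha$ (clause~(2)). On the other side, a legal play of $\BM(L_\alpha)$ in which \adam plays $s_1\prec s_3\prec\cdots$ and \eve plays $s_2\prec s_4\prec\cdots$ maps to a play of $\Bb(\alpha)$ with infinitely many switches by having each player emit the freshly added letters of their move one at a time and then yield control; this is a winner-preserving bijection between legal $\BM(L_\alpha)$-plays and $\Bb(\alpha)$-plays with infinitely many switches, with $\beta=\bigcup_n s_n$.

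With this dictionary in hand I would translate strategies in both directions. From a winning strategy $\sigma$ for \eve in $\BM(L_\alpha)$, \eve plays $\Bb(\alpha)$ by replaying $\sigma$: whenever she is in control she regards the letters \adam has emitted so far as the latest \adam-move $s_{2j+1}$, emits the new letters of $\sigma(s_1,\dots,s_{2j+1})$ one by one, and yields; if \adam ever stops switching she wins for free by clause~(1), and otherwise the induced play is a legal $\BM(L_\alpha)$-play consistent with $\sigma$, so $\beta\in L_\alpha$ and she wins by clause~(2). Conversely, from a winning strategy $\tau'$ for \eve in $\Bb(\alpha)$ I would first note that $\tau'$ never lets \eve keep control forever, since that would be an immediate loss; so each of \eve's turns ends after finitely many (at least one) letters. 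Then \eve plays $\BM(L_\alpha)$ by internally simulating $\Bb(\alpha)$ with \adam feeding in $s_1,s_3,\dots$ (yielding after each) and \eve following $\tau'$, taking $s_{2j+2}$ to be the prefix of $\beta$ at the moment $\tau'$ yields for the $(j{+}1)$-st time; this is a strict extension of $s_{2j+1}$. The resulting $\BM(L_\alpha)$-play corresponds to a $\Bb(\alpha)$-play with infinitely many switches consistent with $\tau'$, which \eve wins, so by clause~(2) $\beta=\bigcup_n s_n\in L_\alpha$. Chaining the equivalences then gives the lemma.

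I expect the main obstacle to be the careful treatment of the plays with only finitely many control changes — the ``forfeit'' convention encoded in $\Ff^{\Bb}$. One has to check three things: that a winning strategy for a player never forfeits (never keeps control forever), that the opponent's forfeit really is a win, and that the bookkeeping of when control is yielded makes each simulated Banach--Mazur move a strict prefix extension so that $\bigcup_n s_n$ is a genuine $\w$-word. Everything else — the Borelness of $L_\alpha$ and the mechanical back-and-forth between plays — I expect to be routine.
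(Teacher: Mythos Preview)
Your proposal is correct and follows essentially the same approach as the paper: both reduce to the Banach--Mazur game and translate strategies between $\Bb(\alpha)$ and $\BM(L_\alpha)$, with the forfeit clauses of $\Ff^{\Bb}$ disposing of plays with only finitely many control switches. The only cosmetic difference is that the paper argues once for the generic winner $P\in\{\eve,\adam\}$ of $\Bb(\alpha)$ (implicitly using determinacy of the Muller game $\Bb(\alpha)$ to get the converse), whereas you carry out the two directions for $\eve$ explicitly.
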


Informally, the core of the above lemma is the observation that for a fixed $\alpha\in\Sigma^\w$ there is a bijection between:
\begin{itemize}
\item the plays of the Banach--Mazur game over the language~\eqref{eq:def-of-lang-B-words} (denoted $\BM$);
\item the plays of the acceptance game of $\Bb$ over $\alpha$, in which infinitely many times $r'\neq r$.
\end{itemize}

To make this relationship more formal, we will show how to simulate plays of $\BM$ using the acceptance game $\Bb(\alpha)$. Consider an $\w$-word $\alpha\in\Sigma^\w$ and assume that $P\in\{\eve,\adam\}$ wins the acceptance game of $\Bb$ over $\alpha$ (i.e.~$\alpha\in\lang(\Bb)$ iff $P=\eve$). We will prove that $P$ wins $\BM$.

Fix a strategy $\sigma_P$ of $P$ in the acceptance game $\Bb(\alpha)$. We will inductively describe how $P$ can simulate this strategy in the game $\BM$. Assume that we are at the beginning of the $n$th round of $\BM$ and the current configuration of this game is $(s_n,r_n)$.

\begin{clm}
\label{cl:s1s-mea-inv}
During the simulated play of $\BM$ the following invariant will be preserved: there exists a finite play of $\Bb(\alpha)$, that is consistent with $\sigma_P$ and in this play, after reading the first $|s_n|$ symbols of $\alpha$ the reached configuration has the form $\langle |s_n|, (q_n,r_n)\rangle$ for some $q_n\in Q$ (and $r_n$ as above).
\end{clm}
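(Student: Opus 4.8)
The plan is to prove the invariant by induction on $n$, carrying out this induction so that it simultaneously \emph{defines} $P$'s move in each round of $\BM$. Here $\sigma_P$ is a \emph{winning} strategy of $P$ in $\Bb(\alpha)$ (one exists since, by assumption, $P$ wins that game). Besides the configuration $(s_n,r_n)$ of $\BM$ I keep a finite play $\pi_n$ of the acceptance game $\Bb(\alpha)$ that is consistent with $\sigma_P$ and ends in a configuration $\langle |s_n|,(q_n,r_n)\rangle$; here $q_n$ will always be the state $\Aa$ reaches after reading the length-$|s_n|$ word $\big(\alpha(i),s_n(i)\big)_{i<|s_n|}$. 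The base case $n=0$ is immediate: $(s_0,r_0)=(\epsilon,\adam)$, $|s_0|=0$, and for $\pi_0$ we take the one-configuration play sitting at the initial configuration $\langle 0,(q_\init,\adam)\rangle$ of $\Bb(\alpha)$, which is vacuously consistent with $\sigma_P$.

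For the inductive step recall that one transition of $\Bb$ from a configuration $\langle m,(q,r)\rangle$ on the letter $\alpha(m)$ amounts to the player $r$ picking a pair $(b,r')\in\Gamma\times\{\eve,\adam\}$, after which the run moves to $\langle m+1,(\delta^{\Aa}(q,\alpha(m),b),r')\rangle$. Assume the invariant holds at the start of round $n$. If $r_n\neq P$, then in $\BM$ the opponent of $P$ plays a strict extension $s_{n+1}=s_n\,b_0b_1\cdots b_{k-1}$ with $k\ge 1$ and $b_j\in\Gamma$; I extend $\pi_n$ by $k$ transitions reading $\alpha(|s_n|),\dots,\alpha(|s_n|+k-1)$, letting the player in charge (which is $\bar P$ at each of them) pick $(b_j,\bar P)$ for $j<k-1$ and $(b_{k-1},P)$ at the last transition. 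The resulting $\pi_{n+1}$ ends in $\langle |s_{n+1}|,(q_{n+1},P)\rangle$ with $q_{n+1}$ obtained from $q_n$ by running $\delta^{\Aa}$ on $(\alpha(|s_n|+j),b_j)_{j<k}$; it is still consistent with $\sigma_P$ because no move of $P$ was added, and $r_{n+1}=\bar{r_n}=P$, so the invariant is preserved.

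The interesting case is $r_n=P$, where $\sigma_P$ is used to define $P$'s move. Starting from the final configuration $\langle |s_n|,(q_n,P)\rangle$ of $\pi_n$ I run $\Bb(\alpha)$ forward: on reading $\alpha(|s_n|+j)$ the player in charge is $P$, so $\sigma_P$ prescribes a pair $(b_j,r'_j)$; I continue while $r'_j=P$ and stop with $k=j+1$ the first time $r'_j=\bar P$. Then $P$ plays $s_{n+1}=s_n\,b_0\cdots b_{k-1}$ in $\BM$ (strict since $k\ge 1$), and $\pi_{n+1}$ is $\pi_n$ followed by these $k$ transitions; it is consistent with $\sigma_P$ by construction, ends in $\langle |s_{n+1}|,(q_{n+1},\bar P)\rangle$, and $r_{n+1}=\bar{r_n}=\bar P$, so the invariant is preserved. \emph{The main obstacle is to show this process terminates} --- that $\sigma_P$ cannot keep choosing $r'_j=P$ forever. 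If it did, $\sigma_P$ would determine an infinite play $\pi_\infty$ of $\Bb(\alpha)$ extending $\pi_n$ in which, from position $|s_n|$ on, every transition is owned by $P$ and leaves the second coordinate equal to $P$; since the opponent never moves past $\pi_n$, this $\pi_\infty$ is consistent with $\sigma_P$, yet $\emptyset\neq\Inf(\pi_\infty)\subseteq Q\times\{P\}$. By the definition of $\Ff^{\Bb}$ such a set satisfies clause~(1) when $P=\adam$ (so $\pi_\infty$ is won by $\eve$) and satisfies neither clause~(1) nor clause~(2) when $P=\eve$ (so $\pi_\infty$ is won by $\adam$); either way $P$ loses $\pi_\infty$, contradicting that $\sigma_P$ is winning. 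This is exactly where the two clauses of the Muller condition of $\Bb$, designed to penalise a player who eventually stops relinquishing control, are needed.

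Although it lies past the claim, this invariant is what drives the rest of Lemma~\ref{lem:s1s-m-corr}: passing to the limit, $\beta=\bigcup_n s_n\in\Gamma^\w$ and $\pi_\infty=\bigcup_n\pi_n$ is an infinite play of $\Bb(\alpha)$ consistent with $\sigma_P$ in which control switches infinitely often, so clause~(2) of $\Ff^{\Bb}$ applies and $\pi_\infty$ is won by $\eve$ exactly when the unique run of the deterministic $\Aa$ on $(\alpha,\beta)$ is accepting, i.e.\ when $(\alpha,\beta)\in\lang(\Aa)$. Hence the simulated strategy forces the $\BM$-play to build a $\beta$ with the outcome $P$ wants, so $P$ wins $\BM$; combined with determinacy of $\BM$ and Theorem~\ref{bm_game_theorem} this yields Lemma~\ref{lem:s1s-m-corr}.
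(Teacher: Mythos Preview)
Your proof is correct and follows essentially the same approach as the paper: an induction on $n$ that simultaneously defines $P$'s moves in $\BM$, with the same two-case split on whether $r_n=P$, the same mechanism of having $\bar P$ feed the opponent's chosen letters while keeping $r'=\bar P$ until the last step, and the same termination argument via the Muller condition of $\Bb$ penalising a player who never relinquishes control. Your treatment is in fact slightly more explicit than the paper's (you spell out the base case and the case analysis on $P\in\{\eve,\adam\}$ for why a non-terminating run would be lost), but the underlying argument is identical.
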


We will now show how to play one round of $\BM$ while still preserving the invariant from Claim~\ref{cl:s1s-mea-inv}. Consider the following two possibilities for $r_n\in\{\eve,\adam\}$.

\paragraph*{\bf The case $r_n\neq P$.} In that case it is the opponent of $P$ (denoted $\bar{P}$) that chooses the successive word $s_{n+1}$ in $\BM$. Assume that $\bar{P}$ has played $s_{n+1}=s_n w$ for some non-empty word $w=b_0b_1\ldots b_k$. Consider the $k{+}1$ rounds of $\Bb(\alpha)$ starting from $\langle |s_n|, (q_n,r_n)\rangle$ in which $\bar{P}$ plays the successive letters $b_0,b_1,\ldots,b_k$. In the rounds $0,1,\ldots,k-1$ he chooses $r'=r=\bar{P}$ and in the last round he chooses $r'\neq r$ such that $r' = P$. After that $k{+}1$ rounds the simulated play of $\Bb(\alpha)$ has reached a configuration of the form $\langle |s_{n+1}|, (q_{n+1},r_{n+1})\rangle$ with $r_{n+1}=P$, and the invariant is satisfied.

\paragraph*{\bf The case $r_n= P$.} Now consider the more involved case where this is Player~$P$ who chooses a word $s_{n+1}$. Consider the successive rounds of the game $\Bb(\alpha)$ after the round that ended in $\langle |s_n|, (q_n,r_n)\rangle$. Since $r_n=P$ we know that it will be Player~$P$ who will propose successive letters $b_0,b_1,\ldots$ and players $r'$ in each round. Let $b_0,b_1,\ldots$ be the finite or infinite sequence of letters played by $P$ according to $\sigma_P$ until $P$ chooses to set $r'\neq r = P$ for the first time. Since $\sigma_P$ is winning, we know that $P$ cannot keep $r'=r=P$ forever, otherwise $P$ would lose the infinite play. Thus, the sequence $b_0,b_1,\ldots,b_k$ is finite and after playing $b_k$ Player~$P$ has chosen $r'\neq r= P$. Let $s_{n+1}=s_n b_0 b_1\cdots b_k$ be the response of $P$ in the game $\BM$. After that, the configuration of $\BM$ is $(s_{n+1},\bar{P})$ and the configuration of $\Bb(\alpha)$ is of the form $\langle |s_{n+1}|, (q_{n+1}, r_{n+1})\rangle$ with $r_{n+1}=\bar{P}$, so the invariant is satisfied.

This finishes the proof of Claim~\ref{cl:s1s-mea-inv}. What remains is to prove that the described simulation strategy is winning for $P$ in $\BM$.

\paragraph*{\bf Why $P$ wins.} We need to prove that the produced word $\beta=\bigcup_{n\in\w} s_n$ satisfies $(\alpha,\beta)\in\lang(\Aa)$ if and only if $P=\eve$. But we know that the strategy $\sigma_P$ is winning for $P$, therefore the simulated play of $\Bb(\alpha)$ must be winning for $P$. Since we have guaranteed that infinitely many times $r'\neq r$, it means that the sequence of visited states of $\Aa$ is accepting iff $P=\eve$. Thus, $(\alpha,\beta)\in\lang(\Aa)$ iff $P=\eve$, what means that $P$ has won the considered play of $\BM$. 

This concludes the proof of Lemma~\ref{lem:s1s-m-corr} as well as Theorem~\ref{thm:sonesMelim}.


\section{S1S extended with the measure quantifier}
\label{sec:s1s-measure}
In this section we define the syntax and semantics of  the logic $\SoneS$ extended with the \emph{measure quantifier} ($\qN$). The syntax of $\SoneSN$ extends that of $\SoneS$ with the new second-order quantifier $\qN$ as follows:
\[
\phi \ \ ::= \ \  x<y \mid x\in X \mid \neg \phi \mid \phi_1 \vee \phi_2 \mid \forall x.\, \phi \mid \forall X.\, \phi \mid \forall^{=1} X.\, \phi
\]
The semantics of the measure is  specified as follows.
\[
\lang\Big(
\qN X.\, \phi(X,Y_1,\dots, Y_n)\Big)  = \Big\{
w\in\Gamma^\w \mid
	\mu_w\big( \{ w^\prime \in \{ 0,1\}^\w \mid  \phi(w^\prime,w) \textnormal{ holds} \} \big) = 1      \Big\}
\]
where $\Gamma=\{0,1\}^n$ and $\mu_\w$ is the Lebesgue measure on $\{0,1\}^\omega$.

In other words, a $n$-tuple  $w=(w_1,\dots,w_n)$, with $w_i\in \{0,1\}^\w$,  satisfies the formula $\forall^\ast X.\, \phi(X,Y_1,\dots, Y_n) $ if  for a large (having measure $1$) collection of $w^\prime\in \{0,1\}^\w$, the extended tuple $(w^\prime,w)$ satisfies $\phi(X,Y_1,\dots, Y_n) $.

Once again, informally,  $w$ satisfies $\qN X.\, \phi(X,\vv{Y})$ if ``for almost all''  $w^\prime \in \{0,1\}^\omega$, the tuple $(w^\prime,w)$ satisfies $\phi$. In the logic $\SoneSN$ the  ``almost all'' means for all but a negligible (having measure $0$) set.

The main result of this section is that, unlike the case of $\SoneSM$, the theory of $\SoneSN$ is undecidable (Theorem~\ref{undec:s1s:measure} below). This result was first proved in~\cite{MM2016LFCS} by means of a reduction to the emptiness problem of \emph{probabilistic B{\"u}chi automata} which is known to be undecidable from~\cite{BGB2010}.

Here we propose a different proof based on the recent result of~\cite{bpt2016} stating the logic $\SoneSU$, introduced by Boja{\'n}czyk in~\cite{bojanczyk2004}, has an undecidable theory. We recall that the syntax of $\SoneSU$ extends that of $\SoneS$ with the new second-order \emph{unbounding quantifier} ($\qU X.\,\phi$) whose semantics is specified as follows:

\[
\lang\Big(
\qU X.\, \phi(X,Y_1,\dots, Y_n)\Big)  = \Big\{
w\in\Gamma^\w \mid
\forall n. \exists w^\prime\in \{0,1\}^\w. \big( n<|w^\prime| <\infty \wedge 
	 \phi(w^\prime,w) \textnormal{ holds}\big)\Big\}
\]
where $\Gamma=\{0,1\}^n$ and $n<|w^\prime|<\infty$ means that $w^\prime$, seen as a subset of $\mathbb{N}$, is finite and has at least $n$ elements.

\begin{thm}
\label{thm:msou}
For every formula $\varphi$ of $\SoneSU$ there effectively exists a formula $\varphi'$ of $\SoneSN$ such that $\lang(\varphi)=\lang(\varphi')$.
\end{thm}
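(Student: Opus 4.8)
I would prove the theorem by induction on the structure of $\varphi$, eliminating occurrences of the unbounding quantifier \qU in favour of the measure quantifier \qN one at a time. Atomic formulas, the Boolean connectives, and the classical first‑ and second‑order quantifiers cause no difficulty, since they already belong to the syntax of \SoneSN: translating the immediate subformulas and re‑applying the same connective gives the translation of $\varphi$. So everything comes down to a single lemma: \emph{for every formula $\psi(X,\vec Y)$ of \SoneSN one can effectively construct an \SoneSN formula equivalent to $\qU X.\,\psi(X,\vec Y)$}. In passing, all the languages involved stay Borel — \SoneS-definable languages are Borel and the measure quantifier preserves Borelness (Section~\ref{sec:tech_back}) — so every measure quantifier introduced below is well defined and no measurability question arises along the induction.

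To eliminate $\qU X.\,\psi$, I would unfold its semantics: $\vec Y$ satisfies it iff $\sup\{\,|A| : A\subseteq\mathbb N\text{ finite},\ \psi(A,\vec Y)\,\}=\infty$, i.e.\ iff for every $n$ there is a finite $A$ with $|A|\ge n$ and $\psi(A,\vec Y)$. The aim is to recast this as a statement ``$\mu(\text{something})=1$'' about a fresh random subset $Z$, which is essentially the only shape of assertion \qN can capture, so the proof is necessarily a zero–one law of Borel–Cantelli type. Concretely, $Z$ is used to generate an infinite sequence of mutually independent ``size probes'': $Z$ is split by \SoneS-definable means into disjoint finite blocks, and from the $k$-th block one reads off a target $n_k$ together with a random certificate; a probe \emph{succeeds} when a finite $A$ with $\psi(A,\vec Y)$ together with a certificate witnessing $|A|\ge n_k$ can be exhibited. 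The translation of $\qU X.\,\psi$ is then $\qN Z.\,\theta(Z,\vec Y)$, where $\theta$ is the \SoneSN formula stating that infinitely many probes succeed. The universal ``$\forall n$'' hidden in \qU is absorbed into ``for $\mu$-almost every $Z$'': since the waiting times of a random binary sequence have full support, ranging over all targets $n$ and ranging over $\mu$-almost every $Z$ amount to the same thing. What must be proved is that ``infinitely many probes succeed'' has $\mu$-measure $1$ precisely when $\psi$ has finite witnesses of unbounded size, and $\mu$-measure $0$ when the witness sizes are bounded.

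I expect this last point — the design of the probes and the verification of the zero–one equivalence — to be the main obstacle, and in fact the technical core of the argument. The delicate issue is the mismatch between what \qU and \qN can see: \qU is about cardinality, \qN is about measure, and MSO over $(\mathbb N,{<})$ cannot count — it cannot even compare the lengths of two intervals. In particular a single large finite witness contributes only an exponentially small probability, so any naive ``sample a finite set and test $\psi$'' scheme is hopeless; the substance of the proof lies in arranging the encoding and the certificates so that the mere \emph{existence} of arbitrarily large witnesses forces a divergent series of independent success probabilities (hence, by Borel–Cantelli, almost‑sure success infinitely often), while a uniform bound on the witness sizes collapses the success event to a null set — and so that ``probe $k$ succeeds'' remains \SoneS-definable in terms of $Z$, $\vec Y$ and $\psi$. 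Once this lemma is in place the induction is routine, and composing the resulting translation of \SoneSU into \SoneSN with the undecidability of \SoneSU from~\cite{bpt2016} re‑proves the undecidability of \SoneSN (Theorem~\ref{undec:s1s:measure}).
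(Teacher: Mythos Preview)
Your high-level plan---induction on $\varphi$, reduce to eliminating a single $\qU X.\,\psi$, and convert the cardinality condition into a zero--one measure statement---matches the paper's. But the concrete encoding you sketch does not work, and the paper's construction is structurally different in a way that matters.

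The gap is in the bounded case. With your probe design (probe $k$ succeeds when some finite $A$ with $\psi(A,\vec Y)$ has $|A|\ge n_k$, where $n_k$ is read from the random $Z$), suppose all witnesses have size $\le M$. Then probe $k$ succeeds iff $n_k\le M$; since the $n_k$ are i.i.d.\ with $\Pr(n_k\le M)$ a fixed positive constant, the second Borel--Cantelli lemma gives that infinitely many probes succeed almost surely---not on a null set as you claim. So $\qN Z.\,(\text{infinitely many probes succeed})$ is true in both the bounded and the unbounded case and separates nothing. There is also a definability issue you glide over: a ``certificate witnessing $|A|\ge n_k$'' is a cardinality comparison between two arbitrary finite subsets of $\w$, and that is not $\SoneS$-definable.

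The paper resolves both problems by two moves your sketch is missing. First, it does not translate $\qU X.\,\psi$ directly: it reduces $\SoneSU$ to $\SoneS$ plus a single fixed predicate $U(\alpha)$ on words over $\{0,1,R\}$ (the ``sequential witness'' trick). The word $\alpha$ explicitly records an infinite sequence of witnesses, with the $1$'s in each block marking positions of the corresponding witness; cardinalities become counts of explicit $1$-positions, and no cardinality comparison is ever needed inside \MSO. Second---and this is the structural point---$U(\alpha)$ is expressed not as $\qN Z.\,\theta$ but as
\[
\exists S.\ \lnot\,\qN X.\ \exists B\in S.\ (\text{the $1$'s of $B$ are all in $X$}),
\]
with an \emph{outer existential} over an infinite set $S$ of blocks and a \emph{negated} measure quantifier. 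The inner event has probability exactly $1-\prod_{n}(1-2^{-v_n})$ where $v_n$ is the number of $1$'s in the $n$th block of $S$. When $U(\alpha)$ holds, one can \emph{choose} $S$ with values growing fast enough that the product is positive, so the probability is $<1$ and $\lnot\qN$ holds; when $U(\alpha)$ fails, every $S$ has bounded values, the product is $0$, and $\lnot\qN$ fails for every $S$. The outer $\exists S$ is what makes this selection possible; a translation living entirely inside a single $\qN Z$ has no place to make that choice.
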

The above reduction, together
~with~\cite{bpt2016,hummel_msou}, give us the following corollary.
\begin{cor}
\label{undec:s1s:measure}
The logic $\SoneSN$ has an undecidable theory. Furthermore, for each positive number $k\in\w$ there exists a $\SoneSN$ formula $\phi(X)$ such that $\lang(\phi)\subseteq \{0,1\}^\w$ does not belong to the $\adelta{k}$ class of the projective hierarchy.
\end{cor}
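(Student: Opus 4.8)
The plan is to obtain Corollary~\ref{undec:s1s:measure} directly from Theorem~\ref{thm:msou}, treating that theorem as a black box, together with the two external facts with which it is paired in the text: the undecidability of the $\SoneSU$ theory~\cite{bpt2016} and the topological complexity of $\SoneSU$-definable languages~\cite{hummel_msou}. The one preliminary observation to record is that the translation $\varphi \mapsto \varphi'$ furnished by Theorem~\ref{thm:msou} keeps the free (second-order) variables and satisfies $\lang(\varphi) = \lang(\varphi')$; applied to a \emph{sentence} it therefore yields a sentence with the very same truth value over $(\w,{<})$.

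For the first claim I would argue by contradiction. Assume the theory of $\SoneSN$ is decidable, i.e.\ there is an algorithm telling, for any given $\SoneSN$ sentence, whether it holds in $(\w,{<})$. Given an arbitrary $\SoneSU$ sentence $\varphi$, run the effective translation of Theorem~\ref{thm:msou} to produce $\varphi' \in \SoneSN$ with $\lang(\varphi') = \lang(\varphi)$, hence with $\varphi'$ true in $(\w,{<})$ iff $\varphi$ is; then apply the assumed decider to $\varphi'$. Composing these two effective procedures would decide the theory of $\SoneSU$, contradicting~\cite{bpt2016}. Hence the theory of $\SoneSN$ is undecidable.

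For the second claim, fix $k \in \w$. By~\cite{hummel_msou} the $\SoneSU$-definable subsets of $\{0,1\}^\w$ are not all in $\adelta{k}$: there is a $\SoneSU$ formula $\varphi_k(X)$ with a single free monadic variable such that $\lang(\varphi_k)$ does not belong to $\adelta{k}$. Applying the effective translation of Theorem~\ref{thm:msou} to $\varphi_k$ produces a $\SoneSN$ formula $\phi(X)$ with $\lang(\phi) = \lang(\varphi_k)$, so $\lang(\phi) \notin \adelta{k}$, which is exactly the formula required. (This part alone already shows that $\SoneSN$ is far more expressive than $\SoneS$, all of whose definable languages are Borel.)

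There is no genuine obstacle here: the mathematical content is entirely in Theorem~\ref{thm:msou} and the two cited results, and everything else is bookkeeping. The only two points deserving a line of care are (i) that the translation of Theorem~\ref{thm:msou} preserves satisfiability and validity of sentences, which is immediate from the language-equality $\lang(\varphi) = \lang(\varphi')$; and (ii) that the lower bound of~\cite{hummel_msou} is available in the form needed here, namely for the $\omega$-word logic $\SoneSU$ and witnessed by a formula with a free set variable, so that the transported language indeed lives in $\{0,1\}^\w$.
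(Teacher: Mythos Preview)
Your proposal is correct and matches the paper's own argument exactly: the paper simply states that the corollary follows from Theorem~\ref{thm:msou} together with~\cite{bpt2016,hummel_msou}, and you have spelled out precisely that derivation. If anything, your write-up is more explicit than the paper's one-line justification, including the care you take with points (i) and (ii) at the end.
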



The rest of this section is devoted to proving Theorem~\ref{thm:msou}. We will assume that using $\SoneS$ logic we can freely quantify over $\w$-words over fixed alphabets --- this can be simulated by quantifying over tuples of sets representing positions that bear a given letter.

The first step of our proof is standard: we prove that instead of using the $\qU$ quantifier in $\SoneSU$, one can use a specific predicate.

\begin{defi}
Let $\alpha\in\{0,1,R\}^\w$ be an infinite word. We say that $\alpha$ is \emph{unbounded} (denoted $U(\alpha)$) if $\alpha$ contains infinitely many letters $R$ and, for every $n$ there exists a pair of consecutive letters $R$ such that in-between them there are at least $n$ letters $1$.
\end{defi}

Before moving forward, we will introduce a couple of useful concepts. 
\begin{defi}
Fix $\alpha\in\{0,1,R\}^\w$ and assume that $\alpha$ contains infinitely many letters $R$. 
\begin{enumerate}
\item A set $B$ of consecutive positions $\{i,i+1,\ldots,j\}$ of $\alpha$ is called a \emph{block} if: $\alpha$ contains no $R$ at positions $\{i,\ldots,j-1\}$; $i=0$ or $\alpha(i-1)=R$; and $\alpha(j)=R$. In other words, a block is a maximal set of successive positions of $\alpha$ such that the only $R$ that appears among them is the last position in the block.
\item Since blocks are disjoint, we can identify an arbitrary set of blocks with its union $S$ --- a subset of $\w$.
\item If $B$ is a block, by the \emph{value of $B$} we denote the number $v$ of positions of $\alpha$ in $B$ that are labeled by $1$. $v$ is always a non-negative integer.
\end{enumerate}
\end{defi}

\begin{lem}
The logic $\SoneSU$ is effectively expressively equivalent to the logic $\SoneS$ equipped with the $U$ predicate defined above.
\end{lem}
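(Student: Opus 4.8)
The statement splits into two implications, and I would deal with the easy one first: every formula of $\SoneS$ that may use the predicate $U$ can be replaced by a $\SoneSU$ formula. For this it suffices to express $U$ itself. For $\alpha\in\{0,1,R\}^\w$ (coded, as usual, by a tuple of monadic variables) I claim
\[
U(\alpha)\ \Longleftrightarrow\ \big(\text{$\alpha$ has infinitely many $R$'s}\big)\ \wedge\ \qU X.\,\rho(X,\alpha),
\]
where $\rho(X,\alpha)$ is the $\SoneS$ formula ``$X$ is exactly the set of positions labelled $1$ that lie inside a single block of $\alpha$''. Two observations finish this half: if $\alpha$ has infinitely many $R$'s then every block is finite, so every $X$ satisfying $\rho(X,\alpha)$ is finite; and, since distinct blocks are disjoint, $\qU X.\,\rho(X,\alpha)$ says precisely that the values (numbers of $1$'s) of the blocks of $\alpha$ are unbounded, which together with the first conjunct is the definition of $U(\alpha)$.

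For the converse, that every $\SoneSU$ formula is equivalent to one of $\SoneS$ using $U$, I would argue by induction on the formula; all cases but $\qU$ are immediate, so the task is to rewrite a single $\qU X.\,\phi(X,\vec Y)$ with $\phi$ an $\SoneS$ formula (the routine bookkeeping absorbing nested $\qU$'s and already-introduced occurrences of $U$ I would handle separately). Translate $\phi$ into a non-deterministic Muller automaton $\mathcal{A}$ over $\{0,1\}\times\{0,1\}^{|\vec Y|}$, with states $Q$ and initial state $q_\init$, and introduce the \emph{co-acceptance} predicate: $\mathrm{CoAcc}(q,m,\vec Y)$ holds iff $\mathcal{A}$, started in state $q$ at position $m$ and reading $(0^\w,\vec Y)$ from $m$ onwards, has an accepting run; this is $\w$-regular in $(q,m,\vec Y)$, hence $\SoneS$-definable. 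Its point is the characterisation: for a \emph{finite} $X$, $\phi(X,\vec Y)$ holds iff some run of $\mathcal{A}$ from $q_\init$ over $(\chi_X,\vec Y)$ is, at the first position $m>\max X$, in a state $q$ with $\mathrm{CoAcc}(q,m,\vec Y)$. Using this I would rewrite $\qU X.\,\phi(X,\vec Y)$ as
\[
\exists\beta\in\{0,1\}^\w\ \exists\rho\in Q^\w\ \exists\alpha\in\{0,1,R\}^\w\ :\ \big(\text{$\rho$ a run of $\mathcal{A}$ on $(\beta,\vec Y)$}\big)\ \wedge\ \Theta(\beta,\rho,\alpha,\vec Y)\ \wedge\ U(\alpha),
\]
where $\Theta$ is the $\SoneS$ formula saying that $\alpha$ has infinitely many $R$'s, agrees with $\beta$ off its $R$-positions (and has $\beta$ null there), and satisfies $\mathrm{CoAcc}(\rho(m),m,\vec Y)$ at every $R$-position $m$ of $\alpha$. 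The whole right-hand side is a formula of $\SoneS$ with $U$, with free variables $\vec Y$, and the construction is effective.

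Soundness is immediate: at each $R$-position $m$ of $\alpha$, the run $\rho$ up to $m$ followed by a $\mathrm{CoAcc}$-witnessing tail exhibits $X_m:=\{k<m:\beta(k)=1\}$ as a witness of $\phi(X_m,\vec Y)$, and $U(\alpha)$ forces $|X_m|$ to be unbounded, whence $\qU X.\,\phi(X,\vec Y)$. The real difficulty, which I expect to be the main obstacle, is completeness: from $\qU X.\,\phi(X,\vec Y)$ one must build one infinite word $\beta$ and one run $\rho$ that return to co-accepting states infinitely often while accumulating unboundedly many $1$'s between consecutive returns. I would do this by a combinatorial argument on $\mathcal{A}$: writing $V(m,q)$ for the largest $|X|$ over $X\subseteq[0,m)$ for which some run of $\mathcal{A}$ from $q_\init$ over $(\chi_X,\vec Y)$ restricted to $[0,m)$ reaches a state $q$ with $\mathrm{CoAcc}(q,m,\vec Y)$, the hypothesis $\qU X.\,\phi(X,\vec Y)$ is exactly $\sup_{m,q}V(m,q)=\infty$; from this, using finiteness of $Q$ and an infinite Ramsey factorisation of $\vec Y$ through the transition structure of $\mathcal{A}$, one extracts a state that is co-accepting at cofinally many positions together with a ``$1$-producing'' loop through it, and then, placing the $R$-positions of $\alpha$ ever further apart, one makes $U(\alpha)$ hold. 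Pinning down this extraction — informally, that one can keep returning to co-accepting states without losing the capacity to add arbitrarily many further $1$'s — is the only genuinely non-routine ingredient, and it is precisely where the $\w$-regularity of $\phi$ is used.
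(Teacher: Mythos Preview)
Your easy direction is fine and matches the paper. The hard direction, however, takes a very different and much heavier route than the paper does, and the added weight brings two real problems.

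First, the ``routine bookkeeping'' you defer is not routine here. Your elimination of $\qU X.\,\phi$ depends on translating $\phi$ into a Muller automaton, which requires $\phi$ to be $\omega$-regular. But in the induction, once inner $\qU$'s have been replaced, the body $\phi$ is an $\SoneS{+}U$ formula and is \emph{not} $\omega$-regular in general. There is no obvious normal form for $\SoneSU$ that lets you assume the body is pure $\SoneS$, and your sketch gives no mechanism for pulling $U$-atoms (whose arguments may depend on $X$) outside the $\qU$. The paper's transformation avoids this entirely: it is a purely syntactic rewriting
\[
\qU X.\,\phi(X)\ \Longleftrightarrow\ \exists\alpha.\ U(\alpha)\ \wedge\ \forall B\ \text{block of }\alpha.\ \exists X\ \big(\text{$X$ finite}\wedge\phi(X)\wedge\text{the $1$'s of $B$ lie in $X$}\big),
\]
which uses $\phi$ as a black box and is therefore correct even when $\phi$ already contains $U$.

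Second, even restricting to $\phi\in\SoneS$, your completeness step is exactly the hard part, and the sketch does not close it. You need a \emph{single} $\beta$ and a \emph{single} run $\rho$ that simultaneously visit co-accepting states infinitely often and accumulate unboundedly many $1$'s between such visits. The hypothesis only supplies, for each $n$, a separate finite run ending at some co-accepting $(m_n,q_n)$; since co-acceptance is position-dependent (through $\vec Y$), neither pigeonhole on $Q$ nor a Ramsey factorisation of the transition monoid immediately gives you a loop that both produces $1$'s and returns to a position where the same state is still co-accepting. This can plausibly be made to work, but it is a genuine argument, not a remark. The paper sidesteps the whole issue by \emph{not} combining the witnesses: each block of the word $\alpha$ is allowed its own independent finite $X$ with $\phi(X)$; to build $\alpha$ one just picks, at step $n$, a witness large enough to have $n$ elements beyond everything used so far, and labels those positions $1$ in the $n$th block. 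That is the key simplification you are missing.
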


The above lemma is a standard technical adjustment of the $\qU$ quantifier, see for instance the notion of \emph{sequential witness} from Lemma~5.5 of~\cite{bojanczyk_bounds}. For the sake of completeness, we sketch a proof of it.

\begin{proof}
Clearly the $U$ predicate can easily be expressed by the $\qU$ quantifier. Now consider an application of a $\qU$ quantifier of the form $\qU X.\, \phi(X)$. We claim that the following formula is equivalent to $\qU X.\, \phi(X)$
\begin{align*}
\psi\eqdef \exists \alpha.\, U(\alpha)\wedge \forall B.\,& \big(\text{$B$ is a block of $\alpha$}\big)\Rightarrow \\
& \Big[\exists X.\, \text{$X$ is finite, $\phi(X)$, and } \forall x\in B.\, \alpha(x)=1\Rightarrow x\in X\Big].
\end{align*}
Clearly, if $\psi$ holds then  $\qU X.\, \phi(X)$ holds as well (the positions labeled $1$ in the blocks of $\alpha$ witness that). The other direction is also easy, it is enough to construct a witness $\alpha$. We proceed inductively, at $n$th step taking a finite set $X$ of cardinality large enough to guarantee that it has at least $n$ elements greater in the order ${\leq}$ from all the previously considered positions. These $n$ new positions will be labeled $1$ in the $n$th block of $\alpha$.
\end{proof}

In the presence of the above lemma, Theorem~\ref{thm:msou} follows directly from the following lemma.

\begin{lem}
\label{lem:coding_in_u}
There exists a formula $\psi_U$ of $\SoneSN$ over the alphabet $\{0,1,R\}$ such that $\alpha\in\{0,1,R\}^\w$ satisfies $\psi$ if and only if $U(\alpha)$ holds.
\end{lem}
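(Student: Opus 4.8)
The plan is to reduce the lemma to a statement about the \emph{block values} of $\alpha$ and then to capture unboundedness of those values by combining an ordinary first-order quantifier with one application of the measure quantifier. Since ``$\alpha$ contains infinitely many letters $R$'' is plainly $\SoneS$-definable, it suffices to express, under this assumption, that the values of the blocks of $\alpha$ are unbounded (using the notions of \emph{block} and \emph{value of a block} introduced above). I would first record the elementary equivalence: the block values are unbounded if and only if \emph{for every $n\in\w$ there are infinitely many blocks of $\alpha$ of value at least $n$}. Indeed, if the values were bounded by $N$ then no block has value $\ge N+1$; conversely, if for some $n$ only finitely many blocks had value $\ge n$, then past the last such block all values are $<n$, so together with the finitely many earlier blocks the supremum would be finite.

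The central step is to express, for a fixed $n$, the property ``infinitely many blocks of $\alpha$ have value $\ge n$'' by one application of $\qN$; the content is a Borel--Cantelli computation. Let $X\subseteq\w$ be the coin-flip random set over which $\qN$ quantifies. For the $i$-th block $B_i$ of value $v_i$, let $A^{(n)}_i$ be the event ``$v_i\ge n$ and the $n$ least positions of $B_i$ carrying the letter $1$ all belong to $X$''. Distinct blocks involve disjoint coordinates of $X$, so the events $\{A^{(n)}_i\}_i$ are mutually independent, with $\mu\big(A^{(n)}_i\big)=2^{-n}$ when $v_i\ge n$ and $\mu\big(A^{(n)}_i\big)=0$ otherwise. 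If infinitely many blocks have value $\ge n$ then $\sum_i\mu\big(A^{(n)}_i\big)=\infty$, so the second Borel--Cantelli lemma gives $\mu\big(\limsup_i A^{(n)}_i\big)=1$; if only finitely many blocks have value $\ge n$ then only finitely many of the $A^{(n)}_i$ are non-empty, hence $\mu\big(\limsup_i A^{(n)}_i\big)=0$. Thus ``infinitely many blocks have value $\ge n$'' is equivalent to $\qN X.\,\theta_n(X,\alpha)$, where $\theta_n(X,\alpha)$ says ``there are infinitely many blocks $B$ of $\alpha$ that carry at least $n$ letters $1$ and whose first $n$ letters $1$ all belong to $X$''. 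Combining with the first step, one sets $\psi_U := \big(\text{$\alpha$ has infinitely many $R$}\big)\ \wedge\ \forall n.\,\qN X.\,\theta_n(X,\alpha)$, and correctness is precisely the two observations above.

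The remaining point, which I expect to be the main obstacle, is to argue that $\theta_n$ can be taken to be a genuine $\SoneS$ formula with $n$ a \emph{free first-order variable}. This is not automatic: the ingredient ``the block $B$ carries at least $n$ letters $1$'' is not by itself definable in MSO over $(\w,{<})$ when $n$ is a first-order variable, since $\w$-regular languages cannot count up to a variable threshold, and the relevant blocks lie far from the position coding $n$. Any correct treatment must therefore express the threshold $n$ without explicit counting — e.g.\ by relativising to the $1$-positions and encoding ``the first $n$ letters $1$ of $B$'' through an auxiliary second-order marker whose length is tied to $n$ in a way a finite automaton can maintain while reading past every block, and then verifying that this reformulation has the intended meaning. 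Once $\theta_n$ is available as an $\SoneS$ formula, $\psi_U$ is an $\SoneSN$ formula defining exactly $U$, which establishes the lemma (and hence, via the previous lemma, Theorem~\ref{thm:msou}).
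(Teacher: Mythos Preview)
Your Borel--Cantelli intuition is sound, but the proposal has a genuine gap at exactly the point you flag as ``the main obstacle,'' and the suggested fix does not work. The predicate you need inside $\theta_n$, namely ``the block $B$ carries at least $n$ letters $1$ (and the first $n$ of them lie in $X$),'' with $n$ a \emph{free first-order variable}, is not expressible in $\SoneS$: it amounts to comparing the cardinality of the finite set $\{i\in B\mid \alpha(i)=1\}$ with the number coded by the position $n$, and such cardinality comparisons define non-$\w$-regular relations (for instance, restricting to words of the shape $u\,(0,1)\,v\,(0,0)^\w$ one is forced to recognise $(\text{zeros in }u)=(\text{ones in }v)$, which fails any pumping argument). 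Your sketch of a repair via ``an auxiliary second-order marker whose length is tied to $n$'' would itself require expressing $|Z|=n$ or a bijection between an initial segment and a subset of a far-away block, which is the same undefinable equicardinality predicate in disguise. So as written, $\theta_n$ cannot be an $\SoneS$ formula, and the outer $\forall n$ collapses.

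The paper sidesteps this entirely by \emph{not} quantifying over a numerical threshold. It defines
\[
\psi_U(\alpha)\ \eqdef\ \exists S.\ \big(\text{$S$ an infinite union of blocks}\big)\ \wedge\ \lnot\Big[\qN X.\ \exists B\subseteq S.\ \big(\text{$B$ a block and all $1$-positions of $B$ lie in $X$}\big)\Big],
\]
which is pure $\SoneS$ inside the $\qN$. The point is that for a block of value $v$ the event ``all its $1$-positions lie in $X$'' has probability $2^{-v}$, and for an infinite set of blocks with values $v_0,v_1,\ldots$ the probability that \emph{some} block is fully covered is $1-\prod_n(1-2^{-v_n})$. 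If the values are bounded this product is $0$ regardless of $S$, so the bracketed $\qN$-statement holds for every infinite $S$ and $\psi_U$ fails; if the values are unbounded one can \emph{choose} $S$ with values growing fast enough that the product stays positive, so the bracketed statement fails and $\psi_U$ holds. The existential over $S$ is doing the work that your $\forall n$ tried to do, but without ever comparing a block value to a first-order parameter---which is precisely what makes the formula land in $\SoneS$ under a single $\qN$.
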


\begin{proof}
Consider the following formula $\psi_U$:
\begin{align*}
\psi_U(\alpha)&\eqdef\exists S.\,\text{$S$ is an infinite set of blocks in $\alpha$}\ \land\\
& \quad\quad\lnot\big[\forall^{=1} X.\, \exists B.\, \text{$B\subseteq S$ and $B$ is a block in $\alpha$}\ \land\\
& \quad\quad\quad\quad \forall x.\, \text{if $x\in B$ and $\alpha(x)=1$ then $x\in X\big]$.}
\end{align*}

\begin{clm}
\label{cl:msou-main}
The formula $\psi_U(\alpha)$ holds if and only if $U(\alpha)$ holds.
\end{clm}

Without loss of generality we can restrict to $\w$-words $\alpha$ containing infinitely many letters $R$ (otherwise both properties are false).

To simplify our notation, let us denote by $\varphi(X,B)$ the property that: for every $x\in B$ such that $\alpha(x)=1$ we have $x\in X$. Consider a fixed block $B$ of $\alpha$ that has value $v$ and let $X\subseteq\w$ be a randomly chosen set. Then, the probability that $\varphi(X,B)$ holds is equal to $2^{-v}$. Now, if $S$ is an infinite set of blocks of $\alpha$ with values $v_0,v_1,\ldots$, the probability that for a random set $X\subseteq \w$ some block of $S$ satisfies $\varphi(X,B)$ is equal to
\begin{equation}
\label{eq:value-of-S}
1-\prod_{n\in\w}\ \big(1-2^{-v_n}\big)
\end{equation}
For an illustration of an $\w$-word $\alpha$ and its division into blocks, see Figure~\ref{fig:alpha-and-blocks}.

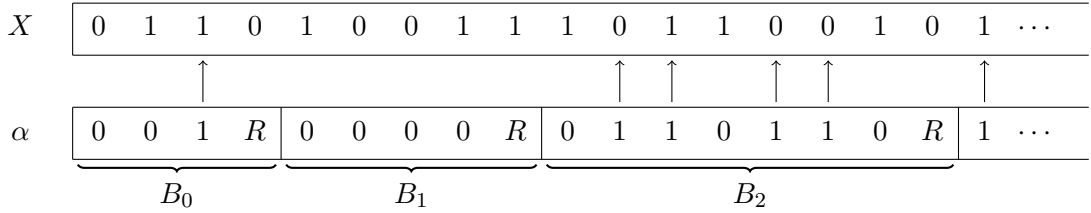
\begin{figure}
\centering
\newcommand{\XyCoor}{3}
\newcommand{\AyCoor}{1}
\newcommand{\maxX}{19.5}

\begin{tikzpicture}[scale=\textwidth/22cm]

\draw[black] (-0.5, \XyCoor + 0.5) -- ++(\maxX, 0);
\draw[black] (-0.5, \XyCoor + 0.5) -- ++(0, -1);
\draw[black] (-0.5, \XyCoor - 0.5) -- ++(\maxX, 0);

\draw[black] (-0.5, \AyCoor + 0.5) -- ++(\maxX, 0);
\draw[black] (-0.5, \AyCoor + 0.5) -- ++(0, -1);
\draw[black] (-0.5, \AyCoor - 0.5) -- ++(\maxX, 0);

\node[toC] at (-1.5, \XyCoor) {$X$};
\node[toC] at (-1.5, \AyCoor) {$\alpha$};

\tikzEvalInt{\x}{0}

\foreach \letter in {0, 0, 1, R, 0, 0, 0, 0, R, 0, 1, 1, 0, 1, 1, 0, R,1} {
	\node[toC] at (\x, \AyCoor) {$\letter$};
	\tikzEvalInt{\x}{\x+1}
}

\tikzEvalInt{\x}{0}

\foreach \letter in {0, 1, 1, 0, 1, 0, 0, 1, 1, 1, 0, 1, 1, 0, 0, 1, 0,1} {
	\node[toC] at (\x, \XyCoor) {$\letter$};
	\tikzEvalInt{\x}{\x+1}
}

\foreach \x in {3, 8, 16} {
	\draw[black] (\x + 0.5, \AyCoor + 0.5) -- ++(0, -1);
}

\foreach \x in {2, 10, 11, 13, 14, 17} {
	\draw[black, ->] (\x, \AyCoor + 0.6) -- (\x, \XyCoor - 0.6);
}

\node[toC] at (18, \XyCoor) {$\cdots$};
\node[toC] at (18, \AyCoor) {$\cdots$};

\draw (-0.4, \AyCoor-0.6) edge[ubrace] node{$B_0$} ++(3.8,0);
\draw ( 3.6, \AyCoor-0.6) edge[ubrace] node{$B_1$} ++(4.8,0);
\draw ( 8.6, \AyCoor-0.6) edge[ubrace] node{$B_2$} ++(7.8,0);

\end{tikzpicture}
\caption{A word $\alpha$ divided into blocks $B_0$, $B_1$, \ldots The values of the blocks are $v_0=1$, $v_1=0$, $v_2=4$, \ldots The formula $\varphi(X,B)$ holds for $B_0$ and $B_1$ but not for $B_2$.}
\label{fig:alpha-and-blocks}
\end{figure}

We are now in position to prove Claim~\ref{cl:msou-main}. First consider $\alpha$ such that it is not the case that $U(\alpha)$ holds. Therefore, there is a global bound $M$ such that if $B$ is a block in $\alpha$ then the value of $B$ is at most $M$. Let $S$ be any infinite set of blocks in $\alpha$ as in the formula $\psi_U$. By~\eqref{eq:value-of-S}, the probability that some block $B$ of $S$ satisfies $\varphi(X,B)$ equals $1-\prod_{n\in\w} \big(1-2^{-v_n}\big)\geq 1-\prod_{n\in\w} \big(1-2^{-M}\big)=1-0=1$. Therefore, $\psi_U(\alpha)$ is false.

Now consider the opposite case that $U(\alpha)$ holds. We will prove that $\psi_U(\alpha)$ holds. By the assumption, for every $M$ there exists a block $B$ in $\alpha$ of value greater than $M$. Let $0<c_0<c_1<\ldots<1$ be a sequence of numbers such that $\prod_{n\in\w} c_n > 0$. For each $n$, there exists a block $B_n$ of $\alpha$ of value $v_n$ such that $1-2^{-{v_n}}>c_n$. Let $S$ be the union of the blocks $B_n$, clearly $S$ is infinite. Take a random set $X\subseteq\w$. By~\eqref{eq:value-of-S}, the probability that some block $B$ of $S$ satisfies $\varphi(X,B)$ equals $1-\prod_{n\in\w} \big(1-2^{-v_n}\big)\leq 1-\prod_{n\in\w} c_n<1$. Therefore, $\psi_U(\alpha)$ holds.

This concludes the proof of Claim~\ref{cl:msou-main} and also the proof of Theorem~\ref{thm:msou}.
\end{proof}

\section{S2S extended with the category quantifier}
\label{sec:s2s-category}
In this section we consider the extension of the monadic second order logic of the full binary tree (\StwoS) with the category quantifier.

The syntax of $\StwoSM$ extends that of $\StwoS$ with the new second-order quantifier $\qM$ as follows:

\[
\phi\, ::=\,   \succL(x,y) \mid \succR(x,y)  \mid x\in X \mid \neg \phi \mid \phi_1 \vee \phi_2 \mid \forall x.\, \phi \mid \forall X.\, \phi \mid \qM X.\, \phi
\]

The semantics of the category quantifier is specified as follows.

\[
\lang\Big(
\qM X.\, \phi(X,Y_1,\dots, Y_n)\Big)  = \Big\{
t\in\trees{\Gamma} \mid
	\textnormal{ the set } \{ t^\prime\in\trees{\{0,1\}}\mid  \phi(t^\prime,t) \textnormal{ holds} \} \big) \textnormal{ is comeager}     \Big\}
\]
where $\Gamma=\{0,1\}^n$.

Our main theorem is the following result, showing that the category quantifier preserves \StwoS-definability when the inner formula can be recognised by a game automaton.

\begin{thm}
\label{thm:game-meager}
Let $\Aa$ be a game automaton over an alphabet $\Sigma\times \Gamma$. Consider the language $\qM \lang(\Aa)$ that contains a tree $t\in\trees{\Sigma}$ if and only if the set
\begin{equation}
\label{eq:def-of-lang-B-trees}
\big\{t'\in\trees{\Gamma}\mid (t,t')\in\lang(\Aa)\}
\end{equation}
is comeager.

Then, $\qM \lang(\Aa)$ is regular and one can effectively construct an alternating Muller tree automaton $\Bb$ for this language. Additionally, the size of the automaton $\Bb$ is polynomial in the size of $\Aa$.
\end{thm}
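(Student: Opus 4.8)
The plan is to follow the proof of Proposition~\ref{pro:s1s-m-cons} step by step, replacing $\w$-words by trees. I will construct an alternating Muller automaton $\Bb$ over $\Sigma$ whose acceptance game on an input tree $t$ simulates the Banach--Mazur game for the section $L_t\eqdef\{t'\in\trees{\Gamma}\mid (t,t')\in\lang(\Aa)\}$, while running the acceptance game of $\Aa$ on the pair $(t,t')$ that is being constructed. The feature that makes this work precisely for a \emph{game} automaton is that the acceptance game of $\Aa$ on $(t,t')$ is a single-token game: at every visited node the transition has the shape $(\dL,q_\dL)\xOper(\dR,q_\dR)$, so exactly one direction is followed --- chosen by \eve when $\xOper={\vee}$ and by \adam when $\xOper={\wedge}$. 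Consequently only the labels of $t'$ along the branch actually followed by $\Aa$ are relevant to membership in $\lang(\Aa)$, and a single-branch automaton suffices.

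The automaton $\Bb$ has state set $Q\times\{\eve,\adam\}$ and initial state $(q_\init,\adam)$, exactly as in the word case; the second coordinate records which player is currently in charge of extending $t'$ in the Banach--Mazur game (and \adam starts, matching Definition~\ref{def:bm-trees}). At a state $(q,r)$ reading $a\in\Sigma$ the player $r$ first chooses a label $b\in\Gamma$ (the value $t'(u)$ of the current node $u$) together with the next player $r'\in\{\eve,\adam\}$; writing $\delta^\Aa(q,(a,b))=(\dL,q_\dL)\xOper(\dR,q_\dR)$, a direction $d\in\{\dL,\dR\}$ is then resolved according to $\xOper$ and the computation moves to the child $ud$ in state $(q_d,r')$. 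Formally,
\[
\delta^\Bb\big((q,r),a\big)\eqdef\XOper_{b\in\Gamma}\ \XOper_{r'\in\{\eve,\adam\}}\ \Big[(\dL,(q_\dL,r'))\ \xOper\ (\dR,(q_\dR,r'))\Big],
\]
where the outer $\XOper$ is $\bigvee$ if $r=\eve$ and $\bigwedge$ if $r=\adam$, and $q_\dL,q_\dR,\xOper$ depend on $q,a,b$ through $\delta^\Aa$. The Muller condition $\Ff^\Bb$ is copied verbatim from the word construction: a loop $F$ is accepting iff either all its states have second coordinate $\adam$ (so \adam has stopped passing, and he loses), or $F$ contains states of both kinds and $\{q\mid(q,r)\in F\}\in\Ff^\Aa$. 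The state set has size $2|Q|$ and each transition has size $O(|\Gamma|)$, so $\Bb$ is polynomial in $\Aa$; it is alternating Muller, as required, which also gives regularity of $\qM\lang(\Aa)$.

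The remaining task is the correctness lemma: for every $t\in\trees{\Sigma}$, Player~\eve wins $\Bb(t)$ iff $L_t$ is comeager. Since $\lang(\Aa)$ is $\StwoS$-definable it has the Baire property, and each section $L_t$ (a continuous preimage of $\lang(\Aa)$) again has the Baire property; so by the Banach--Mazur characterisation of comeagerness (Theorem~\ref{bm_game_theorem} and its specialisation to trees) it suffices to show that \eve wins $\Bb(t)$ iff \eve wins $\BM(L_t)$ in the sense of Definition~\ref{def:bm-trees}. As in Lemma~\ref{lem:s1s-m-corr}, I would prove this by transferring strategies in both directions along the correspondence between plays of $\BM(L_t)$ and those plays of $\Bb(t)$ in which the turn is passed infinitely often: along the branch followed by $\Aa$'s acceptance game, one round of $\BM(L_t)$ by player $r$ corresponds to a finite block of transitions of $\Bb$ in which $r$ stays in charge, closed by a single transition in which $r$ passes; a player who never passes again does not correspond to any Banach--Mazur play and is, correctly, declared the loser by $\Ff^\Bb$. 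The bookkeeping invariant is the tree analogue of Claim~\ref{cl:s1s-mea-inv}: after $n$ rounds of $\BM(L_t)$ ending in configuration $(s_n,r_n)$ there is a finite play of $\Bb(t)$ consistent with the fixed strategy that has descended exactly to the node of $s_n$ lying on the branch currently followed by $\Aa$, and is in a state of the form $(q_n,r_n)$.

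The main obstacle is the step ``$\BM(L_t)$ has the same winner as the single-branch game $\Bb(t)$''. In $\BM(L_t)$ the players build an \emph{entire} tree $t'$, whereas $\Bb(t)$ only ever constructs $t'$ along one branch --- the one selected by the $\vee/\wedge$-choices of $\Aa$, which themselves depend on the labels being built. Reconciling the two uses crucially that $\Aa$ is a game automaton: since the acceptance game of $\Aa$ follows a single branch, the off-branch values of $t'$ are irrelevant to membership in $\lang(\Aa)$, so extending a branch-local strategy arbitrarily off-branch, and conversely restricting a full Banach--Mazur strategy to the followed branch, are both sound. Care is needed because in $\Bb(t)$ Player~\adam plays two roles --- the Banach--Mazur player in charge on his turns, and the resolver of $\wedge$-choices of $\Aa$ --- and, since the label chosen at a node decides whether the direction there is \eve's or \adam's choice, the order in which these choices are interleaved matters; checking that this order is consistent with both games, and that $\Ff^\Bb$ penalises exactly the plays that do not arise from Banach--Mazur plays, are the delicate points. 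This is precisely the place where the argument breaks for automata that are not game automata (e.g.\ nondeterministic ones), whose acceptance game does not reduce to a single branch --- the gap in the proof of~\cite{MM2015}.
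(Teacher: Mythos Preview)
Your construction of $\Bb$ is exactly the paper's, and you correctly isolate the obstacle. The gap is in the correctness argument: the invariant you propose --- a \emph{single} play of $\Bb(t)$ sitting at ``the node of $s_n$ lying on the branch currently followed by $\Aa$'' --- is too weak, and in fact ill-defined. There is no such single branch: in the acceptance game $\Aa(t,t')$ the direction at a $\wedge$-node is chosen by \adam, at a $\vee$-node by \eve, so which branch is followed depends on both players. Your claim that ``off-branch values of $t'$ are irrelevant to membership in $\lang(\Aa)$'' is therefore wrong as stated: membership means \eve has a \emph{winning strategy}, which must cope with every direction \adam can force, so the labels of $t'$ on every branch reachable under \adam's choices matter.

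What the paper does instead is maintain a \emph{branching} set $\tau_n\subseteq\{\dL,\dR\}^\ast$ together with, for each $u\in\tau_n$, a finite play $\pi_n(u)$ of $\Bb(t)$ consistent with $\sigma_P$. The rule is: at a node where the $\Aa$-transition belongs to the opponent $\bar P$, \emph{both} children are put into $\tau_{n+1}$; where it belongs to $P$, only the child given by $\sigma_P$ is. This has two consequences your single-node invariant cannot deliver. First, in the round $r_n=P$ of $\BM$ the letters $P$ writes in $s_{n+1}$ come from $\sigma_P$ at \emph{all} current end-points simultaneously, and $P$ stops only when every branch has passed to $\bar P$; that this terminates needs K\"onig's Lemma (an infinite branch in $\tau_{n+1}$ would give a $\sigma_P$-consistent play of $\Bb(t)$ in which $P$ never passes, hence losing). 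Second, and decisively, the union $E=\bigcup_n\tau_n$ is exactly a strategy for $P$ in $\Aa(t,t')$: it resolves every $P$-owned direction choice and leaves $\bar P$'s choices open, and every branch of $E$ is the projection of a $\sigma_P$-consistent, hence winning, play of $\Bb(t)$. That is how one concludes $(t,t')\in\lang(\Aa)\iff P=\eve$. A single tracked node gives you neither the letters to fill $s_{n+1}$ globally nor, at the end, a strategy in $\Aa(t,t')$.
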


It is quite clear that if $L$ is a regular language of trees, then the Banach--Mazur game $\BM(L)$ (see Definition~\ref{def:bm-trees}) cannot be directly encoded in the \StwoS logic because a strategy of a player in this game is a complex object (at least it assigns prefixes to prefixes).

Fix a game automaton $\Aa$ over a product alphabet $\Sigma\times\Gamma$. Let $Q$ be the set of states of $\Aa$ and $\fun{\delta^{\Aa}}{Q\times \Sigma\times\Gamma}{\mathcal{DL}( \{\dL,\dR\}\times Q)}$ be the transition function: for a triple $(q,a,b)\in Q\times \Sigma\times\Gamma$ it assigns an expression of the form $(\dL,q_\dL)\xOper(\dR,q_\dR)$ with ${\xOper}$ either $\lor$ (a transition of $\eve$) or $\land$ (a transition of $\adam$).

We will construct an alternating automaton $\Bb$ over the alphabet $\Sigma$ that will read a tree $t\in\trees{\Sigma}$ and verify that the language~\eqref{eq:def-of-lang-B-trees} is comeager. The construction follows the same lines as the proof of Proposition~\ref{pro:s1s-m-cons}. The additional difficulty here is that we deal with a model of automata that is a bit stronger than deterministic ones. Therefore, we need to take some additional care to resolve the game arising from the semantics of these automata.

Intuitively, $\Bb$ will simulate synchronously two games: the Banach--Mazur game over $\qM \lang(\Aa)$ in which a new tree $t'\in\trees{\Gamma}$ is constructed; and the acceptance game of $\Aa$ over the constructed pair of trees $(t,t')$. This parallel execution of the two games will be visible in the transitions of $\Bb$. Let the set of states of $\Bb$ keep track of both, the current state of $\Aa$ and the player in charge of construction of the tree, i.e., the set of states of $\Bb$ is $Q\times\{\eve,\adam\}$.

The initial state of $\Bb$ is $(q_\init, \adam)$. Consider a state $(q,r)$ of $\Bb$ and a letter $a\in\Sigma$. The transition $\delta^{\Bb}(q,r,a)$ is supposed to give the following choices to the players:
\begin{itemize}
\item Player~$r$ chooses a letter $b\in\Gamma$,
\item Player~$r$ chooses the next player, $r'\in\{\eve,\adam\}$,
\item the owner of the transition $\delta^{\Aa}(q,a,b)=(\dL,q_\dL)\yOper(\dR,q_\dR)$ in $\Aa$ resolves this formula by choosing a direction $d\in\{\dL,\dR\}$.
\end{itemize}
After that choices are done, the transition that is taken should be $\langle d,(q_d,r')\rangle$, i.e.~we should move in the direction $d$ to the new state $(q_d,r')$. More formally, we achieve that by putting:
\begin{equation}
\delta^{\Bb}\langle(q,r),a\rangle\eqdef\XOper_{b\in\Gamma}\, \XOper_{r'\in\{\eve,\adam\}}\, \YOper_{d\in \{\dL,\dR\}}\, \langle d,(q_d,r')\rangle,
\end{equation}
where ${\xOper}$ is ${\vee}$ if $r=\eve$ and ${\wedge}$ otherwise, and $\delta^{\Aa}\langle q,(a,b)\rangle=(\dL,q_\dL)\yOper(\dR,q_\dR)$ (i.e.~the last boolean operator in $\delta^{\Bb}\langle (q,r),a\rangle$ is the same as the one in $\delta^{\Aa}\langle q,(a,b)\rangle$).

The acceptance condition of the automaton $\Bb$ will be the same as in Section~\ref{sec:s1s-category}: a set of states $F\subseteq Q\times\{\eve,\adam\}$ belongs to $\Ff^{\Bb}$ if either:
\begin{enumerate}
\item all the states $(q,r)\in F$ satisfy $r=\adam$ (i.e.~from some point on $\adam$ has always chosen $r'=r=\adam$,
\item or $F$ contains both states of the form $(q,\eve)$ and $(q',\adam)$; and additionally the set of states $\{q\in Q\mid \text{$(q,r)\in F$ for some $r\in\{\eve,\adam\}$}\}$ is accepting for $\Aa$ (i.e.~belongs to $\Ff^{\Aa}$).
\end{enumerate}
In other words, if any of the players from some point on chooses $r'=r$ then he or she looses. Otherwise, the game is won by \eve if the sequence of states visited is accepting in $\Aa$.

The following lemma proves correctness of our construction and thus concludes the proof of Theorem~\ref{thm:game-meager}. Notice that, except the choice of directions $d$, the construction is almost identical to the one presented in Section~\ref{sec:s1s-category}.

\begin{lem}
The automaton $\Bb$ accepts a tree $t\in\trees{\Sigma}$ if and only if the language from~\eqref{eq:def-of-lang-B-trees} is comeager.
\end{lem}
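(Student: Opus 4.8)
The plan is to follow the proof of Lemma~\ref{lem:s1s-m-corr} and reduce correctness of $\Bb$ to the Banach--Mazur characterisation of comeagerness. Write $L\subseteq\trees{\Gamma}$ for the language in~\eqref{eq:def-of-lang-B-trees}; being a continuous preimage of $\lang(\Aa)$ (which is $\adelta{2}$) under the embedding $t'\mapsto(t,t')$, the set $L$ has the Baire property, so by Theorem~\ref{bm_game_theorem}, in the combinatorial form of Definition~\ref{def:bm-trees}, the game $\BM(L)$ is determined and $\eve$ wins it iff $L$ is comeager. The acceptance game $\Bb(t)$ has an $\omega$-regular winning condition, hence is also determined; let $P\in\{\eve,\adam\}$ be its winner, so that $t\in\lang(\Bb)$ iff $P=\eve$. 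The lemma reduces to the claim that \emph{$P$ wins $\BM(L)$}: if $P=\eve$ this gives that $L$ is comeager, and if $P=\adam$ then by determinacy $\eve$ does not win $\BM(L)$, so $L$ is not comeager; in either case $t\in\lang(\Bb)$ iff $L$ is comeager.

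To prove the claim, I would fix a winning strategy $\sigma_P$ of $P$ in $\Bb(t)$ and have $P$ play $\BM(L)$ by simulating it, maintaining round by round a finite subtree of configurations of $\Bb(t)$ consistent with $\sigma_P$. The difference from the word case is that $\Bb(t)$ proceeds along a single branch per play while $\BM(L)$ grows tree-prefixes in all directions, so this configuration subtree must branch at precisely the $\Aa$-transitions owned by $\bar P$ (where $\bar P$ chooses the direction $d$ in $\Bb(t)$ and the tree-prefix has to grow in both directions), while following the single $\sigma_P$-prescribed direction at $\Aa$-transitions owned by $P$. The invariant to keep is that the leaves of this configuration subtree are in bijection with the leaves of the current tree-prefix $s_n$ lying along the directions chosen so far, each leaf carrying a configuration $\langle u,(q,r)\rangle$ whose $r$ is the player due to move next on that branch. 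In a round owned by $\bar P$, the opponent plays an extension $s_{n+1}$ of $s_n$, which we simulate in $\Bb(t)$ by reading the labels $b$ that $\bar P$ put at the new internal nodes with $r'=r=\bar P$, branching the configuration subtree at $P$-owned transitions as needed to stay inside $s_{n+1}$, and letting $\bar P$ switch to $r'=P\neq r$ at the new leaves. In a round owned by $P$, the player follows $\sigma_P$ with $r'=r=P$ along each branch until $\sigma_P$ prescribes $r'\neq r$; since $\sigma_P$ is winning, on no branch can $\sigma_P$ keep $r'=r=P$ forever (by the acceptance clauses of $\Bb$ that would lose the branch for $P$), so on each branch this terminates and yields a finite tree-prefix $s_{n+1}$ for $P$ to play.

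Finally I would check that this strategy wins $\BM(L)$. By construction $r'\neq r$ happens infinitely often along every branch of the limit configuration tree, so by the definition of $\Ff^{\Bb}$ the limit play of $\Bb(t)$ is won by $P$ iff along every branch the induced sequence of $\Aa$-states is accepting. Here the restriction to \emph{game} automata is crucial: the directions chosen at $\lor$-transitions of $\Aa$ (owned by $\eve$) form exactly an $\eve$-strategy in the acceptance game $\Aa(t,t')$ of the constructed tree $t'=\bigcup_n s_n$, and the branches of the configuration tree are precisely the $\Aa(t,t')$-plays consistent with it. Hence $\eve$ wins $\Aa(t,t')$ iff $P=\eve$, i.e.\ $(t,t')\in\lang(\Aa)$ iff $P=\eve$, i.e.\ $t'\in L$ iff $P=\eve$, so $P$ has won the considered play of $\BM(L)$.

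I expect the main obstacle to be the bookkeeping of the second paragraph: formulating the invariant that ties the frontier of the Banach--Mazur tree-prefix to the frontier of a $\sigma_P$-consistent configuration subtree of $\Bb(t)$, and verifying it is preserved through both types of rounds once the extra direction choices are threaded through the $\lor$- and $\land$-transitions of $\Aa$; a subtler point is to use that a strategy in the alternating game $\Bb(t)$ already packages a full run subtree, so that $\sigma_P$ may legitimately be replayed branch by branch.
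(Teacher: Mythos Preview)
Your plan is essentially the paper's proof: it too fixes the winner $P$ of $\Bb(t)$, simulates $\sigma_P$ in $\BM$ while maintaining a prefix-closed set $\tau_n\subseteq\{\dL,\dR\}^\ast$ together with an assignment $\pi_n$ of $\sigma_P$-consistent finite plays to its nodes, branching $\tau_n$ exactly at the $\Aa$-transitions owned by $\bar P$ and following $\sigma_P$'s direction at the $P$-owned ones, and finally reads off from $E=\bigcup_n\tau_n$ a winning strategy of $P$ in $\Aa(t,t')$.

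Two slips to fix when you write it out. In the $\bar P$-round you wrote ``branching \ldots at $P$-owned transitions'', but (consistently with your own earlier sentence) it must be $\bar P$-owned transitions that cause branching; at $P$-owned ones you follow $\sigma_P$. In the last paragraph, the directions recorded in $E$ are the $P$-choices, so $E$ encodes a $P$-strategy in $\Aa(t,t')$ (not specifically an $\eve$-strategy at $\lor$-transitions); your conclusion ``$\eve$ wins $\Aa(t,t')$ iff $P=\eve$'' is still correct once phrased this way. Finally, in the $P$-round the per-branch termination you argue only yields a \emph{finite} tree-prefix $s_{n+1}$ via K\"onig's Lemma (finite branching of $\tau_{n+1}$ plus no infinite branch); the paper makes this explicit.
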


Before giving a formal (but technical) proof of this lemma, we will provide an overview. Assume that a player $P$ has a winning strategy $\sigma_P$ in the acceptance game $\Bb(t)$ of $\Bb$ over a tree $t\in\trees{\Sigma}$. By the definition we know that $t\in\lang(\Bb)$ iff $P=\eve$. Our aim is to simulate the strategy $\sigma_P$ as a strategy of $P$ in the Banach--Mazur game over the language~\eqref{eq:def-of-lang-B-trees} (we denote this game $\BM$ for the rest of the proof). This will imply that: $t\in\lang(\Bb)$ iff $P=\eve$ iff $\eve$ wins $\BM$ iff the language~\eqref{eq:def-of-lang-B-trees} is comeager.

Since the domain of all our trees is the same --- $\{\dL,\dR\}^\ast$, we can imagine that during a play of $\BM$, the players write the respective prefixes $\parfun{s}{\{\dL,\dR\}^\ast}{\Gamma}$ on top of the given tree $t\in\trees{\Sigma}$. More formally, for $\parfun{s}{\{\dL,\dR\}^\ast}{\Gamma}$ let us define a prefix $\parfun{t\otimes s}{\{\dL,\dR\}^\ast}{\Sigma\times\Gamma}$ as follows: let $\dom(t\otimes s)\eqdef\dom(s)$ and for $u\in\dom(s)$ let $\big(s\otimes t\big)(u)\eqdef\big(t(u),s(u)\big)$.

We will use the strategy $\sigma_P$ to help the player~$P$ win $\BM$ by the following two requirements:
\begin{enumerate}
\item Player~$P$ will provide the letters of $s_n$ and the moments to finish the current prefix according to his choices in the transitions of $\Bb$,
\item Player~$P$ will separately store some information about his choices regarding directions in a data structure called $\tau$.
\end{enumerate}

\usetikzlibrary{decorations.pathreplacing}
\usetikzlibrary{automata,positioning}
\usetikzlibrary{decorations.pathmorphing}
\usetikzlibrary{decorations.markings}

\tikzstyle{tau}=[draw=black, ultra thick]
\tikzstyle{endpoint}=[draw, circle, inner sep=1.8pt]

\tikzstyle{rbrace} = [draw, thick, decoration={brace, mirror, raise=0.0cm}, decorate, every node/.style={anchor=west, xshift= 0.1cm, scale=1.0}]

\tikzstyle{lbrace} = [draw, thick, decoration={brace, raise=0.0cm}, decorate,
    every node/.style={anchor=east, xshift=-0.1cm, scale=1.0}]

\newcommand{\tD}{0.3}
\newcommand{\tV}{0.6}

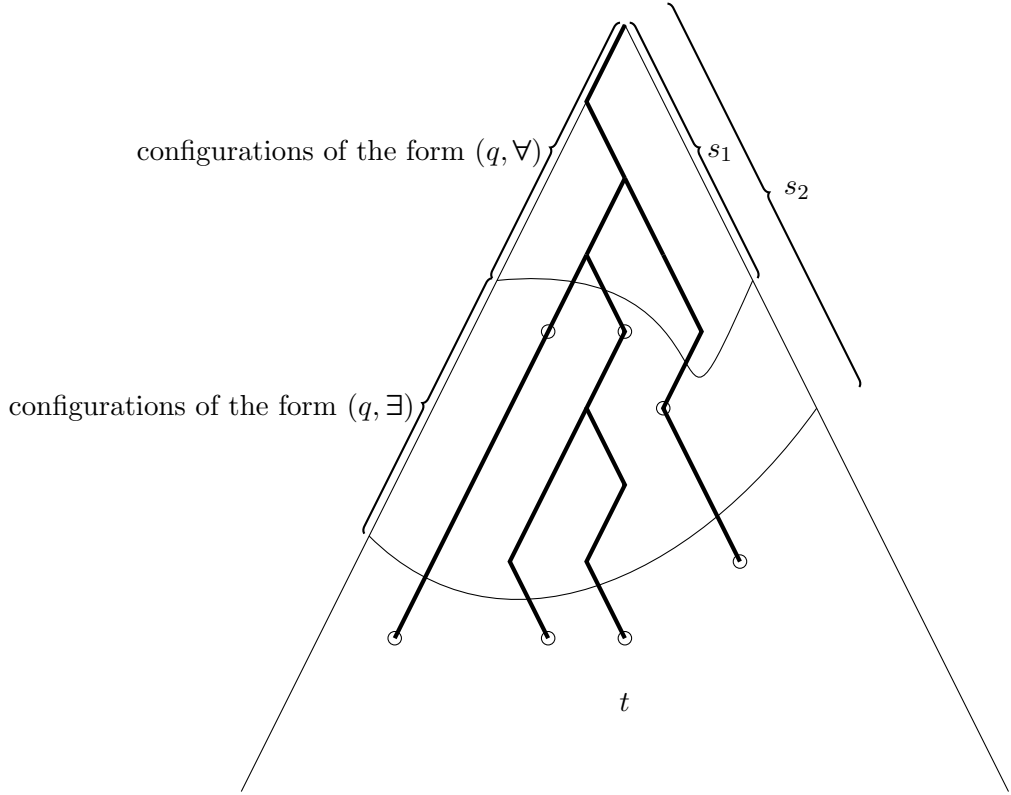
\begin{figure}
\centering
\begin{tikzpicture}[scale=1.7]

\draw[draw=black] (-3,-6) -- (0,0) -- (+3, -6);

\draw (-1, -2) .. controls ++(+2.0,+0.2) and ++(-0.8,-1.8) .. (1,-2);

\draw (-2, -4) .. controls ++(+1,-1.0) and ++(-1,-1.4) .. (1.5,-3);

\draw[tau] (0,0) -- ++(-\tD,-\tV) -- ++(+\tD,-\tV) ;

\draw[tau] (0,-2*\tV) -- ++(-\tD, -\tV);
\draw[tau] (0,-2*\tV) -- ++(+\tD, -\tV);

\draw[tau] (-\tD,-3*\tV) -- ++(-\tD, -\tV) -- ++(-\tD, -\tV) -- ++(-\tD, -\tV) -- ++(-\tD, -\tV) -- ++(-\tD, -\tV);
\draw[tau] (-\tD,-3*\tV) -- ++(+\tD, -\tV) -- ++(-\tD, -\tV) -- ++(+\tD, -\tV) -- ++(-\tD, -\tV) -- ++(+\tD, -\tV);
\draw[tau] (-\tD, -5*\tV) -- ++(-\tD, -\tV) -- ++(-\tD, -\tV) -- ++(+\tD, -\tV);
\draw[tau] (\tD,-3*\tV) -- ++(+\tD, -\tV) -- ++(-\tD, -\tV) -- ++(\tD, -\tV) -- ++(\tD, -\tV) ;

\node[endpoint] at (-2*\tD, -4*\tV) {};
\node[endpoint] at (+0*\tD, -4*\tV) {};
\node[endpoint] at (+1*\tD, -5*\tV) {};
\node[endpoint] at (+3*\tD, -7*\tV) {};
\node[endpoint] at (-6*\tD, -8*\tV) {};
\node[endpoint] at (-2*\tD, -8*\tV) {};
\node[endpoint] at (+0*\tD, -8*\tV) {};

\draw (1.0,-2.0) edge[rbrace, decoration={raise=2pt}] node{$s_1$} (0.0,-0);
\draw (1.5,-3.0) edge[rbrace, decoration={raise=18pt}] node[xshift=17pt, yshift=10pt]{$s_2$} (0.0,-0);

\draw (-1.0,-2.0) edge[lbrace, decoration={raise=2pt}] node {configurations of the form $(q,\adam)$} (0.0,-0);

\draw (-2,-4) edge[lbrace, decoration={raise=2pt}] node {configurations of the form $(q,\eve)$} (-1.0,-2.0);

\node[scale=1.0] at (0, -5.3) {$t$};

\end{tikzpicture}
\caption{An illustration of our simulation procedure. We are right after the second round of $\BM$. The currently played prefixes are $s_0=\emptyset$, $s_1$ and $s_2$. The boldfaced subtree of $t$ is our data structure $\tau_2$ (it contains $\tau_1$). The nodes in circles are the end-points of $\tau_1$ and $\tau_2$. While we are still in $s_1$, all the configurations visited in the simulated plays of $\Aa(t,t')$ are of the form $\langle u,(q,\adam)\rangle$. Then, within $s_2$ we ensure that the configurations have the form $\langle u, (q,\eve)\rangle$ etc. The branching in $\tau_1$ and $\tau_2$ occurs exactly in those places where the respective transition of the game automaton $\Aa$ is controlled by the opponent of our Player~$P$. When the respective transition belongs to $P$, the choice of direction is resolved immediately. In that case we do not care about the content of the played prefixes in the skipped subtree (outside the respective $\tau_n$).}
\label{fig:game-invariant}
\end{figure}

Then, we will prove that a play of $\BM$ is won by Player~$P$ by noticing that the finally obtained data structure $\tau$ is a winning strategy of $P$ in the acceptance game $\Aa(t,t')$ of the game automaton $\Aa$ over the product tree $(t,t')$ with $t'=\bigcup_{n\in\w} s_n$.

The following claim describes the invariants of our construction, see Figure~\ref{fig:game-invariant} for an illustration.

\begin{clm}
\label{cl:inv-game-meager}
Assume that the current configuration of $\BM$ is $(s_n,r_n)$. In that case, there must exist: a set $\tau_n\subseteq\{\dL,\dR\}^\ast$ of nodes and an assignment $\pi_n$ that for every node $u\in\tau_n$ gives a finite play $\pi_n(u)$ of the acceptance game of $\Aa$ over $t\otimes s_n$. The set $\tau_n$ and the assignment $\pi_n$ need to extend the previous ones $\tau_{n-1}$ and $\pi_{n-1}$ and additionally the following invariants are required:
\begin{enumerate}
\item $\epsilon\in\tau_n$; $\tau_n$ is prefix-closed; and $\pi_n(\epsilon)$ consists of the initial configuration $\langle \epsilon, (q_\init, \adam)\rangle$ of $\Aa(t\otimes s_n)$.
\item If $u\in\tau_n$ and $u\notin s_n$ then no sequence extending $u$ belongs to $\tau_n$ (we call such $u$ an \emph{end-point} of $\tau_n$).
\item If $ud\in\tau_n$ for some $u$, $d$, then the play $\pi_n(ud)$ extends the play $\pi_n(u)$ by a round played according to $\sigma_P$ in which the chosen letter $b$ is $s_n(u)$; the chosen player $r'$ equals the previous player $r$ iff $ud$ is not an end-point for one of $\tau_0$, $\tau_1$, \ldots, $\tau_n$; and the chosen direction is $d$.
\label{it:inv-rounds}
\item Take $u\in\tau_n$. Then the play $\pi_n(u)$ needs to end in a configuration $\langle u, (q,r)\rangle$.
\label{it:inv-confs}
\item Take $u\in\tau_n$ that is not an end-point of $\tau_n$ and assume that the play $\pi_n(u)$ ends in a configuration of the form $\langle u, (q,r)\rangle$. If $\delta^{\Aa}\big(q, t(u), s_n(u)\big)$ is a transition of Player~$\bar{P}$ then both $u\dL$ and $u\dR$ belong to $\tau_n$; otherwise exactly one of $u\dL$, $u\dR$ belongs to $\tau_n$ (the choice of $d$ will depend on the strategy $\sigma_P$ of $P$).
\end{enumerate}
\end{clm}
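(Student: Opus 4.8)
The plan is to prove Claim~\ref{cl:inv-game-meager} by induction on the round number $n$, building the play of $\BM$, the set $\tau_n$ and the assignment $\pi_n$ simultaneously. For $n=0$ the configuration of $\BM$ is $(\emptyset,\adam)$, and we take $\tau_0=\{\epsilon\}$ with $\pi_0(\epsilon)$ consisting of the single configuration $\langle\epsilon,(q_\init,\adam)\rangle$; since $\dom(s_0)=\emptyset$, the node $\epsilon$ is an end-point and invariants~(1)--(5) hold vacuously. For the inductive step assume that $(s_n,r_n)$, $\tau_n$, $\pi_n$ satisfy the invariants; a short check using invariant~(3) shows that $\pi_n(u)$ ends in a configuration $\langle u,(q,r_n)\rangle$ for every end-point $u$ of $\tau_n$. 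If $r_n=\bar{P}$, the opponent picks a tree-prefix $s_{n+1}$ extending $s_n$ and Player~$P$ only updates the bookkeeping: every end-point $u$ of $\tau_n$ is now an internal node of, or a leaf of, $s_{n+1}$, and we prolong $\pi_n(u)$ by one round of the acceptance game of $\Bb$ in which the owner $\bar{P}$ plays the already-fixed letter $b=s_{n+1}(u)$, keeps $r'=r$ while $u$ is internal in $s_{n+1}$ and switches to $r'=r_{n+1}$ at the leaves of $s_{n+1}$, and the owner of $\delta^{\Aa}\big(q,t(u),b\big)=(\dL,q_\dL)\yOper(\dR,q_\dR)$ resolves the direction $d$ --- this direction is read off $\sigma_P$ when $P$ owns the $\Aa$-transition (and only $ud$ is then added to $\tau_{n+1}$), otherwise both $u\dL$ and $u\dR$ are added. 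Recursing down to the leaves of $s_{n+1}$ yields $\tau_{n+1}\supseteq\tau_n$, $\pi_{n+1}\supseteq\pi_n$ and the configuration $(s_{n+1},r_{n+1})$, with invariants~(1)--(5) read directly off the construction.

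The case $r_n=P$ is the delicate one: now $P$ himself produces $s_{n+1}$ by running $\sigma_P$ forward from every end-point $u$ of $\tau_n$, choosing at each visited node, via $\sigma_P$, the letter $b$, the next player $r'$, and --- when $P$ owns $\delta^{\Aa}\big(q,t(u),b\big)$ --- the direction, while spawning both directions at $\Aa$-transitions owned by $\bar{P}$. This produces a tree of partial continuations that is finitely (in fact binary) branching, the branching occurring only at $\bar{P}$-owned transitions. On every branch $\sigma_P$ must eventually choose $r'\ne r$: otherwise that branch, run forever, would be an infinite play of the acceptance game of $\Bb$ over $t$ that is consistent with $\sigma_P$ and whose set of infinitely visited states $F$ is contained in $Q\times\{P\}$, and such an $F$ contradicts the assumption that $\sigma_P$ is winning for $P$ --- if $P=\adam$ then $F$ meets clause~(1) of $\Ff^{\Bb}$, so the play is won by $\eve$, while if $P=\eve$ then $F$ contains no state of the form $(q',\adam)$, so neither clause of $\Ff^{\Bb}$ is satisfied and the play is won by $\adam$. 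Hence this tree has no infinite branch and is therefore finite by K\"onig's lemma; the letters played on it define $s_{n+1}$ on the visited nodes (the skipped subtrees being filled arbitrarily, since no invariant constrains them), $\tau_{n+1}$ is the set of visited nodes, $\pi_{n+1}$ records the corresponding partial plays, and the configuration of $\BM$ becomes $(s_{n+1},r_{n+1})$; the five invariants again hold by construction.

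The main obstacle is precisely this termination in the case $r_n=P$ --- that a strategy $\sigma_P$ which is winning in the acceptance game of $\Bb$ cannot postpone the switch of $r'$ indefinitely along any branch. This is where the deliberately asymmetric Muller condition $\Ff^{\Bb}$ is essential (clause~(1) hands a non-switching $\adam$ the loss, clause~(2) discards runs not seeing states of both players), combined with K\"onig's lemma to close a single round. Everything else is a routine, if somewhat tedious, verification that the letters, player components and directions recorded in $\pi_{n+1}$ indeed match a play of the acceptance game of $\Bb$ over $t$ produced according to $\sigma_P$, which is exactly what preserves invariants~(1)--(5).
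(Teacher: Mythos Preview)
Your proposal is correct and follows essentially the same approach as the paper: induction on $n$ with the trivial base case $\tau_0=\{\epsilon\}$, the two cases $r_n=\bar{P}$ (opponent plays $s_{n+1}$, then $P$ extends $\tau$ and $\pi$ node by node, branching at $\bar{P}$-owned $\Aa$-transitions and following $\sigma_P$ at $P$-owned ones) and $r_n=P$ (grow $s_{n+1}$, $\tau_{n+1}$, $\pi_{n+1}$ by running $\sigma_P$ forward), with termination in the latter case via K\"onig's lemma and the observation that an infinite non-switching branch would yield a play losing for $P$ by the design of $\Ff^{\Bb}$. Your write-up even makes the losing-for-$P$ case analysis more explicit than the paper does.
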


Notice that Invariants~\eqref{it:inv-rounds} and~\eqref{it:inv-confs} guarantee that the last configuration of the play $\pi_n(u)$ for $u$ an end-point of $\tau_n$ is of the form $\langle u, (q, r_n)\rangle$ --- the position in the tree is $u$ and the players switch exactly when moving to an end-point of one of $\tau_0$, \ldots, $\tau_n$.

Initially, for $n=0$, we have $s_0=\emptyset$ and all the invariants are satisfied by $\tau_0=\{\epsilon\}$ and $\pi_0$ that maps $\epsilon$ to the initial configuration of the acceptance game of $\Aa$ (i.e.~$\langle\epsilon,(q_\init,\adam)\rangle$).

We will now describe how to inductively preserve the invariants from Claim~\ref{cl:inv-game-meager}. Consider an $n$th round of the game $\BM$. Its initial configuration is $(s_n,r_n)$ with a set $\tau_n$ and assignment $\pi_n$. There are two cases depending whether $r_n=P$ or not.

\paragraph*{\bf Simulation: the case of $r_n=\bar{P}$.} First assume that $r_n\neq P$, i.e.~the considered round of $\BM$ is controlled by our opponent. Assume that in this round the player $r_n$ plays a prefix $s_{n+1}$ that extends $s_n$. Thus, the round ends in the configuration $(s_{n+1}, P)$. We construct $\tau_{n+1}$ and $\pi_{n+1}$ as follows: start from $\tau_{n+1}:=\tau_n$, $\pi_{n+1}:=\pi_n$ and inductively extend $\tau_{n+1}$ and $\pi_{n+1}$ for every end-point $u$ of $\tau_{n+1}$ that belongs to $s_{n+1}$. Consider a round of the acceptance game of $\Aa$ over $t\otimes s_{n+1}$ right after the last configuration of $\pi_{n+1}(u)$. We know that this configuration is of the form $\langle u,(q,\bar{P})\rangle$.

Assume that in this round $\bar{P}$ chooses as the letter $b=s_{n+1}(u)$ and as the successive player $r'=\bar{P}$ if $u$ is not a leaf of $s_{n+1}$ and $r'=P$ otherwise. Now consider the following cases for the rest of this round:
\begin{itemize}
\item If the transition $\delta^{\Aa}\big(q,t(u),s_{n+1}(u)\big)$ belongs to $\bar{P}$ then for both $d\in\{\dL,\dR\}$ we add $ud\in \tau_{n+1}$ and define $\pi_{n+1}(ud)$, assuming that $\bar{P}$ played $d=\dL$ and $d=\dR$ respectively.
\item Otherwise, the strategy $\sigma_P$ chooses some direction $d\in\{\dL,\dR\}$. We then add $ud\in \tau_{n+1}$ and define $\pi_{n+1}(ud)$, assuming that $P$ played $d$. In that case $u\bar{d}\notin\tau_{n+1}$.
\end{itemize}
Clearly by the definition all the invariants are satisfied in this case.

\paragraph*{\bf Simulation: the case of $r_n=P$.} Now take the more involved case when $r_n=P$, i.e.~the considered round of $\BM$ is controlled by Player $P$ which strategy $\sigma_P$ we simulate. We will define $s_{n+1}$, $\tau_{n+1}$, and $\pi_{n+1}$ inductively, starting from the end-points of $\tau_n$. For nodes outside the constructed set $\tau_{n+1}$ the letters of $s_{n+1}$ are arbitrary, i.e.~we can assume that for every $ud$ such that $u\in\tau_{n+1}$ or $u\in s_n$ but $ud\notin \tau_{n+1}$ we let $s_{n+1}(ud)=b_0$ for some fixed letter $b_0\in\Gamma$. In that case $ud$ will be a leaf of $s_{n+1}$.

We start from $\tau_{n+1}:=\tau_n$, $\pi_{n+1}:=\pi_n$. Let $u$ be an end-point of $\tau_{n+1}$ such that the last configuration of the play $\pi_{n+1}(u)$ is of the form $\langle u, (q, r)\rangle$. If $r\neq P$ then we finish this branch of construction, letting $u\notin s_{n+1}$ be an end-point of $\tau_{n+1}$. If $r=P$ it means that $u$ will not be an end-point of $\tau_{n+1}$. Consider the successive round after the play $\pi_{n+1}(u)$ in which $P$ plays according to $\sigma_P$. First, $P$ chooses a letter $b\in \Gamma$ and a player $r'\in\{\eve,\adam\}$. We can immediately define $s_{n+1}(u)=b$. Now consider the following cases for the rest of this round:
\begin{itemize}
\item If the transition $\delta^{\Aa}\big(q,t(u),b\big)$ belongs to $\bar{P}$ then for both $d\in\{\dL,\dR\}$ we add $ud\in \tau_{n+1}$ and define $\pi_{n+1}(ud)$, assuming that $\bar{P}$ played $d=\dL$ and $d=\dR$ respectively.
\item Otherwise, the strategy $\sigma_P$ chooses some direction $d\in\{\dL,\dR\}$. We then add $ud\in \tau_{n+1}$ and define $\pi_{n+1}(ud)$, assuming that $P$ played $d$. In that case $u\bar{d}\notin\tau_{n+1}$.
\end{itemize}

If the above inductive procedure ends after finitely many steps, we obtain a finite prefix $s_{n+1}$ together with $\tau_{n+1}$ and $\pi_{n+1}$. Notice that for every end-point $u$ of $\tau_{n+1}$ the last configuration of the play $\pi_{n+1}(u)$ is of the form $\langle u,(q, \bar{P})\rangle$. Thus, the invariant is satisfied.

Consider the opposite case that the procedure runs indefinitely. In that case, by K\"onig's Lemma, there is an infinite path in the constructed set $\tau_{n+1}$. This path corresponds to an infinite play of the acceptance game of $\Bb$ over $t$ in which Player~$P$ keeps $r'=r=P$ constantly equal $P$ from some point on. This contradicts the assumption that $P$ plays according to a winning strategy $\sigma_P$, as such a play is losing for Player~$P$.

This way we have managed to play a consecutive round of $\BM$ while preserving the invariants. Therefore, by induction on $n$ we can construct a strategy of Player~$P$ in $\BM$.

\paragraph*{\bf Why $P$ wins?} What remains to prove is that any play of $\BM$ in which $P$ plays according to this simulation strategy is winning. Take such a play and consider $t'=\bigcup_{n\in\w} s_n$, $E=\bigcup_{n\in\w} \tau_n$, $\Pi =\bigcup_{n\in\w} \pi_n$. It remains to prove that $(t,t')\in \lang(\Aa)$ if and only if $P=\eve$. We achieve that by proving that $E$ encodes a winning strategy of $P$ in the acceptance game of $\Aa$ over $(t, t')$. Clearly, by the structure of all the sets $\tau_n$, their union $E$ encodes the following strategy of $P$: stay in the nodes of $(t,t')$ that belong to $E$. Take any infinite branch $\beta$ contained in $E$. Notice that there is a unique play $\pi$ of the acceptance game of $\Bb$ over $t$ that is the limit of the plays $\Pi(u)$ for $u\prec \beta$. Thus, since we considered plays according to a winning strategy $\sigma_P$ of $P$, this play $\pi$ must be winning for $P$. Clearly, the sequence of players $r$ in this play does not have a limit. Thus, the sequence of visited states $q$ must be either accepting if $P=\eve$ or rejecting otherwise. This concludes the proof that the strategy represented by $E$ is winning for $P$. Therefore, $(t,t')\in\lang(\Aa)$ iff $P=\eve$ and thus $P$ wins $\BM$.


\section{S2S extended with the measure quantifier}
\label{sec:s2s-measure}
In this section we consider the extension of the monadic second order logic of the full binary tree (\StwoS) with the measure quantifier.

The definitions of the syntax and semantics of $\StwoS+\qN$ are similar to those given for $\SoneS+\qN$. The syntax of $\StwoSN$ extends that of $\StwoS$ with the new second-order quantifier $\qN$ as follows:

\[
\phi\, ::=\,   \succL(x,y) \mid \succR(x,y)  \mid x\in X \mid \neg \phi \mid \phi_1 \vee \phi_2 \mid \forall x.\, \phi \mid \forall X.\, \phi \mid \qN X.\, \phi
\]

The semantics of the measure quantifier is specified as follows.

\[
\lang\Big(
\qN X.\, \phi(X,Y_1,\dots, Y_n)\Big)  = \Big\{
t\in\trees{\Gamma} \mid
	\mu_t\big( \{ t^\prime\in\trees{\{0,1\}}\mid  \phi(t^\prime,t) \textnormal{ holds} \} \big) = 1      \Big\}
\]
where $\Gamma=\{0,1\}^n$ and $\mu_t$ is the Lebesgue measure on $\trees{\{0,1\}}$.

Once again, informally,  $t$ satisfies $\forall^{=1}X.\, \phi(X,\vv{Y})$ if ``for almost all''  $t^\prime$, the tuple $(t^\prime,t)$ satisfies $\phi$, where ``almost all'' means for all but a negligible (having measure $0$) set.

Our main result regarding the logic $\StwoS+\qN$ is the following.

\begin{thm}\label{thm:msoqn:undecidable}
The logic $\StwoS+\qN$ has an undecidable theory. 
\end{thm}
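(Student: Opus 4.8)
The plan is to reduce the theory of $\SoneSN$, known to be undecidable by Corollary~\ref{undec:s1s:measure}, to the theory of $\StwoSN$ by the standard interpretation of monadic logic over $(\w,{<})$ inside monadic logic over the full binary tree, suitably adapted to carry the measure quantifier. Fix the branch $B=\{\dR^{n}\mid n\in\w\}$ (where $\dR^{0}=\epsilon$); both the unary predicate ``$x\in B$'' and the ancestor order $\preceq$ restricted to $B$ are $\StwoS$-definable, and $n\mapsto\dR^{n}$ is an order-isomorphism of $(\w,{<})$ with $(B,\preceq)$. For a $\SoneSN$ formula one then relativizes each first-order quantifier to $B$ and --- in the naive version --- each ordinary monadic quantifier to subsets of $B$; the only step that requires care is the measure quantifier.

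The point is that one cannot relativize $\qN X.\,\psi$ to $\qN X.\,(X\subseteq B\wedge\psi^{*})$, because under the Lebesgue measure $\mu_t$ on trees the event $X\subseteq B$ has measure $0$ (the complement $\{\dL,\dR\}^{*}\setminus B$ is infinite), so such a formula would never be satisfied. Instead I would route the quantifier through the \emph{section map} $\fun{\pi}{\trees{\{0,1\}}}{2^{B}}$ defined by $\pi(X)\eqdef X\cap B$. This map is continuous, and since for $\mu_t$ the bit ``$\dR^{n}\in X$'' is an independent fair coin, $\pi$ pushes $\mu_t$ forward to the Lebesgue measure $\mu_\w$ on $2^{B}\cong 2^{\w}$. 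Accordingly I would set
\[
\big(\qN X.\,\psi(X,\vec Y)\big)^{*}\ \eqdef\ \qN X'.\,\exists Y.\,\Big(\forall z.\,\big(z\in Y\leftrightarrow z\in X'\wedge z\in B\big)\Big)\wedge\psi^{*}(Y,\vec Y),
\]
i.e.\ quantify over an arbitrary tree-subset $X'$ but feed only its trace $X'\cap B$ on $B$ into the inductively translated body $\psi^{*}$.

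Correctness would be proved by induction on subformulas: for each subformula $\psi(\vec x,\vec X)$ and all $\vec a\in B$, $\vec A\subseteq B$ one shows $(\w,{<})\models\psi[\vec a,\vec A]$ iff the full binary tree satisfies $\psi^{*}[\vec a,\vec A]$. All cases except $\qN$ are routine. For $\qN X.\,\psi$, the induction hypothesis gives $\{X'\mid\text{tree}\models(\text{body of }(\qN X.\,\psi)^{*})[X',\vec A]\}=\pi^{-1}(S)$, where $S=\{Y\subseteq B\mid(\w,{<})\models\psi[Y,\vec A]\}$. If $\mu_\w(S)=1$ then, $\pi$ being continuous and measure-preserving, $\pi^{-1}(S)$ is $\mu_t$-measurable with $\mu_t(\pi^{-1}(S))=1$. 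For the converse one uses the product decomposition $\trees{\{0,1\}}\cong 2^{B}\times 2^{\{\dL,\dR\}^{*}\setminus B}$ under which $\mu_t$ is the corresponding product measure and $\pi$ is the first projection: picking a Borel $\mu_t$-null set $N$ containing $(2^{B}\setminus S)\times 2^{\{\dL,\dR\}^{*}\setminus B}$ and applying Fubini to $N$ yields a point $w_{0}$ with $\mu_\w(N_{w_{0}})=0$, whence $2^{B}\setminus S\subseteq N_{w_{0}}$ is $\mu_\w$-null and $S$ is $\mu_\w$-measurable of measure $1$. Thus $(\w,{<})\models\qN X.\,\psi[\vec A]$ iff the tree satisfies $(\qN X.\,\psi)^{*}[\vec A]$. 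Applied to sentences, $\phi\mapsto\phi^{*}$ is a computable reduction from $\mathrm{Th}(\SoneSN)$ to $\mathrm{Th}(\StwoSN)$, and undecidability of the former forces undecidability of the latter.

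I expect the measure quantifier to be the only real obstacle, for two related reasons: one must resist the tempting-but-wrong relativization to $\power(B)$ and instead use the section map $\pi$, and one must transfer the property ``has measure $1$'' in \emph{both} directions even though the sets $S$ involved can fail to be Borel (indeed, by Corollary~\ref{undec:s1s:measure}, can fail to be $\adelta{k}$ for every $k$), which is where the Fubini argument on the product structure of $\mu_t$ is needed; the remaining ingredients --- $\StwoS$-definability of $B$ and of the restriction of $\preceq$ to $B$, and the relativization of $\neg$, $\vee$, $\forall x$, $\forall X$ --- are entirely standard.
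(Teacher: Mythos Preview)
Your proposal is correct and follows essentially the same route as the paper: interpret $\SoneSN$ inside $\StwoSN$ via a fixed branch (the paper uses $\{\dL\}^{\ast}$ rather than $\{\dR^{n}\}$), translating $\qN X.\,\psi$ to $\qN X'.\,\psi^{\ast}(X'\cap B)$ and arguing that the section map $X\mapsto X\cap B$ pushes $\mu_t$ forward to $\mu_\w$. The paper checks measure-preservation only on basic clopen sets and leaves it at that; your Fubini argument for the converse direction, handling the possibility that $S$ is not Borel, is a welcome clarification that the paper glosses over.
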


This is obtained as a direct corollary of Theorem~\ref{thm:msou} by interpreting the logic $\SoneS+\qN$ within $\SoneS+\qN$.

Recall that a standard interpretation of $\SoneS$ within $\StwoS$ is based on the identification of $(\mathbb{N},<)$ with the set of vertices $\{\dL\}^*$ (belonging to the leftmost branch of the full binary tree) ordered by the prefix relation, which are both easily definable in $\StwoS$. For example, the $\SoneS$ formula $\forall X.\phi(X)$ is translated to the $\StwoS$ formula 

$$
\forall X. \Big( \overline{\phi}( X\cap \{\dL\}^*)\big)
$$
where $\overline{\phi}(Y)$ is the translation of the simpler formula $\phi(Y)$.

Similarly, we define the translation of $\SoneS+\qN$ formulas of the form $\qN X.\, \phi$ as:

$$
\qN X. \Big( \overline{\phi}( X\cap \{\dL\}^*)\big)
$$
 
To check that this translation is correct it is sufficient to prove that the function $\pi$ 
$$
\pi(X)= X \cap \{\dL\}^* 
$$
mapping subsets of the full binary tree to subsets of the leftmost branch (which can be identified as $\omega$-words over the alphabet $\{0,1\}$) is Lebesgue measure preserving. That is, we need to show that $\pi$ is a continuous surjection, and this is obvious, and that for every Borel set $A\subseteq \{0,1\}^\w$ it holds that $\mu_t(\pi^{-1}(A)) = \mu_w(A)$. By regularity of the Lebesgue measure it is sufficient to  prove that this property holds for arbitrary basic clopen sets $A$.
These are sets of the form $A=U_{\vec{n}=\vec{b}}$, for tuples $\vec{n}=(n_1,\dots, n_k)\in \mathbb{N}^k$ and $(b_1,\dots, b_k)\in \{0,1\}^k$ where:
$$
U_{\vec{n}=\vec{b}} = \big\{ w\in\{0,1\}^{\omega} \mid \bigwedge^k_{i=1} w(n_i) = b_i\big\}
$$
By definition we have that $\mu_w(U_{\vec{n}=\vec{b}})=\frac{1}{2}^k$.

The preimage $\pi^{-1}(A)$ is the clopen set $B= \{ t\in\trees{\{0,1\}} \mid \bigwedge^k_{i=1} t(\dL^n_i) = b_i \}$ and, by definition of Lebesgue measure on trees, it holds that $\mu_t(B)=(\frac{1}{2})^k$, as desired. 


\section{S2S extended with the path-category quantifier}
\label{sec:s2s-category-pi}
As anticipated in the introduction, another interesting way to extend $\StwoS$ with variants of Friedman's Category and Measure quantifiers is to restrict the quantification to range over infinite branches (paths) of the full binary tree. 

In this section we consider the extension of $\StwoS$ with the category quantifier restricted to path, henceforth denoted by $\qMpi$.

We first recall the definition of paths in the full binary tree.

\begin{defi}
A set $X\subseteq \{\dL,\dR\}^*$ of vertices in the full binary tree is a \emph{path} if and only if:
\begin{enumerate}
\item $X$ contains the root $\epsilon$, and
\item if $v\in X$ and $w$ is a prefix of $v$ then $v\in X$,
\item if $v\in X$ then either $v\dL\in X$ or $v\dR\in X$, but not both. 
\end{enumerate}
We denote with $\mathcal{P}$ the collection of paths in the full binary tree. 
\end{defi}
Since every path is uniquely determined by an infinite sequence of directions ($\dL$ or $\dR$), there is a one-to-one correspondence between $\mathcal{P}$ and the  space $\{\dL,\dR\}^\omega$ which is homeomorphic to the Cantor space. 

It is easy to verify that $\mathcal{P}\subseteq \trees{\{0,1\}}$ is Lebesgue null ($\mu_t(\mathcal{P})=0$) and it is meager as a subset of $\trees{\{0,1\}}$. However, since $\mathcal{P}$ is homeomorphic to the Cantor space, it makes sense to consider the probability (i.e., Lebesgue measure $\mu_w$) and Baire category of subsets $A\subseteq\mathcal{P}\subseteq  \trees{\{0,1\}}$ \emph{relative to } $\mathcal{P}$.

This leads to the following definition of the Category-path quantifier $\qMpi$.

\[
\lang\Big(
\qMpi X.\, \phi(X,Y_1,\dots, Y_n)\Big)  = \Big\{
t\in\trees{\Gamma} \mid
	\textnormal{ the set } \{ p \in\mathcal{P}\mid  \phi(p,t^\prime) \textnormal{ holds} \} \big) \textnormal{ is comeager in } \mathcal{P}     \Big\}
\]
where $\Gamma=\{0,1\}^n$.

Our main result regarding the logic $\StwoS+\qMpi$ is the following quantifier elimination theorem.
\begin{thm}\label{qmpi_is_definable}
For every $\StwoS+\qMpi$ formula $\phi$ one can effectively construct an semantically equivalent $\StwoS$ formula $\psi$.
\end{thm}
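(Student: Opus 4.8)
The plan is to transfer the quantifier-elimination argument of Section~\ref{sec:s1s-category} from $\w$-words to branches of the full binary tree, using the composition method to handle the part of the input tree that lies off the chosen branch. As usual it suffices to eliminate a single innermost occurrence of $\qMpi$, so I fix a $\StwoS$ formula $\psi(X,\vec Y)$, put $\Gamma=\{0,1\}^{|\vec Y|}$, identify the space $\mathcal{P}$ of paths with $\{\dL,\dR\}^\w$, and write $p{\restriction}n$ for the length-$n$ prefix of a path $p$. The goal is to show that $\lang\big(\qMpi X.\,\psi(X,\vec Y)\big)\subseteq\trees{\Gamma}$ is regular, hence $\StwoS$-definable; substituting the resulting $\StwoS$ formula back and repeating over the finitely many $\qMpi$'s (innermost first) then proves the theorem.

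\emph{Step 1: making the winning condition $\w$-regular.} Let $k$ be the quantifier rank of $\psi$ and let $\Theta$ be the finite set of rank-$k$ $\MSO$-types relevant here. Fix $t\in\trees{\Gamma}$ and a path $p$ with direction sequence $(d_n)_{n\in\w}$. By the composition method (see, e.g.,~\cite{thomas96}) the truth of $\psi(p,t)$ is determined by the $\w$-word
\[
w(p,t)\ \eqdef\ \big(t(p{\restriction}n),\,d_n,\,\theta_n\big)_{n\in\w}\ \in\ \big(\Gamma\times\{\dL,\dR\}\times\Theta\big)^\w ,
\]
where $\theta_n$ is the rank-$k$ type of the off-branch subtree of $t$ rooted at $p{\restriction}n\cdot\bar d_n$; moreover the dependence is regular, i.e.\ there is a deterministic Muller $\w$-automaton $\Dd$ with $\psi(p,t)\iff w(p,t)\in\lang(\Dd)$. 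Since $w(p,t){\restriction}n$ depends only on $p{\restriction}n$, the map $p\mapsto w(p,t)$ is continuous, so $\{p\in\mathcal{P}\mid\psi(p,t)\}$ is the continuous preimage of the Borel set $\lang(\Dd)$ and therefore has the Baire property. By Theorem~\ref{bm_game_theorem}, in its equivalent $\w$-word form (Definition~\ref{def:bm-words} with alphabet $\{\dL,\dR\}$), this set is comeager in $\mathcal{P}$ if and only if $\eve$ wins the Banach--Mazur game over it.

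\emph{Step 2: simulating that game with a tree automaton.} After a routine re-encoding I may assume the input tree over $\Gamma\times\Theta^{3}$ carries, at each vertex $v$, the label $t(v)$, the type $\tau(v)$ of the subtree at $v$, and the types $\tau(v\dL),\tau(v\dR)$ of its children; ``$\tau$ is the correct type annotation of $t$'' is $\StwoS$-definable (it is $\forall v.\,[\text{the subtree at }v\text{ satisfies }\gamma_{\tau(v)}]$, where $\gamma_\theta$ is the characteristic $\StwoS$ sentence of $\theta$). I build an alternating Muller tree automaton $\Bb$ over $\Gamma\times\Theta^{3}$ that descends a single branch, constructed exactly as in Section~\ref{sec:s1s-category}: the state of $\Bb$ keeps the current state of $\Dd$ together with a flag $r\in\{\eve,\adam\}$ recording whose Banach--Mazur turn it is; at a vertex $v$ the player $r$ chooses a direction $d\in\{\dL,\dR\}$ and the next flag $r'\in\{\eve,\adam\}$, the automaton $\Dd$ reads the letter $\big(t(v),d,\tau(v\bar d)\big)$, and $\Bb$ moves to $vd$ with the updated $\Dd$-state and flag $r'$. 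The acceptance condition along this branch is copied verbatim from Section~\ref{sec:s1s-category}: if $r'$ is eventually constant then its owner loses, and otherwise the branch is winning for $\eve$ iff the visited sequence of $\Dd$-states is Muller-accepting for $\Dd$. Taking the product of $\Bb$ with the $\StwoS$-definable test ``$\tau$ is correct'' yields a regular language $R\subseteq\trees{\Gamma\times\Theta^{3}}$ of correctly annotated trees for which $\{p\in\mathcal{P}\mid\psi(p,t)\}$ is comeager. Since the correct annotation of $t$ is unique, $\lang\big(\qMpi X.\,\psi\big)=\{t\mid\exists\tau.\,(t,\tau)\in R\}$, and since $R$ is $\StwoS$-definable and the $\Theta$-valued annotation is codable by finitely many monadic variables, this is $\StwoS$-definable.

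\emph{Correctness and the main difficulty.} That $\Bb$ is correct is the tree analogue of Lemma~\ref{lem:s1s-m-corr}: a round of the Banach--Mazur game over $\{p\mid\psi(p,t)\}$ in which player $r$ extends the current prefix of directions by a non-empty block $d_0\cdots d_m$ and then passes corresponds to $m{+}1$ consecutive steps of $\Bb$ along the branch in which $r$ keeps $r'=r$ and then switches; via this correspondence one converts a winning strategy of $P\in\{\eve,\adam\}$ in the acceptance game of $\Bb$ into a winning strategy of $P$ in the Banach--Mazur game, and conversely (the ``eventually constant flag'' clause is precisely what forbids a player from stalling forever). The step I expect to be the genuine obstacle is Step~1: an automaton walking down a single branch cannot see the subtrees it leaves behind, and it is the composition method --- summarising each off-branch subtree by its rank-$k$ type and carrying those types in the alphabet --- that bridges this gap, keeps the winning condition $\w$-regular, and (through the continuity of $p\mapsto w(p,t)$) secures the Baire property needed to invoke Theorem~\ref{bm_game_theorem}. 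Once this is in place, Step~2 is essentially a transcription of the $\w$-word construction of Section~\ref{sec:s1s-category}.
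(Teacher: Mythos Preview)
Your argument is correct, but it is a genuinely different and considerably heavier route than the paper's. The paper does not simulate any game at all: it simply observes that a set of paths is comeager iff it contains a dense $G_\delta$ subset, and that every $G_\delta$ set of branches in $\{\dL,\dR\}^\omega$ is of the form $[X]=\{\pi:\text{$\pi$ hits $X$ infinitely often}\}$ for some $X\subseteq\{\dL,\dR\}^\ast$. Then $\qMpi\pi.\,\psi(\pi,\vec Y)$ is rewritten as the $\StwoS$ sentence ``$\exists X$ such that every node has a descendant in $X$ and every branch hitting $X$ infinitely often satisfies $\psi$'', which is immediate. Your approach instead decomposes the tree along a branch via Feferman--Vaught/Shelah composition, reduces to an $\omega$-regular condition on the sequence of (label, direction, off-branch type), and then replays the Banach--Mazur simulation of Section~\ref{sec:s1s-category} inside an alternating tree automaton that walks a single branch of a type-annotated tree. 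This is sound --- the continuity of $p\mapsto w(p,t)$ gives the Baire property, and the ``eventually constant flag loses'' clause correctly encodes the turn-passing discipline --- and it has the virtue of being uniform with the $\SoneS$ proof and of producing an explicit automaton; but it pays the non-elementary cost of the composition method, whereas the paper's topological shortcut avoids automata entirely and yields a one-line $\StwoS$ translation.
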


\begin{proof}
A set of paths $B\subseteq \{\dL,\dR\}^\w$ is comeager if and only if there is a $\Gdelta$ set $G$ that is dense in $\{\dL,\dR\}^\w$ and $G\subseteq B$. A simple argument (see, e.g.,~\cite{Rabin69}) shows that $G\subseteq \{\dL,\dR\}^\w$ is a $\Gdelta$ set if and only if $G$ is of the form $[X]$ for a set $X\subseteq \{\dL,\dR\}^\ast$ where:
\[[X]\eqdef\{\pi\in\{\dL,\dR\}^\w\mid \text{for infinitely many vertices $v\in \pi$ we have $v\in X$}\}.\]

Therefore, $\qMpi \pi.\, \varphi(\pi)$ holds if and only if there exists a set of nodes $X$ of the full binary tree such that:
\begin{itemize}
\item for every node $v$ there exists a node $w$ such that $v\preceq w$ and $w\in X$ (i.e.~$[X]$ is dense in $\{\dL,\dR\}^\w$),
\item for every infinite branch $\pi$, if there are infinitely many $v\in X$ such that $v\prec \pi$ then $\varphi(\pi)$ holds (i.e., ~$[X]\subseteq \lang(\varphi)$)
\end{itemize}
The whole above property is easily \StwoS-definable.
\end{proof}


\section{S2S extended with the path-measure quantifier}
\label{sec:s2s-measure-pi}
Following the ideas presented in the previous section, we now study the extension of $\StwoS$ with the path-measure quantifier $\forall^{=1}_\pi$ whose semantics is defined as follows:

\[
\lang\Big(
\qNpi X.\, \phi(X,Y_1,\dots, Y_n)\Big)  = \Big\{
t\in\trees{\Gamma} \mid
	\mu_{w}\Big( \{ p \in\mathcal{P}\mid  \phi(p,t) \textnormal{ holds} \}  \big) = 1\Big\}
\]
where $\Gamma=\{0,1\}^n$. 

Therefore, the property of $\qNpi X.\, \phi$ holds if the property $\phi$ holds for almost all pahts $p$, where ``for almost all'' means having probability $1$ with respect to the Lebesgue measure $\mu_w$ on $\{\dL,\dR\}^\w$ (and thus not with respect to $\mu_t$ on $\trees{\{0,1\}}$).

It is not immediately clear from the previous definition if the quantifier $\qNpi$ can be expressed in $\StwoS+\qN$. Indeed, as already observed, the set $\mathcal{P}$ of paths is Lebesgue null with respect to the Lebesgue measure $\mu_t$ on $\trees{\{0,1\}}$, i.e.,  $\mu_t(\mathcal{P})=0$. Therefore  the naive definition 
$$\qNpi X.\, \phi( X)  = \qN X.\, \big( \text{``$X$ is a path''} \wedge \phi(X) \big)$$
does not work. Indeed the $\StwoS+\qN$ formula on the right always defines the empty set because the collection of $X\in\trees{\{0,1\}}$  satisfying the conjunction is a subset of $\mathcal{P}$ and therefore has $\mu_t$ measure $0$. 

Nevertheless the quantifier $\qNpi$ can be expressed in $\StwoS+\qN$ with a more elaborate encoding (Theorem~\ref{encodabilty_of_quantifier} below).  What is needed is a \StwoS  definable continuous and measure preserving function $f$ mapping trees $X\in\trees{\{0,1\}}$ to a paths $f(X)\in \mathcal{P}$. We now prove that such definable function exists.

\begin{defi}
\label{red_to_pi}
Define the binary relation $f(X, Y)$ on $\trees{\{0,1\}}$ by the following $\StwoS$ formula: 
\[\text{``$Y$ is a path''} \wedge  \forall  y\in Y.\, \exists z.\, ( \SuccL(y,z)  \wedge (z\in Y \Leftrightarrow y\in X)).\]
\end{defi}

\begin{lem}
\label{lemma_property_leadsto}
For every $X\in\trees{\{0,1\}}$ there exists exactly one $Y\in\mathcal{P}\subseteq\trees{\{0,1\}}$ such that $f(X,Y)$. Hence the relation $f$ is a function $\fun{f}{\trees{\{0,1\}}}{\mathcal{P}}$. Furthermore $f$ satisfies the following properties:
\begin{enumerate}
\item $f$ is a continuous and surjective function,
\item $f$ is measure preserving, i.e., for every Borel set $B\subseteq\mathcal{P}$ it holds that $\mu_t(f^{-1}(B))=    \mu_w(B)$.
\end{enumerate}
\end{lem}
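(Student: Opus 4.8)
The plan is to first pin down exactly which function $f$ is. Since every vertex of the full binary tree has a unique left child, the subformula $\exists z.\,\SuccL(y,z)\wedge(z\in Y\Leftrightarrow y\in X)$ simply asserts that $y\dL\in Y$ if and only if $y\in X$; and whenever $y$ lies on the path $Y$, the conjunct ``$Y$ is a path'' forces exactly one of $y\dL,y\dR$ to belong to $Y$. Hence $f(X,Y)$ says precisely that $Y$ is the path that starts at $\epsilon$ and, at each of its own vertices $v$, turns left if $X(v)=1$ and turns right if $X(v)=0$. Running this recursion from the root yields one and only one such $Y$, and it is routine to check that it is indeed a path; conversely any $Y$ satisfying $f(X,Y)$ obeys the same recursion and therefore coincides with it. This proves the first assertion of the lemma: $f$ is a total function $\fun{f}{\trees{\{0,1\}}}{\mathcal P}$.

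Next I would record a formula for the initial segments of $f(X)$. Writing a path of $\mathcal P\cong\{\dL,\dR\}^\w$ as a direction sequence $d_1d_2\cdots$, and putting $v_0=\epsilon$ and $v_i=d_1\cdots d_i$, the recursion above gives: $f(X)$ begins with the finite word $w=d_1\cdots d_n$ if and only if $X(v_i)=b_i$ for $i=0,\dots,n-1$, where $b_i=1$ when $d_{i+1}=\dL$ and $b_i=0$ when $d_{i+1}=\dR$. Writing $N_w\subseteq\mathcal P$ for the basic clopen set of paths whose first $n$ directions spell $w$, and noting that the $v_i$ are pairwise distinct, we obtain
\[
f^{-1}(N_w)=\{\,X\in\trees{\{0,1\}}\mid X(v_i)=b_i \text{ for } i=0,\dots,n-1\,\},
\]
which is itself a basic clopen subset of $\trees{\{0,1\}}$. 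Since the sets $N_w$ form a basis of $\mathcal P$, this shows $f$ is continuous. Surjectivity is immediate from the same description: given directions $d_1d_2\cdots$, any tree $X$ with $X(v_i)=b_i$ for every $i$ (and arbitrary values elsewhere) satisfies $f(X)=d_1d_2\cdots$.

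For measure preservation, observe first that $f$, being continuous, is Borel measurable, so $B\mapsto\mu_t\big(f^{-1}(B)\big)$ is a well-defined Borel probability measure on $\mathcal P\cong\{\dL,\dR\}^\w$. Two Borel probability measures on $\{\dL,\dR\}^\w$ that agree on all basic clopen cylinders coincide, so it suffices to check that this push-forward agrees with $\mu_w$ on each $N_w$. By the displayed identity and the fact that $\mu_t$ labels distinct vertices independently and uniformly over $\{0,1\}$, we get $\mu_t\big(f^{-1}(N_w)\big)=(1/2)^{|w|}$; and $\mu_w(N_w)=(1/2)^{|w|}$ as well. Hence $\mu_t\big(f^{-1}(B)\big)=\mu_w(B)$ for every Borel $B\subseteq\mathcal P$, completing the proof.

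All the computations above are plain bookkeeping with cylinders; the one step that needs genuine care is the first --- reading off from Definition~\ref{red_to_pi} that $f$ means ``follow the bits of $X$ down the tree'', and in particular recognising that the conjunct ``$Y$ is a path'' is precisely what converts the biconditional about $y\dL$ into a deterministic choice of successor, so that both the existence and the uniqueness of $Y$ hold.
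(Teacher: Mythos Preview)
Your proof is correct and follows essentially the same approach as the paper: unravel the defining formula to see that $Y$ turns left at $y$ iff $y\in X$, deduce well-definedness, continuity, and surjectivity from this description, and verify measure preservation on basic clopen cylinders. The only cosmetic difference is that the paper checks $\mu_t(f^{-1}(U_v))=\mu_w(U_v)$ by induction on $|v|$, whereas you compute the preimage explicitly as a cylinder over $n$ distinct vertices and invoke independence directly; the content is the same.
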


\begin{proof}
The mapping $f$ is well defined in the sense that $Y$ is fully determined by the $\{0,1\}$-labeled tree $X$. Indeed,  from the condition 
\[\forall  y\in Y.\, \exists z.\, ( \SuccL(y,z) \wedge 
(z\in Y \Leftrightarrow y\in X)) \] in Definition~\ref{red_to_pi} it follows that for every vertex $y\in Y$, the set $X$ determines if the unique successor of $y$ in $Y$  is $y\dL$ or $y\dR$ (since $Y$ is a path, it contains either $y\dL$ or $y\dR$ and not both) depending on whether $y\!\in\!X$ or $y\!\not\in\! X$, respectively.  
Since a finite prefix of $Y$ is determined by a finite subset of $X$, the mapping $f$ is continuous. Furthermore, the function is surjective, because for every path $Y\!\in\!{\mathcal P}$ we can easily find some $X$ such that $f(X) = Y$. 

We now show that $f$ is measure preserving, i.e., that $\mu_t(f^{-1})(B) = \mu_w(B)$. Since Lebesgue measures are regular, it is sufficient to prove this property for basic clopen sets $B$.

Basic clopen sets are of the form $U_v\subseteq {\mathcal P}$, for some $v\in \{\dL, \dR\}^{*}$, consisting of all infinite paths extending the finite prefix $v$. We now show that $\mu_t(f^{-1}(U_v))= \mu_w(U_v)$ by induction of the length $| v| =n$ of the prefix $v$.

If $n=0$ then $v=\epsilon$ and $U_v=\mathcal{P}$ and therefore $f^{-1}(U_v)=\trees{\{0,1\}}$. Hence we have $\mu_t(f^{-1}(U_v))= \mu_w(U_v)=1$.

If $n>0$ assume that $v=w\dL$ (the case $v=w\dR$ is similar) and that the inductive hypothesis holds on $U_w$. By unfolding the definition of $f$ we get that the preimage $f^{-1}(U_v)$ is the clopen set
$$
f^{-1}(U_v) = f^{-1}(U_w) \cap U_{v=0}
$$
where $U_{v=0}= \{t \mid t(v)= 0 \}$. By definition of Lebesgue measure $\mu_t$ it holds that $\mu_t(U_{v=0})=\frac{1}{2}$. Furthermore, since $f^{-1}(U_w)$ and $U_{v=0}$ are independent events, it holds that
$$
\mu_{t}(f^{-1}(U_w) \cap U_{v=0}) = \mu_t(f^{-1}(U_w)) \cdot \frac{1}{2}
$$
By inductive hypothesis on $w$ we have that $\mu_t(f^{-1}(U_w)) = \mu_w(U_w)$. Lastly, since $v=w\dL$, we have $\mu_w(U_v) = \mu_w(U_w)\cdot \frac{1}{2}$. Therefore the desired equality
$$
\mu_t( f^{-1}(U_v)) = \mu_w(U_v)
$$
holds.
\end{proof}

We are now ready to prove the following theorem.
\begin{thm}
\label{encodabilty_of_quantifier}
For every $\StwoS+\qNpi$ formula $\phi(Z_1,\dots, Z_n)$ there exists a $\StwoS+\qN$ formula $\phi^\prime(Z_1,\dots, Z_n)$ such that $\phi$ and $\phi^\prime$ denote the same set.
\end{thm}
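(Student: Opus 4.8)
The plan is to eliminate the occurrences of $\qNpi$ one at a time, by induction on the structure of $\phi$; the Boolean connectives, the first-order quantifiers and the ordinary second-order quantifiers are simply passed through to the immediate subformulas, so the whole argument rests on a single case, namely a formula $\qNpi X.\,\psi(X,\vec Z)$ where $\psi$ has already been rewritten as a $\StwoS+\qN$ formula $\psi'$. For this case I would use the $\StwoS$-definable function $f$ of Definition~\ref{red_to_pi} and put
\[
\phi'(\vec Z)\ \eqdef\ \qN X.\,\exists Y.\,\big(f(X,Y)\wedge\psi'(Y,\vec Z)\big),
\]
which, since $f$ is a function (Lemma~\ref{lemma_property_leadsto}), is equivalent to $\qN X.\,\forall Y.\,(f(X,Y)\Rightarrow\psi'(Y,\vec Z))$; either form is a legitimate $\StwoS+\qN$ formula over the same free variables.

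To justify the step, fix a valuation $\vec Z=\vec t$ and set $B=\{p\in\mathcal P\mid\psi(p,\vec t)\}$. Because $f\colon\trees{\{0,1\}}\to\mathcal P$ is total and always takes values in $\mathcal P$, the set of trees $X$ over which the quantifier $\qN X$ in $\phi'$ ranges is exactly $f^{-1}(B)$; hence $\phi'$ holds at $\vec t$ iff $\mu_t(f^{-1}(B))=1$, whereas $\phi$ holds at $\vec t$ iff $\mu_w(B)=1$. The desired equivalence therefore reduces to the measure-preservation identity $\mu_t(f^{-1}(B))=\mu_w(B)$, which for Borel $B$ is precisely Lemma~\ref{lemma_property_leadsto}(2), and compositionality of the semantics lets this per-subformula equivalence propagate up the induction.

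The hard part will be that $B$ is only a section of a $\StwoS+\qN$-definable set and so need not be Borel, nor even Lebesgue measurable in ZFC. I would handle this by reading ``measure $1$'' in the sense of Remark~\ref{remark_measurability}, i.e.\ ``contains a Borel set of measure $1$'', and then checking that $B$ contains a $\mu_w$-measure-$1$ Borel set iff $f^{-1}(B)$ contains a $\mu_t$-measure-$1$ Borel set. One direction is immediate: pull a Borel measure-$1$ subset of $B$ back along the continuous map $f$ and apply Lemma~\ref{lemma_property_leadsto}(2). For the converse, push a Borel measure-$1$ subset $C\subseteq f^{-1}(B)$ forward: $f(C)\subseteq B$ is analytic, hence universally measurable and sandwiched between Borel sets of equal $\mu_w$-measure, so Lemma~\ref{lemma_property_leadsto}(2) extends from Borel to analytic sets and, together with $C\subseteq f^{-1}(f(C))$, yields $\mu_w(f(C))\ge\mu_t(C)=1$; inner regularity of $\mu_w$ then produces a Borel measure-$1$ set inside $f(C)\subseteq B$. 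This closes the only non-trivial case of the induction and hence proves the theorem.
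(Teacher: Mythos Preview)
Your argument follows the paper's approach exactly: induction on the structure of $\phi$, with the only nontrivial case being $\qNpi X.\,\psi$, which you translate via the $\StwoS$-definable measure-preserving map $f$ of Definition~\ref{red_to_pi} to $\qN X.\,\exists Y.\,(f(X,Y)\wedge\psi'(Y,\vec Z))$, and then appeal to Lemma~\ref{lemma_property_leadsto}. This is precisely what the paper does.

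Where you go beyond the paper is in your final paragraph. The paper's proof simply writes ``by Lemma~\ref{lemma_property_leadsto}'' to pass from $\mu_w(A)=1$ to $\mu_t(f^{-1}(A))=1$, but that lemma is stated only for Borel $A$, and---as you correctly observe---the section $A=\{p\in\mathcal P\mid\psi(p,\vec t)\}$ is a section of a $\StwoS+\qN$-definable set and need not be Borel (indeed, by Corollary~\ref{undec:s1s:measure}, such sets can climb arbitrarily high in the projective hierarchy). Your two-direction argument, pulling back Borel witnesses in one direction and pushing forward via analytic images and universal measurability in the other, is correct and actually closes a gap the paper leaves open. So your proof is not merely equivalent to the paper's; it is more careful on a point the paper glosses over.
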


\begin{proof}
The proof goes by induction on the complexity of $\psi$ with the interesting case being $\phi(Z_1,\dots, Z_n)=\qNpi Y.\, \psi(Y,Z_1,\dots, Z_n)$. By induction hypothesis on $\psi$, there exists a  $\StwoS+\qN$ formula $\psi^\prime$ defining the same set as $\psi$. Then the $\StwoS+\qN$ formula $\phi^\prime$ corresponding to $\phi$ is defined as follows:
$$\phi^\prime(Z_1,\dots, Z_n) = \qN X.\, \big( \exists Y.\, \big( f(X, Y) \wedge \psi^\prime(Y,Z_1,\dots, Z_n) \big)\big)$$ 
We now show that $\phi$ and $\phi^\prime$ indeed define the same set. The following are equivalent:
\begin{enumerate}
\item The tuple $(t_1,\dots, t_n)$ satisfies the formula $\qNpi Y. \psi(Y,Z_1,\dots, Z_n)$,
\item (by the definition of $\qNpi X$) The set $A= \big\{ t \in \mathcal{P} \mid \psi(t,t_1,\dots, t_n) \big\}$ is such that $\mu_{w}(A)=1$,
\item (by Lemma~\ref{lemma_property_leadsto}) The set $B\subseteq \trees{\{0,1\}}$, defined as $B=f^{-1}(A)$, i.e., as  $$B = \big\{X\in \trees{\{0,1\}} \mid \exists Y.\,\big( f(X,Y) \wedge \psi(Y,t_1,\dots, t_n)\big)\big\}$$
is such that $\mu_{\trees{\{0,1\}}}(B)=1$.
\item (by the definition of $\qN$ and assumption $\psi=\psi^\prime$) the tuple $(t_1,\dots, t_n)$ satisfies 
    \begin{equation}
    \tag*{\qEd}
    \qN X.\, \big( \exists Y.\, \big( f(X,Y)  \wedge \phi^\prime(Y, \vv{Z}) \big)\big).
    \end{equation}
\end{enumerate}
\def\popQED{}
\end{proof}

The result of the previous theorem can be simply stated as $\StwoS+\qNpi \subseteq \StwoS+\qN$. 
The next result states that $\StwoS+\qNpi$ is strictly more expressive that $\StwoS$.

\begin{thm}\label{proposition:1:msopi}
The strict inequality $\StwoS \subsetneq \StwoS+\qNpi$ holds.
\end{thm}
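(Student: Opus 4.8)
The plan is to produce a single language that is definable in $\StwoS+\qNpi$ but is not regular; since by the Rabin/automata characterisation a language is $\StwoS$-definable iff it is regular, this already yields $\StwoS\subsetneq\StwoS+\qNpi$. As a candidate I would take, over the alphabet $\{a,b\}$,
\[
L_\mu\eqdef\Big\{\,t\in\trees{\{a,b\}}\ \Big|\ \qNpi\pi.\,\big(\text{$t(v)=a$ for infinitely many }v\in\pi\big)\,\Big\},
\]
i.e.~$t\in L_\mu$ iff for $\mu_w$-almost every branch $\pi$ of the full binary tree the letter $a$ occurs infinitely often along $\pi$. Writing $X^t_a\eqdef\{v\mid t(v)=a\}$, the inner condition is exactly $\pi\in[X^t_a]$ in the $\Gdelta$-notation used in the proof of Theorem~\ref{qmpi_is_definable}; it is an $\omega$-regular property of the branch $\pi$, so $L_\mu$ is plainly defined by a $\StwoS+\qNpi$ formula. (More generally, every language recognised by a \emph{qualitative} tree automaton in the sense of~\cite{CHS2014} --- accepting $t$ when some run labelling of $t$ has the property that $\mu_w$-almost all of its branches are accepting --- is $\StwoS+\qNpi$-definable, since ``there is a run labelling'' is an ordinary monadic second-order quantifier over the binary tree, ``this branch is accepting'' is $\omega$-regular hence $\StwoS$-definable, and ``$\mu_w$-almost every branch is accepting'' is precisely $\qNpi$ applied to it; $L_\mu$ is itself such a qualitative-automaton language.)

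To show that $L_\mu$ is not regular I would use \emph{comb trees}. Given a sequence $\bar c=(c_1,c_2,\dots)$ of positive integers, set $N_0=0$ and $N_k=c_1+\dots+c_k$, and let $t_{\bar c}\in\trees{\{a,b\}}$ carry the letter $a$ exactly at those vertices $w$ with $|w|=N_k$ for some $k$ and whose last $c_k$ letters are all $\dL$ (and the letter $b$ everywhere else). For a $\mu_w$-random branch $\pi$ the events $E_k=\{\,\pi\text{ passes through an }a\text{-vertex at depth }N_k\,\}=\{\,\pi\text{ chooses }\dL\text{ at each of the positions }N_{k-1},\dots,N_k-1\,\}$ are mutually independent and $\mu_w(E_k)=2^{-c_k}$; hence by the Borel--Cantelli lemmas
\[
\mu_w\big([X^{t_{\bar c}}_a]\big)=1\ \text{ if }\ \textstyle\sum_{k}2^{-c_k}=\infty,\qquad\text{and}\qquad \mu_w\big([X^{t_{\bar c}}_a]\big)=0\ \text{ if }\ \textstyle\sum_{k}2^{-c_k}<\infty.
\]
This is literally the computation behind~\eqref{eq:value-of-S} in the proof of Theorem~\ref{thm:msou}. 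Consequently $t_{\bar c}\in L_\mu$ if and only if $\sum_{k}2^{-c_k}=\infty$.

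Now suppose, for contradiction, that $L_\mu$ were $\StwoS$-definable, hence recognised by a nondeterministic parity tree automaton $\Aa$ with $n$ states. Choosing $\bar c$ with $\sum_k 2^{-c_k}=\infty$ and with every gap $c_k$ much larger than $n$, fix an accepting run of $\Aa$ on $t_{\bar c}$; inside one tooth of the comb --- the depth-$c_k$ band between depths $N_{k-1}$ and $N_k$ --- the all-$\dL$ branch revisits some state of $\Aa$, so one may contract the corresponding loop and re-attach the skipped subtrees to a compatible continuation of the run, obtaining an accepting run on $t_{\bar c'}$ for a sequence $\bar c'$ in which that gap has been shortened. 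Iterating this at smaller and smaller teeth produces a comb tree that $\Aa$ still accepts but whose gap sequence is summable, contradicting the equivalence of the previous paragraph. (In essence ``$\sum_k 2^{-c_k}=\infty$'' is a boundedness condition of exactly the flavour that makes the unbounding quantifier escape $\SoneS$, cf.~\cite{bojanczyk2004,bpt2016}; alternatively, the failure of regularity of $L_\mu$ can be quoted directly from~\cite{CHS2014}, where qualitative measure-$1$ tree automata are shown to be strictly more expressive than Rabin automata.) Either way $L_\mu\in(\StwoS+\qNpi)\setminus\StwoS$, which proves the claim.

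The main obstacle is exactly this non-regularity step, for two reasons. First, acceptance by a tree automaton is ``there exists an accepting run'', so the pumping must be carried out \emph{within} one fixed accepting run, with the surrounding subtrees re-routed coherently; getting this bookkeeping right is the technical heart of the argument. Second, the comb trees $t_{\bar c}$ must be fed to the hypothetical automaton by hand rather than isolated by a formula: the statement ``the $a$-vertices of $t_{\bar c}$ belonging to the $k$-th tooth all sit at one common depth'' is an equi-level condition, and the equi-level relation is not $\StwoS$-definable on the full binary tree (it would allow interpreting the undecidable \MSO\ theory of the grid $(\mathbb N\times\mathbb N)$); so $\{t_{\bar c}\mid\bar c\}$ is not a regular set of trees, and the non-regularity of $L_\mu$ cannot be obtained from a purely $\omega$-word statement via a $\StwoS$-definable sub-family.
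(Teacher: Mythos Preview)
Your overall plan --- exhibit a $\StwoS+\qNpi$-definable language that is not regular --- is exactly the paper's, and your language $L_\mu$ is essentially the complement of the paper's $\U^{=1}$. The definability and Borel--Cantelli parts are fine. The problem is the pumping step. First, the direction is inverted: shortening a gap $c_k$ makes $2^{-c_k}$ \emph{larger}, so if $\sum_k 2^{-c_k}=\infty$ to begin with, any sequence $\bar c'$ obtained by shortening gaps still has $\sum_k 2^{-c_{k}'}=\infty$, and $t_{\bar c'}$ stays in $L_\mu$ --- no contradiction. Second, even if you try to pump the other way (iterate a loop to lengthen a tooth), contracting or iterating a segment of a tree run does not return a comb tree: when you replace the subtree at $u$ by the subtree at $u'$ (or vice versa), the $a$-levels carried by the side subtrees hanging off between $u$ and $u'$ fall out of sync with the $a$-levels in the rest of the tree, so the result is no longer of the form $t_{\bar c'}$ and your Borel--Cantelli characterisation no longer applies to it. The honest acknowledgement in your last paragraph that this bookkeeping is ``the technical heart'' is right, but as written the argument does not go through.

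The paper avoids pumping altogether. It uses the classical fact that a non-empty regular tree language always contains a \emph{regular tree} (one with finitely many distinct subtrees). Intersecting $\U^{=1}$ with the regular language $L=\{\text{every vertex has a $1$-labelled descendant}\}$, it shows (i) $\U^{=1}\cap L\neq\emptyset$ by a Borel--Cantelli construction very close to yours, and (ii) no regular tree lies in $\U^{=1}\cap L$, because a regular tree in $L$ is the unfolding of a finite Markov chain in which every state reaches a $1$-state, so almost every path hits $1$ infinitely often. The same two-line Markov-chain argument works for your $L_\mu$ via its complement: $\overline{L_\mu}\cap\{\text{every vertex has an $a$-descendant}\}$ is non-empty (take $t_{\bar c}$ with summable $\sum 2^{-c_k}$) but contains no regular tree. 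Your fallback of citing \cite{CHS2014} is also legitimate --- indeed the paper's argument is presented as an adaptation of Theorem~21 there --- but you should drop the pumping sketch, which is incorrect as stated.
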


Before proving this result, as a preliminary step, we introduce the following $\StwoS+\qNpi$ definable language of trees.
\begin{defi}
Let $\U^{=1}\subseteq \trees{\{0,1\}} $ be the set of
$\{0,1\}$-labeled trees $t$ satisfying the formula  $\U^{=1}(X)$ defined by:
$$\U^{=1}(X) = \qNpi Y. ( Y \cap X \textnormal{ is finite}).$$
\end{defi}
In other words $\U^{=1}$ is the collection of subsets $X$ of the full binary tree such that the set of paths having only finitely many vertices in $X$ has Lebesgue measure  $1$ in $\mathcal{P}$.

The following proposition states that every  $\StwoS+\forall^{=1}_\pi$ formula is equivalent to a \StwoS formula which, additionally, can use the additional predicate  over trees $U^{=1}$.

\begin{prop}\label{mnpiuniversalset1}
The equality $\StwoS+\qNpi = \StwoS+\U^{=1}$ holds. That is, every $\StwoS+\qNpi$ formula $\phi$ is effectively equivalent to a $\StwoS+\U^{=1}$ formula.
\end{prop}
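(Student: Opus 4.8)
The plan is to establish the two inclusions of $\StwoS+\qNpi=\StwoS+\U^{=1}$ separately. The inclusion $\StwoS+\U^{=1}\subseteq\StwoS+\qNpi$ is immediate and I would dispose of it in one line: by the definition of $\U^{=1}$, the atomic formula $\U^{=1}(X)$ \emph{is} the $\StwoS+\qNpi$ formula $\qNpi Y.\,(Y\cap X\textnormal{ is finite})$, so every occurrence of the predicate $\U^{=1}$ in a $\StwoS+\U^{=1}$ formula may be replaced by this formula.

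For the reverse inclusion I would argue by structural induction on the $\StwoS+\qNpi$ formula; the only case requiring work is a formula of the form $\qNpi Y.\,\psi(Y,\vec{Z})$, where by the induction hypothesis $\psi$ is already a $\StwoS+\U^{=1}$ formula (the Boolean and ordinary quantifier cases are handled directly since $\StwoS+\U^{=1}$ is closed under the operations of $\StwoS$). I claim that $\qNpi Y.\,\psi(Y,\vec{Z})$ is equivalent to the $\StwoS+\U^{=1}$ formula
\[
\exists X.\ \U^{=1}(X)\ \wedge\ \forall Y\ \big[(Y\textnormal{ is a path}\wedge\neg\psi(Y,\vec{Z}))\Rightarrow(Y\cap X\textnormal{ is infinite})\big].
\]
This is the measure-theoretic counterpart of the argument behind Theorem~\ref{qmpi_is_definable}: a set of paths has measure $1$ exactly when its complement in $\mathcal{P}$ is contained in a $\mu_w$-null $\Gdelta$ set; by the characterisation recalled in the proof of Theorem~\ref{qmpi_is_definable} (after Rabin~\cite{Rabin69}) every $\Gdelta$ subset of $\mathcal{P}\cong\{\dL,\dR\}^\w$ has the form $[X]$; and since $[X]$ is Borel, $\U^{=1}(X)$ holds precisely when $\mu_w([X])=0$.

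The verification of the equivalence is then short. For ``$\Leftarrow$'': if such an $X$ exists, $\U^{=1}(X)$ makes $[X]$ a $\mu_w$-null set, and the second conjunct says $\{p\in\mathcal{P}\mid\neg\psi(p,\vec{t})\}\subseteq[X]$, so this set is null and hence $\{p\in\mathcal{P}\mid\psi(p,\vec{t})\}$ has measure $1$. For ``$\Rightarrow$'': if $\{p\in\mathcal{P}\mid\psi(p,\vec{t})\}$ has measure $1$ then its complement within $\mathcal{P}$ is $\mu_w$-null, hence — covering it by open sets of measure $<1/n$ and intersecting — contained in a null $\Gdelta$ set, which by the Rabin characterisation is $[X]$ for a suitable $X$; this $X$ then witnesses the formula. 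I would stress that no intermediate set needs to be shown measurable: throughout, ``$A$ has measure $1$'' is read as ``$\mathcal{P}\setminus A$ is contained in a Borel $\mu_w$-null set'' — the convention already adopted for the quantifier $\qN$ in Section~\ref{sec:tech_back} (cf.\ Remark~\ref{remark_measurability}) — and ``a subset of a null set is null'' requires no measurability hypothesis. Effectiveness of the translation is clear since it is given by recursion on the formula.

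The step I expect to be the main obstacle is not the core equivalence but making the semantics watertight in the nested case: one must check that $\qNpi$ applied to a $\StwoS+\U^{=1}$ formula is well defined, and that the ``complement-is-null'' reading of ``measure $1$'' is exactly the one under which the quantifier $\qNpi$ and the predicate $\U^{=1}$ interact correctly when $\qNpi$ is applied to formulas already containing $\U^{=1}$. Once that convention is fixed, the only external ingredients are the identification of $\Gdelta$ sets of paths with the sets $[X]$ and the regularity of $\mu_w$ on $\mathcal{P}$, both of which are already available in the paper.
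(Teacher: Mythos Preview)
Your proposal is correct and follows essentially the same route as the paper's proof: induction on the formula, regularity of $\mu_w$ to pass to a simple Borel witness, Rabin's representation of such sets of paths by a set of nodes, and $\U^{=1}$ to express the measure condition on that witness. The only cosmetic difference is a dualization---the paper finds an $F_\sigma$ set $[Y]=\{\pi:\pi\cap Y\text{ is finite}\}$ of measure~$1$ \emph{inside} $\{\pi:\psi(\pi,\vec t)\}$, whereas you cover the complement by a null $G_\delta$ set $[X]=\{\pi:\pi\cap X\text{ is infinite}\}$; with $X=Y$ these two witnesses are literally complements of one another, and $\U^{=1}(X)$ simultaneously says that the former has measure~$1$ and the latter is null, so the two translations coincide.
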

\begin{proof}
The proof goes by induction on the structure of $\phi(X_1,\dots, X_n)$. The only interesting case is with $\varphi$ of the form $\qNpi Y .\psi(Y,X_1,\dots, X_n)$.

We can assume, from the inductive hypothesis on $\psi$, that  $\psi(Y,X_1,\dots, X_n)$ is expressible  in  $\StwoS+ \U^{=1}$.

By regularity of measures on Polish spaces a set of paths $B\subseteq \{\dL,\dR\}^\w$ has measure $1$ if and only if there is an $\Fsigma$ set $F\subseteq B$ such that $F$ has measure $1$. A standard argument (cf. proof of Theorem~\ref{qmpi_is_definable}, see also~\cite{Rabin69}) shows, that $F\subseteq\{\dL,\dR\}^\w$ is a $F_\sigma$ set if and only if $F$ is of the form $[Y]$ for a set $Y\subseteq \{\dL,\dR\}^\ast$ with

\[[Y]\eqdef\{\pi\in\{\dL,\dR\}^\w\mid \text{there are only finitely many  $v\in \pi$ such that $v\in Y$}\}.\]

Therefore, $\forall^{=1}_{\pi}\pi.\, \psi(\vec{X})$ holds if and only if there exists a set of vertuces $Y$ such that:
\begin{itemize}
\item  the set of paths $[Y]$ has measure $1$, and
\item  $[Y] \subseteq \lang( \psi(\vec{X})$.
\end{itemize}
Notice that the first property is expressible as $\U^{=1}(Y)$ and the second is  easily \StwoS-definable.
\end{proof}

We now show that the language $\U^{=1}$ is not definable in $\StwoS$. The proof is obtained by adapting the argument of Theorem 21 in~\cite{CHS2014}.

\begin{prop}\label{mnpiuniversalset2}
The language $\U^{=1}$ is not $\StwoS$ definable.
\end{prop}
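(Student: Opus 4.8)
The plan is to adapt the non-regularity argument of~\cite[Theorem~21]{CHS2014}. Suppose towards a contradiction that $\U^{=1}$ is $\StwoS$-definable. Then, by Rabin's theorem it is a regular tree language, hence recognised by some nondeterministic parity tree automaton $\mathcal{C}$ with state set $Q$ and priority assignment $\Omega$. I will feed $\mathcal{C}$ a family of "block-structured'' trees whose membership in $\U^{=1}$ is governed by the convergence of a geometric-type series, and then use an accepting run together with the usual pumping argument for $\w$-tree-automata to turn a convergent instance into a divergent one without $\mathcal{C}$ being able to tell.

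For a sequence $\bar m=(m_1,m_2,\dots)$ of positive integers, put $\ell_0=0$ and $\ell_j=\ell_{j-1}+m_j$, and let $t_{\bar m}\in\trees{\{0,1\}}$ be the tree whose $1$-labelled vertices are exactly those of the form $u\dL^{m_j}$ with $|u|=\ell_{j-1}$ (every other vertex gets label $0$); so the depths are split into consecutive "blocks'', and inside the $j$-th block, below each vertex $u$ at its top, the unique marked vertex is the all-left descendant $u\dL^{m_j}$. For a uniformly random path $\pi\in\{\dL,\dR\}^\w$, the event "$\pi$ visits a $1$-vertex in the $j$-th block'' depends only on the directions of $\pi$ at depths $\ell_{j-1}{+}1,\dots,\ell_j$ and has probability $2^{-m_j}$; since these coordinate blocks are disjoint, the events are independent, so by the two Borel--Cantelli lemmas $t_{\bar m}\in\U^{=1}$ if and only if $\sum_j 2^{-m_j}<\infty$. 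I will also use the self-similarity: the subtree of $t_{\bar m}$ rooted at any vertex $u$ with $|u|=\ell_{j-1}$ is exactly $t_{(m_j,m_{j+1},\dots)}$; in particular every block-start vertex $v_k:=\dL^{\ell_k}$ on the leftmost branch carries the label $1$.

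Now take $\bar m$ with $m_j=j$, so $\sum_j 2^{-m_j}<\infty$ and hence $\mathcal{C}$ has an accepting run $\rho$ on $t_{\bar m}$. The leftmost branch passes through all the vertices $v_0,v_1,v_2,\dots$, and $\rho$ is accepting on it; a standard pigeonhole-plus-Ramsey argument then yields two block-start vertices $v_k\prec v_{k'}$ with $\rho(v_k)=\rho(v_{k'})$ such that the largest priority occurring on the leftmost branch between $v_k$ and $v_{k'}$ is even. Let $C$ be the one-hole context of $t_{\bar m}$ reaching from $v_k$ (as its root) down to $v_{k'}$ (its hole), and let $D$ be the context from the root down to $v_k$. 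The standard pumping argument for parity tree automata (cf.~\cite{Rabin69,thomas96}) shows that $\mathcal{C}$ has an accepting run on $D\big[C^\w\big]$ — the run of $\rho$ on $D$ followed by infinitely many copies of the run of $\rho$ on $C$ — which is legitimate because $\rho(v_k)=\rho(v_{k'})$, the labels at the seams agree ($v_k$ and $v_{k'}$ are both labelled $1$), the distinguished (spine) branch has even largest priority occurring infinitely often, and every branch leaving the spine has a suffix that coincides with a branch of the accepting run $\rho$. Hence $D\big[C^\w\big]\in\lang(\mathcal{C})=\U^{=1}$.

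On the other hand, by the self-similarity $C^\w$ equals $t_{\bar m^{\ast}}$ for the purely periodic sequence $\bar m^{\ast}=(m_{k+1},\dots,m_{k'},m_{k+1},\dots,m_{k'},\dots)$, for which $\sum_j 2^{-m^{\ast}_j}=\infty$; so almost every path of $C^\w$ meets infinitely many $1$-vertices. But $D\big[C^\w\big]$ is obtained from $t_{\bar m}$ just by replacing the subtree below the single depth-$\ell_k$ vertex $v_k$ by $C^\w$, while the subtrees below all the other $2^{\ell_k}{-}1$ vertices at depth $\ell_k$ are unchanged copies of $t_{(m_{k+1},m_{k+2},\dots)}\in\U^{=1}$; since a random path reaches $v_k$ with probability $2^{-\ell_k}$ and meets only finitely many $1$-vertices above depth $\ell_k$, the measure of paths meeting only finitely many $1$-vertices in $D\big[C^\w\big]$ is $(1-2^{-\ell_k})\cdot 1 + 2^{-\ell_k}\cdot 0 = 1-2^{-\ell_k}<1$. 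Thus $D\big[C^\w\big]\notin\U^{=1}$, a contradiction. The part I expect to require the most care is the middle step: verifying that $C^\w$ really inherits the periodic block structure claimed, and that the glued run on $D\big[C^\w\big]$ is a legitimate accepting run (label-consistency at the seams, the parity condition on the spine branch, acceptance of all off-spine branches); the probabilistic estimates are routine once the trees $t_{\bar m}$ are arranged so that the per-block hitting events are independent.
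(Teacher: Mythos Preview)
Your pumping argument has a genuine gap precisely at the step you flagged as delicate: the claim that $C^\omega = t_{\bar m^\ast}$ is false, and in fact $D[C^\omega]\in\U^{=1}$, so no contradiction arises.

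The one-hole context $C$ (rooted at $v_k$, hole at $v_{k'}$) is not a finite object: at every right-turn off the spine $v_k, v_k\dL, \ldots, v_{k'}$, the entire infinite subtree of $t_{\bar m}$ hanging there belongs to $C$. These side subtrees are subtrees of $t_{(m_{k+1}, m_{k+2}, \ldots)}$ and hence lie in $\U^{=1}$ themselves (if $t\in\U^{=1}$ then every subtree of $t$ is in $\U^{=1}$, by conditioning on the path passing through the given vertex). Therefore in $C^\omega$ a uniformly random path almost surely leaves the leftmost spine after finitely many steps and lands in one of these good side subtrees, where it meets only finitely many $1$'s; so $C^\omega\in\U^{=1}$, and likewise $D[C^\omega]\in\U^{=1}$. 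Pumping a parity tree automaton along a single branch only periodises the spine; it cannot force the off-spine subtrees to adopt the periodic block pattern $\bar m^\ast$ you want.

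The paper's proof circumvents this by intersecting $\U^{=1}$ with the \StwoS-definable language $L$ of trees in which every vertex has a $1$-labelled descendant. A concrete witness $t\in L\cap\U^{=1}$ is built via a Borel--Cantelli argument similar to yours. The key step is then that $L\cap\U^{=1}$ contains no \emph{regular} tree: viewing a regular tree as a finite Markov chain with all edge probabilities $\tfrac12$, the condition $t\in L$ says every state reaches a $1$-labelled state, so every recurrent class contains such a state and a random walk visits $1$'s infinitely often almost surely, contradicting $t\in\U^{=1}$. Since every non-empty \StwoS-definable language contains a regular tree, $\U^{=1}\cap L$ (and hence $\U^{=1}$) cannot be \StwoS-definable. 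The essential point is that regularity constrains the tree \emph{globally}, whereas single-branch pumping does not.
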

\begin{proof}

A key property of $\StwoS$ definable sets $A\subseteq\trees{\{0,1\}}$ is that $A$ is not empty if and only if $A$ contains a regular tree, that is a tree having only infinitely many subtrees up-to isomorphism.

To prove that $\U^{=1}$ is not $\StwoS$ definable we will specify a $\StwoS$ definable language $L$ and show that $\U^{=1}\cap L$ is not empty and does not contain any regular tree. This of course implies that $\U^{=1}$ is not $\StwoS$ definable because $\StwoS$ definable sets are closed under Boolean operations.

We define $L\subseteq \trees{\{0,1\}}$ as the set of trees over the alphabet $\Sigma=\{0,1\}$ satisfying the $\StwoS$ formula $\phi(X)$ defined as:

$$\phi(X) =\forall x. \exists y . \big(x\leq y \wedge y\in X \big) \}$$

In other words, a set of vertices $t\in \trees{\{0,1\}}$ of the full binary tree satisfies $\phi(X)$ if from every vertex $v$ there exists a descendant vertex $w$ such that $w\in t$.

We now show that $\U^{=1}\cap L$ is not empty and does not contain a regular tree.

\begin{clm}
$\U^{=1}\cap L$ is not empty.
\end{clm}
\begin{proof}
We exhibit a concrete tree $t\!\in\trees{\{0,1\}}$ in $\U^{=1}\cap L$. To do this, fix any mapping $f:\mathbb{N}\to\mathbb{N}$ such that $f(0)\!=\!0$ and for all $n>0$ holds $f(n) > n + \sum^{n-1}_{i=0} f(i) $. 
We say that a vertex $v\!\in\!\{0,1\}^*$ of the full binary tree belongs to the $n$-th block if its depth $|v|$ is such that $f(n)\leq |v| < f(n+1)$. Each block can be seen as a forest of finite trees (see Figure~\ref{fig:l3_witness}) of depth $f(n+1)-f(n)$.
We now describe the tree $t$. For each $n$, all nodes of the $n$-th block are labeled by $0$ except the leftmost vertices of each (finite) tree in the block (seen as a forest), which are labeled by $1$. Figure~\ref{fig:l3_witness} illustrates this idea. Clearly $t$ is in $L$.


Let $E_{n}$ be the random event (on the space $\mathcal{P}$ of infinite branches of the full binary tree) of a path having the $f(n+1)$-th vertex labeled by $1$. Then, by construction of $t$, the (Lebesgue measure $\mu_w$) probability of $E_{n}$ is exactly $\frac{1}{2^{f(n+1) - f(n)}}$.

 
This implies that $\mu(E_0)+\mu(E_1)+\ldots \leq \sum_{n=0}^\infty \frac{1}{2^{f(n+1) - f(n)}}\leq \frac{1}{2} + \ldots + \frac{1}{2^n} \leq 1$. The Borel-Cantelli lemma implies that the probability of infinitely many events $E_n$ happening is $0$. Hence the probability of the set of paths having infinitely many $1$'s is $0$. Therefore $t\!\in\! \U^{=1}$ and thus $t\!\in\! L\cap\U^{=1}$.
\end{proof}


\def\prs{\tikz[scale=.65, every node/.style={scale=0.65}, baseline=1ex,shorten >=.1pt,node distance=1.8cm,on grid,semithick,auto,
every state/.style={fill=white,draw=black,circular drop shadow,inner sep=0mm,text=black},
accepting/.style ={fill=gray,text=white}]{
\node[state] (b) {$b$};
\begin{scope}[scale=.65, every node/.style={scale=0.65}, baseline=1ex,shorten >=.1pt,node distance=1.8cm, semithick,auto,
every state/.style={fill=white,draw=black,circular drop shadow,inner sep=0mm,text=black},
accepting/.style ={fill=gray,text=white}]
\node[state] (a1) [below left=1.7cm and 2.8cm  of b] {$1$};
\node[state] (b1) [below right=1.7cm and 2.8cm  of b] {$0$};
\node[state] (b2) [below left=1.7cm and 1.4cm  of b] {$0$};
\node[draw=none] (b3) [below right=1.7cm and 1.4cm  of b] {\ldots};
\node[draw=none] (b4) [below right=1.7cm and 0cm  of b] {\ldots};
\end{scope}
\draw [black,decorate,decoration=snake] (b) -- (a1);
\draw [black,decorate,decoration=snake] (b) -- (b1); 
\draw [black,decorate,decoration=snake] (b) -- (b2);
\draw [black,decorate,decoration=snake] (b) -- (b3);
\draw [black,decorate,decoration=snake] (b) -- (b4);
\sqsone; 
\sqstwo;
}
}

\def\sqsone{
\begin{scope}[scale=.55, every node/.style={scale=0.55}, baseline=1ex,shorten >=.1pt,node distance=1.8cm, semithick,auto,
every state/.style={fill=white,draw=black,circular drop shadow,inner sep=0mm,text=black},
accepting/.style ={fill=gray,text=white}]
\node[state] (a11) [below left=1.3cm and 2.2cm  of a1] {$1$};
\node[state] (b11) [below right=1.3cm and 2.2cm  of a1] {$0$};
\node[state] (b12) [below left=1.3cm and 1.1cm  of a1] {$0$};
\node[draw=none] (b13) [below right=1.3cm and 1.1cm  of a1] {\ldots};
\node[draw=none] (b14) [below right=1.3cm and 0cm  of a1] {\ldots};
\draw [black,decorate,decoration=snake] (a1) -- (a11);
\draw [black,decorate,decoration=snake] (a1) -- (b11); 
\draw [black,decorate,decoration=snake] (a1) -- (b12);
\draw [black,decorate,decoration=snake] (a1) -- (b13);
\draw [black,decorate,decoration=snake] (a1) -- (b14);
\end{scope}
}

\def\sqstwo{
\begin{scope}[scale=.55, every node/.style={scale=0.55}, baseline=1ex,shorten >=.1pt,node distance=1.8cm, semithick,auto,
every state/.style={fill=white,draw=black,circular drop shadow,inner sep=0mm,text=black},
accepting/.style ={fill=gray,text=white}]
\node[state] (a11) [below left=1.3cm and 2.2cm  of b1] {$1$};
\node[state] (b11) [below right=1.3cm and 2.2cm  of b1] {$0$};
\node[state] (b12) [below left=1.3cm and 1.1cm  of b1] {$0$};
\node[draw=none] (b13) [below right=1.3cm and 1.1cm  of b1] {\ldots};
\node[draw=none] (b14) [below right=1.3cm and 0cm  of b1] {\ldots};
\draw [black,decorate,decoration=snake] (b1) -- (a11);
\draw [black,decorate,decoration=snake] (b1) -- (b11); 
\draw [black,decorate,decoration=snake] (b1) -- (b12);
\draw [black,decorate,decoration=snake] (b1) -- (b13);
\draw [black,decorate,decoration=snake] (b1) -- (b14);
\end{scope}
}

\def\sqsthree{
\begin{scope}[scale=.35, every node/.style={scale=0.35}, baseline=1ex,shorten >=.1pt,node distance=1.8cm, semithick,auto,
every state/.style={fill=white,draw=black,circular drop shadow,inner sep=0mm,text=black},
accepting/.style ={fill=gray,text=white}]
\node[state] (q3) [below left=1.35cm and 0.4cm of q2r] {$q_1$};
\node[state] (s3) [below right=1.35cm and 0.4cm of q2r] {$q_2$};
\draw [black] (q2r) -- (q3);
\draw [black] (q2r) -- (s3); 
\end{scope}
}

\def\sqsfour{
\begin{scope}[scale=.35, every node/.style={scale=0.35}, baseline=1ex,shorten >=.1pt,node distance=1.8cm, semithick,auto,
every state/.style={fill=white,draw=black,circular drop shadow,inner sep=0mm,text=black},
accepting/.style ={fill=gray,text=white}]
\node[state] (q3) [below left=1.35cm and 0.4cm of p2] {$q_1$};
\node[state] (s3) [below right=1.35cm and 0.4cm of p2] {$q_3$};
\draw [black] (p2) -- (q3);
\draw [black] (p2) -- (s3); 
\end{scope}
}

\def\sqsfive{
\begin{scope}[scale=.35, every node/.style={scale=0.35}, baseline=1ex,shorten >=.1pt,node distance=1.8cm, semithick,auto,
every state/.style={fill=white,draw=black,circular drop shadow,inner sep=0mm,text=black},
accepting/.style ={fill=gray,text=white}]
\node[state] (q3) [below left=1.35cm and 0.4cm of q2] {$q_1$};
\node[state] (s3) [below right=1.35cm and 0.4cm of q2] {$q_1$};
\draw [black] (q2) -- (q3);
\draw [black] (q2) -- (s3); 
\end{scope}
}

\def\sqssix{
\begin{scope}[scale=.15, every node/.style={scale=0.15}, baseline=1ex,shorten >=.1pt,node distance=1.8cm, semithick,auto,
every state/.style={fill=white,draw=black,circular drop shadow,inner sep=0mm,text=black},
accepting/.style ={fill=gray,text=white}]
\node[state] (q5) [below left=1cm and 0.4cm of p3] {$p$};
\node[state] (s5) [below right=1cm and 0.4cm of p3] {$p$};
\draw [black] (p3) -- (q5);
\draw [black] (p3) -- (s5); 
\end{scope}
\node[draw=none] (s6) [below left=0.4cm and 0.35cm of s5,draw = none] {\ldots};
}

\def\sqsseven{
\begin{scope}[scale=.08, every node/.style={scale=0.08}, baseline=1ex,shorten >=.1pt,node distance=1.8cm, semithick,auto,
every state/.style={fill=white,draw=black,circular drop shadow,inner sep=0mm,text=black},
accepting/.style ={fill=gray,text=white}]
\node[state] (q6) [below left=0.7cm and 0.3cm of s5] {$q$};
\draw [black] (s5) -- (q6);
\end{scope}
\node[draw=none] (s6) [below right=0.7cm and 0.3cm of s5,draw = none] {\ldots};
\draw [black] (s5) -- (s6); 
}

\begin{figure}[H]
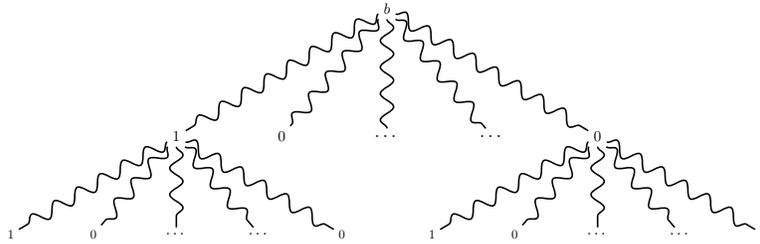

\centering
\prs

\caption{A prefix of a tree $t\in L\cap\U^{=1}$ up to the level $f(2)$.} 
\label{fig:l3_witness}
\end{figure}

\begin{clm}
$L\cap \U^{=1}$ does not contain any regular tree $t$. 
\end{clm}

\begin{proof}
Indeed, let $G$ be the finite graph whose vertices are labeled by $0$ or $1$ and where each vertex can reach exactly two vertices. Then $t$ represents a regular tree $t\in\trees{\{0,1\}}$. We can view $G$ as a finite Markov chain where all edges have probability $\frac{1}{2}$. From the assumption that $t\!\in\!L$, we know that every vertex in $G$ can reach a vertex labeled $1$. By elementary results of Markov chains, a random infinite path in $G$ will almost surely visit infinitely many times states labeled by $1$ and this is a contradiction with the hypothesis that $t\!\in\! \U^{=1}$.
\end{proof}

\noindent
The proofs of the above two Claims finish the proof of the Proposition.
\end{proof}

The results of Proposition~\ref{mnpiuniversalset1} and Proposition~\ref{mnpiuniversalset1} together prove the claim of Theorem~\ref{proposition:1:msopi}.

\paragraph*{\bf On Qualitative Automata of Carayol, Haddad, and Serre.} In a recent paper~\cite{CHS2014} Carayol, Haddad, and Serre have considered 
a probabilistic interpretation of standard (Rabin) nondeterministic tree automata. Below we briefly discuss this interpretation referring to~\cite{CHS2014} (see also~\cite[Chapter 8]{serre2015}) for more details. 

The classical interpretation from~\cite{Rabin69} of a nondeterministic tree automaton $\mathcal{A}$ over the alphabet $\Sigma$ is the set $\lang(A)\!\subseteq\! \trees{\Sigma}$ of trees $t\!\in \!\trees{\Sigma}$ such that there exists a \emph{run} $\rho$ of $t$ on $\mathcal{A}$ such that for \emph{all} paths $\pi$ in $\rho$, the path $\pi$ is accepting.
The probabilistic interpretation in~\cite{CHS2014}  associates to each nondeterministic tree automaton the language $\lang^{=1}(\mathcal{A})\!\subseteq\!\trees{\Sigma}$ of trees $t\!\in \!\trees{\Sigma}$ such that there exists a \emph{run} $\rho$ of $X$ on $\mathcal{A}$ such that for \emph{almost all} paths $\pi$ in $\rho$, the path $\pi$ is accepting, where ``almost all'' means having Lebesgue measure ($\mu_w$ relative to the space $\mathcal{P}$ of paths) equal to $1$ .

It is clear that for every automaton $\mathcal{A}$ the language $\lang^{=1}(\mathcal{A})$ can be defined in the logic $\StwoS+\qNpi$. Specifically, if $\Sigma=\{1,\dots, 2^n\}$, the  formula $\psi_{\mathcal{A}}(X_1,\dots, X_n)$ defined as
\[\psi_{A}(\vv{X}) =  \exists \vv{Y}. \big(\textnormal{``$\vv{Y}$ is a  run of $\vv{X}$ on $\mathcal{A}$''}\wedge\ \qNpi Z. (   \textnormal{``$Z$ is an accepting path of $\vv{Y}$''}) \big) \]
defines $\lang^{=1}(\mathcal{A})$, where the subformulas $\textnormal{``$\vv{Y}$ is a  run of $\vv{X}$ on $\mathcal{A}$''}$ and $\textnormal{``$\vv{Y}$ is a  run of $\vv{X}$ on $\mathcal{A}$''}$ are defined as expected (see, e.g.,~\cite{thomas96}).


On the other hand, Carayol, Haddad and Serre have shown in~\cite[Example 7]{CHS2014} that there are regular (i.e., $\StwoS$ definable) sets of trees which are not of the form $\lang^{=1}(\mathcal{A})$. Therefore there are $\StwoS+\qNpi$ languages not definable by automata with probabilistic interpretation.

As we have shown in Proposition~\ref{mnpiuniversalset1}, the (effective) equality $\StwoS+\qNpi=\StwoS+\U^{=1}$ holds. Interestingly, the (not regular) language $\U^{=1}$ is definable by tree automata with probabilistic interpretation of~\cite{CHS2014}.
Hence, $\StwoS+\qNpi$ can be understood as the minimal extension of $\StwoS$ which is sufficiently expressive to define all the languages definable by tree automata with probabilistic interpretation.

A crucial property of languages definable by tree automata with probabilistic interpretation is that if $\mathcal{L}^{=1}(\mathcal{A})\neq \emptyset$ then $\mathcal{L}^{=1}(\mathcal{A})$ contains a regular tree.
As we have shown in the proof of Theorem~\ref{mnpiuniversalset2} (claim 2), this useful property does not hold for $\StwoS+\qNpi$ definable languages.


\section{Open Problems}
\label{sec:problems}
\label{sec:open}

In this concluding section we present a list of problems left open from this work. This is by no means exhaustive and simply reflects the authors' view on what might be the most important, difficult and far-reaching questions encountered in the process of working on these topics.


\medskip
\noindent
\textbf{Problem 1.} Is the theory of $\StwoSM$ decidable?\\ 
The stronger statement $\StwoS=\StwoSM$ was formulated as a theorem in~\cite{MM2015} but, as discussed in the Introduction and Section~\ref{sec:s2s-category}), the proof contained a mistake. The proof is correct only for languages defined by game automata.

\medskip
\noindent
\textbf{Problem 2.} It is well known that $\StwoS$ definable sets always belong to the $\adelta{2}$ class of the projective hierarchy. Is this also true for $\StwoSM$ definable sets?

\medskip
\noindent
\textbf{Problem 3.} In terms of reverse mathematics as shown in~\cite{kolo_lics2016}, decidability of the theory $\StwoS$ is equivalent to determinacy of all Gale--Stewart games with the winning condition being a Boolean combination of $F_\sigma$ sets (see~\cite{kolo_lics2016} for a precise formulation). Is this characterization also valid for $\StwoSM$?






\medskip
\noindent
{\bf Problem 4.}  Is the theory  $\StwoS+\qNpi$ decidable?

\medskip
\noindent
{\bf Problem 5.}  Are  $\StwoS+\qNpi$ definable sets always contained in the $\adelta{2}$ class of the projective hierarchy. All the examples of $\StwoS+\qNpi$ definable sets which are not $\StwoS$ definable we have considered (e.g., the language $\U^{=1}$) are contained in the $\adelta{2}$ class.

\medskip
\noindent
{\bf Problem 6.}  Design an algorithm which for a given $\StwoS$ definable set computes its measure. 


\bibliographystyle{abbrv}
\bibliography{biblio}

\end{document}